\newcommand{\minus}{\smallsetminus}
\newcommand{\defi}{\mathrel{\mathop:}=} %
\newcommand{\A}{\mathcal{A}}
\newcommand{\Q}{\mathbb{Q}}
\newcommand{\Borel}{\mathcal{B}}
\newcommand{\Power}{\mathcal{P}}
\renewcommand{\emptyset}{\varnothing}
\newcommand{\Rel}{\mathcal{R}}
\newcommand{\Logic}{\mathcal{L}}
\newcommand{\sem}[1]{{\llbracket #1 \rrbracket}} %
\newcommand{\pos}[1]{{\langle #1 \rangle}} %
\newcommand{\lmp}[1]{\mathbb{#1}}
\newcommand{\sm}{\minus}
\newcommand{\sbq}{\subseteq}
\newcommand{\leb}{\mathfrak{m}}
\newcommand{\Inter}{I}
\DeclareMathSymbol{\rest}{\mathord}{AMSa}{"16}
\newcommand{\Tfont}[1]{\mathsf{#1}}
\newcommand{\posbow}[1]{{\langle #1 \rangle_{\bowtie \, q}}}
\newcommand{\s}{\mathsf{s}}
\newcommand{\sfs}{\mathsf{s}}
\newcommand{\e}{\mathsf{e}}
\newcommand{\h}{\mathsf{h}}
\newcommand{\Fam}{\mathcal{F}}
\newcommand{\D}{\mathcal{D}}
\newcommand{\Cat}{\mathbf{C}}
\newcommand{\catsp}{\wedge}
\newcommand{\coalg}{\Delta}
\newcommand{\liftrel}[1]{\widehat{#1}}
\newcommand{\Set}{\mathbf{Set}}
\newcommand{\dom}{\mathrm{dom}}
\newcommand{\id}{\mathrm{id}}
\newcommand{\img}{\mathrm{img}}
\newcommand{\LMP}{\mathbf{LMP}}
\newcommand{\Mes}{\mathbf{Meas}}
\newcommand{\catco}{\vee}
\newcommand{\cmark}{\ding{51}}
\newcommand{\xmark}{\ding{55}}
\DeclareMathOperator{\inl}{inl}
\DeclareMathOperator{\inr}{inr}
\newcommand{\functor}[1]{\mathcal{#1}}
\let\lemma\@undefined
\let\endlemma\@undefined 
\let\definition\@undefined
\let\enddefinition\@undefined 
\let\example\@undefined
\let\endexample\@undefined 
\let\remark\@undefined
\let\endremark\@undefined
\let\corollary\@undefined
\let\endcorollary\@undefined
\newtheorem{theorem}{Theorem}[section]
\newtheorem{lemma}[theorem]{Lemma}
\newtheorem{prop}[theorem]{Proposition}
\newtheorem{corollary}[theorem]{Corollary}
\newtheorem*{claim*}{Claim}
\theoremstyle{definition}
\newtheorem{definition}[theorem]{Definition}
\newtheorem{remark}[theorem]{Remark}
\newtheorem{question}[theorem]{Question}
\newtheorem{example}[theorem]{Example}
\theoremstyle{remark}
\newtheorem*{example*}{Example}
\newcommand{\xindex}[2][{}]{\relax}
\begin{document}

\title{A classification of bisimilarities for general Markov decision
  processes%
  \thanks{Universidad Nacional de Córdoba.  Facultad de Matemática, Astronomía,  Física y
    Computación.
    \\
    Centro de Investigación y Estudios de Matemática (CIEM-FaMAF),
    Conicet. Córdoba. Argentina.\\
    Supported by Secyt-UNC project 33620230100751CB and Conicet PIP project 11220210100508CO}
}
\author{Martín Santiago Moroni \and
  Pedro Sánchez Terraf%
}
\maketitle 
\begin{abstract}
  We provide a fine classification of
  bisimilarities between states of possibly different labelled Markov
  processes (LMP). We show
  that a bisimilarity relation proposed by Panangaden that uses direct sums coincides with “event
  bisimilarity” from his joint work with Danos, Desharnais, and
  Laviolette. We also extend Giorgio Bacci's notions of
  bisimilarity  between two different processes to the case of
  nondeterministic LMP and generalize the game characterization of state
  bisimilarity by Clerc et al. for the latter.
\end{abstract}
\section{Introduction}
\label{sec:introduction}

Setting the stage for the study of Markov decision processes over
continuous state spaces requires several mathematical
ingredients. In turn, this branch of Computer Science is a rich
source of appealing research questions in Mathematics, with the plus
of having a well-motivated origin.

In our case of interest, that of \textit{labelled Markov processes}
\cite{Desharnais,DEP}, defining appropriate notions of behavioral
equivalences and working with them involves a non-negligible amount of
descriptive set theory \cite{Kechris} and measure theory, and with the
latter come many “pathologies” related to non-measurable sets. This is
then reflected in the fact that many notions of \textit{bisimilarity}
which coincide when the state space is “regular”, turn out to be
different. One early example is provided by \cite{Pedro20111048},
where it is shown that \textit{event} bisimilarity \cite{coco} does not
coincide with the more traditionally-minded notion of \textit{state}
bisimilarity.

It was also shown in \cite{Pedro20111048} that bisimilarities do not
not interact in a predictable manner with direct sums, and thus the
question whether two states are deemed equivalent might depend on the
\textit{context} in which they are inserted. Therefore, one is
naturally led to a subdivision of the existing notions of
bisimulations into two categories: Those that relate two states of a
single process (“internal” bisimulations) and those relating two
states of different processes (“external” ones).

G.~Bacci's dissertation \cite{bacci} provides an adequate
notion of external (state) bisimilarity, in the sense that it does not
depend on some previous internal definition and thus it does not rely
on sums. One of our contributions concerns the generalization of this
external notion to nondeterministic LMP (NLMP)
\cite{DWTC09:qest,D'Argenio:2012:BNL:2139682.2139685}, and
presenting external versions of event and “hit” bisimilarities as well.

Our main results show that several notions of external bisimilarity
appearing in the literature: coalgebraic bisimilarity; the one given by sums; and that
originating in \cite{bacci}, are pairwise different in
general. Moreover, we show that the third one differs from the one defined using spans.
It is to be noted that all LMP counterexamples used in this paper have
separable state spaces, and hence they can be regarded as subsets of
the real line with the trace of the Borel $\sigma$-algebra.
Aside from these negative results, we show
that a bisimilarity relation proposed by Panangaden
\cite{doi:10.1142/p595} by using direct sums coincides with event
bisimilarity.

The next section reviews some prerequisites, and
Section~\ref{sec:lmp-internal} recalls the original definitions of
bisimulations for LMP and how they behave for the example from
\cite{Pedro20111048} and, in particular, the interaction with zigzag morphisms. Section~\ref{sec:lmp-external} introduces
Bacci's external bisimulation and some results concerning its
interaction with internal state bisimilarity on a sum; several
questions on diverse transitivity situations are addressed there.
We treat categorical bisimulations next, considering spans and
coalgebraic bisimilarity in
Section~\ref{sec:catsp-bisimulations}, and cospans in
Section~\ref{sec:catco-bisimulations}, where a brief study of LMP
quotients and finality takes place.
Other notions based on sums are studied in
Section~\ref{sec:oplus-bisimilarity}, Panangaden's one in particular.
Section~\ref{sec:comparison-notions-lmp} summarizes our current
knowledge in Table~\ref{tab:bisim-comparison}, and
Section~\ref{sec:NLMP} is devoted to NLMP, where we provide a notion of external event
bisimilarity for them and generalize 
the game-theoretic characterization of state bisimilarity obtained in 
\cite{ic19} to those processes. We offer some concluding
remarks in Section~\ref{sec:conclusion}.

\section{Preliminaries}
\label{sec:preliminaries}

Let $R$ be a binary relation on a set $S$. We say that $X \subseteq S$ is
\emph{$R$-closed}
if
\[
  \{s\in S \mid \exists x\in X \; (x\mathrel{R}s \vee 
    s\mathrel{R}x)\} \subseteq X.
\]

If $\Gamma \subseteq \Power(S)$ is a set of subsets of $S$, we denote by $\Gamma(R)$ 
the family of all $R$-closed sets that are elements of $\Gamma$. We can think of $\Gamma(\cdot)$ as an operator that maps relations to subfamilies of $\Power(S)$. We highlight the particular case when \(\Gamma\) is the $\sigma$-algebra of a measurable space $(S,\Sigma)$, and 
note that in such a situation $\Sigma(R)$ is always a sub-$\sigma$-algebra of $\Sigma$. 

Another operator that will be important in this work (which goes in the opposite direction) is the one which associates to each family $\Lambda \subseteq \Power(S)$ a relation $\mathcal{R}(\Lambda)$ on $S$, defined by 
\[
(s,t) \in \mathcal{R}(\Lambda) \text{ if and only if } 
\forall A \in \Lambda \ (s \in A \Leftrightarrow t \in A).
\]
This relation associates points that belong to exactly the same sets in $\Lambda$. We note that it is always an equivalence relation, regardless of the conditions on $\Lambda$.

We construe the “disjoint union” $S\oplus S'$ of two sets as the
tagged union $(S\times \{0\})\cup (S'\times\{1\})$,
and we have the obvious inclusion functions $\inl : S\to S\oplus S'$ and
$\inr : S' \to S\oplus S'$. If $R$ is now a binary relation on $S\oplus S'$,
its restriction to $S$ is defined as
\[
  R\rest S \defi \{ (s_1,s_2) \in S \mid \inl(s_1) \mathrel{R}
    \inl(s_2) \}.
\]
If $\Gamma$ is a family of subsets of $T$ where $S\sbq T$,  its
\emph{restriction} is defined as
$\Gamma\rest S \defi \{Q\cap S \mid Q\in \Gamma \}$. With a slight
abuse of notation, if  $\Gamma \sbq \Power(S\oplus S')$, we set $\Gamma\rest S
\defi \{\inl^{-1}[Q] \mid Q\in \Gamma \}$.

The \emph{sum} of two measurable spaces $(S,\Sigma)$ and
$(S',\Sigma')$ is $(S\oplus S',\Sigma\oplus \Sigma')$, with the
following abuse of notation: $\Sigma\oplus\Sigma'\defi\{ Q\oplus
Q'\mid Q \in \Sigma,\  Q'\in \Sigma'\}$.  We have the following result
on restriction of relations on sums that do not correlate elements in
different components.
\begin{lemma}\label{lem:sigma'_restriction}
  Let $(S,\Sigma)$ and $(S',\Sigma')$ be measurable spaces. If $R$ is
  a binary relation such that $R\subseteq \inl(S)^2 \cup \inr(S')^2$,
  then $(\Sigma\oplus \Sigma')(R)\rest S=\Sigma(R\rest S)$. \qed
\end{lemma}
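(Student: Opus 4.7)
The plan is to prove both inclusions $\sbq$ and $\supseteq$ directly from the definitions, relying crucially on the assumption that $R$ does not cross between the two components, i.e., $R \subseteq \inl(S)^2 \cup \inr(S')^2$.

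For the inclusion $(\Sigma\oplus\Sigma')(R)\rest S \sbq \Sigma(R\rest S)$, I would take an arbitrary element $A$ of the left-hand side. By definition of restriction, $A = \inl^{-1}[Q\oplus Q']$ for some $R$-closed set $Q\oplus Q' \in \Sigma\oplus \Sigma'$ with $Q\in \Sigma$, $Q'\in \Sigma'$. Since $\inl^{-1}[Q\oplus Q']=Q$, we immediately get $A=Q\in\Sigma$. To show $Q$ is $(R\rest S)$-closed, take $s\in Q$ and $t\in S$ with $s\mathrel{R\rest S}t$ (or the symmetric). By definition of $R\rest S$, $\inl(s)\mathrel{R}\inl(t)$, so $R$-closure of $Q\oplus Q'$ forces $\inl(t)\in Q\oplus Q'$, hence $t\in Q$.

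For the reverse inclusion $\Sigma(R\rest S) \sbq (\Sigma\oplus\Sigma')(R)\rest S$, the natural strategy is, given $Q\in\Sigma(R\rest S)$, to exhibit an $R$-closed witness in $\Sigma\oplus\Sigma'$ whose restriction is $Q$. The candidate is $Q\oplus S' \defi \inl[Q]\cup \inr[S']$, which clearly lies in $\Sigma\oplus\Sigma'$ and satisfies $\inl^{-1}[Q\oplus S']=Q$. The verification of $R$-closure splits into two cases depending on which component the starting point lives in. If $x=\inl(s)$ with $s\in Q$ and $x\mathrel{R}y$ (or symmetric), the hypothesis $R\sbq\inl(S)^2\cup\inr(S')^2$ forces $y=\inl(t)$ for some $t\in S$; then $s\mathrel{R\rest S}t$ and $(R\rest S)$-closure of $Q$ yields $t\in Q$, so $y\in Q\oplus S'$. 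If $x=\inr(s')$, the same hypothesis gives $y\in\inr(S')\sbq Q\oplus S'$ automatically.

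The only subtle point, and the one where the hypothesis is indispensable, is ensuring in the second inclusion that no $R$-edge escapes the $\inl$-component to land in $\inr(S')\sm\inr[Q']$ for some chosen $Q'$: this is precisely why $\inr[S']$ is the safest possible choice for the second summand, and why the non-crossing assumption is essential. Apart from that the argument is a routine unwinding of the definitions of the two operators $\Gamma(\cdot)$ and $\cdot\rest S$.
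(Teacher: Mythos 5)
Your proof is correct: both inclusions are verified exactly as the definitions demand, and the choice of $Q\oplus S'$ as the $R$-closed witness in the second inclusion (where the non-crossing hypothesis $R\subseteq \inl(S)^2\cup\inr(S')^2$ is genuinely used) is sound. The paper states this lemma without proof, marking it as immediate, and your routine unwinding is precisely the intended argument.
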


The set of subprobability measures over a measurable space
$(S,\Sigma)$ will be denoted by $\Delta(S)$.
If $R$ is a relation on $S$, its \emph{lifting} $\bar{R}$ to $\Delta(S)$ is the equivalence relation given by:
\begin{equation}\label{eq:int-lift}
\mu \mathrel{\bar{R}} \mu' \iff \forall Q\in \Sigma(R) \;
\mu(Q)=\mu'(Q).
\end{equation}

For any topological space, $\Borel(X)$ denotes the Borel
$\sigma$-algebra of $X$; for any $V\sbq X$, $\Borel(X)_V$ will stand for
$\sigma(\Borel(X) \cup \{V \})$, the smallest $\sigma$-algebra
containing $V$ and the open sets of $X$. 
We reserve $\Inter$ for the open interval $(0,1)$.
Also, we will use the following notation for standard measures and functions: $\leb$ will denote the Lebesgue measure on $\Inter$ restricted to the Borel sets; $\delta_s(\cdot)$ will denote the Dirac measure with point mass at $x$, and $\chi_{A}(\cdot)$ will be the characteristic function of the set $A$.

The set of nonnegative integers will be denoted by $\omega$.

\section{Labelled Markov processes}
\label{sec:LMP}

\begin{definition}\label{def:Markov-kernel} 
  A \emph{Markov kernel}\xindex{Markov!kernel} on a measurable space
  $(S,\Sigma)$ is a function $\tau:S\times \Sigma \rightarrow [0,1]$
  such that, for each fixed $s \in S$, $\tau(s,\cdot):\Sigma
  \rightarrow [0,1]$ is a subprobability measure, and for each fixed
  set $X \in \Sigma$, $\tau(\cdot,X):S \rightarrow [0,1]$ is a
  $(\Sigma,\Borel([0,1]))$-measurable function.
\end{definition}

These kernels will be interpreted as transition functions in the
processes defined below. Let $L$ be any countable set.

\begin{definition}\label{def:LMP}
  A \emph{labelled Markov process}, or LMP, with
  \emph{label set} $L$, is a triple $\lmp{S}=(S,\Sigma,\{\tau_a \mid a\in L
  \})$, where $S$ is a set, whose elements are called \emph{states},
  $\Sigma$ is a $\sigma$-algebra on $S$, and for all $a\in L$,
  $\tau_a:S \times \Sigma \rightarrow [0,1]$ is a Markov kernel.

  A \emph{pointed} LMP is a tuple $(S,s,\Sigma,\{\tau_a \mid a\in L
  \})$ where $s\in S$ is called the \emph{initial state}.
\end{definition}

The elements of $L$ are also called \emph{actions}. Since $L$ is
mostly fixed throughout each argument, we also write $\{\tau_a \}_a$
instead of $\{\tau_a \mid a\in L \}$.

These processes
were introduced by Desharnais et
al. \cite{Des98LogicalChar,Desharnais,DEP} as a generalization to
continuous spaces of (discrete) probabilistic transition systems. The
operation of an LMP is summarized as follows: At any given moment, the
system is in a state $s\in S$ and, depending on its interaction with
the environment through the labels in $L$, it makes a transition to a
new state according to a probability law (given by the kernels). The term
“Markov” in the name of these processes refers to the fact that
transitions are governed by the current state of the system, not its
past history.

\begin{example}\label{exm:lmp-U}
  The following LMP $\lmp{U}$, (called $\mathbf{S_3}$ in
  \cite{Pedro20111048}) motivated our finer classification of bisimilarities.

  Let $V$ be a Lebesgue non-measurable subset of $\Inter$. By 
  \cite[Thms.~1--4]{Los-extension} (or alternatively,
  \cite[Cor.~11]{Pedro20111048})
  there exist two extensions $\leb_0$ 
  and $\leb_1$ of $\leb$ to $\Borel(\Inter)_V$ such that $\leb_0(V)\neq\leb_1(V)$. 
  Also, for $q\in \Inter\cap \Q$ define $B_q \defi (0,q)$; hence   $\{B_q\mid q\in 
  \Inter \cap\Q\}$ is a countable generating family of $\Borel(\Inter)$.

  Let $s,t,x \notin \Inter $ be mutually distinct; we may view 
  $\leb_0$ and $\leb_1$ as measures defined on the sum $\Inter \oplus 
  \{s,t,x \}$, supported on $\Inter $. The label set will be $L\defi 
  (\Inter \cap \Q)\cup \{\infty \}$. Now define $\lmp{U}=(U,\Upsilon,\{\tau_a 
    \mid a\in L \})$ such that
  \begin{align*}
    (U,\Upsilon)& \defi (\Inter \oplus\{s,t,x \},\Borel(I)_V\oplus 
    \Power(\{s,t,x \})), \\
    \tau_q(r,A) &\defi \chi_{B_q}(r)\cdot \delta_x(A),\\  
    \tau_\infty(r,A)&\defi \chi_{\{s \}}(r)\cdot   
    \leb_0(A)+\chi_{\{t \}}(r)\cdot \leb_1(A)
  \end{align*}
  when $q\in \Inter\cap \Q$ and $A\in \Upsilon$; we have omitted the
  injection maps into the sum to make notation lighter. This defines
  an LMP (see \cite{moroni2020zhou} for further discussion).

  The dynamics of this process goes intuitively as follows: The 
  states $s$ and $t$  can only make an $\infty$-labelled transition  
  to a ``uniformly distributed'' state in $\Inter $, but they 
  disagree on the probability of reaching $V\sbq \Inter$. Then, each 
  point of $B_q\subseteq \Inter$ can  make a $q$-transition to $x$. 
  Finally, $x$ can make no transition at all (see 
  Figure~\ref{fig:lmp-U}).
  \begin{figure}[h]
    \begin{center}
      \includegraphics[scale=0.05]{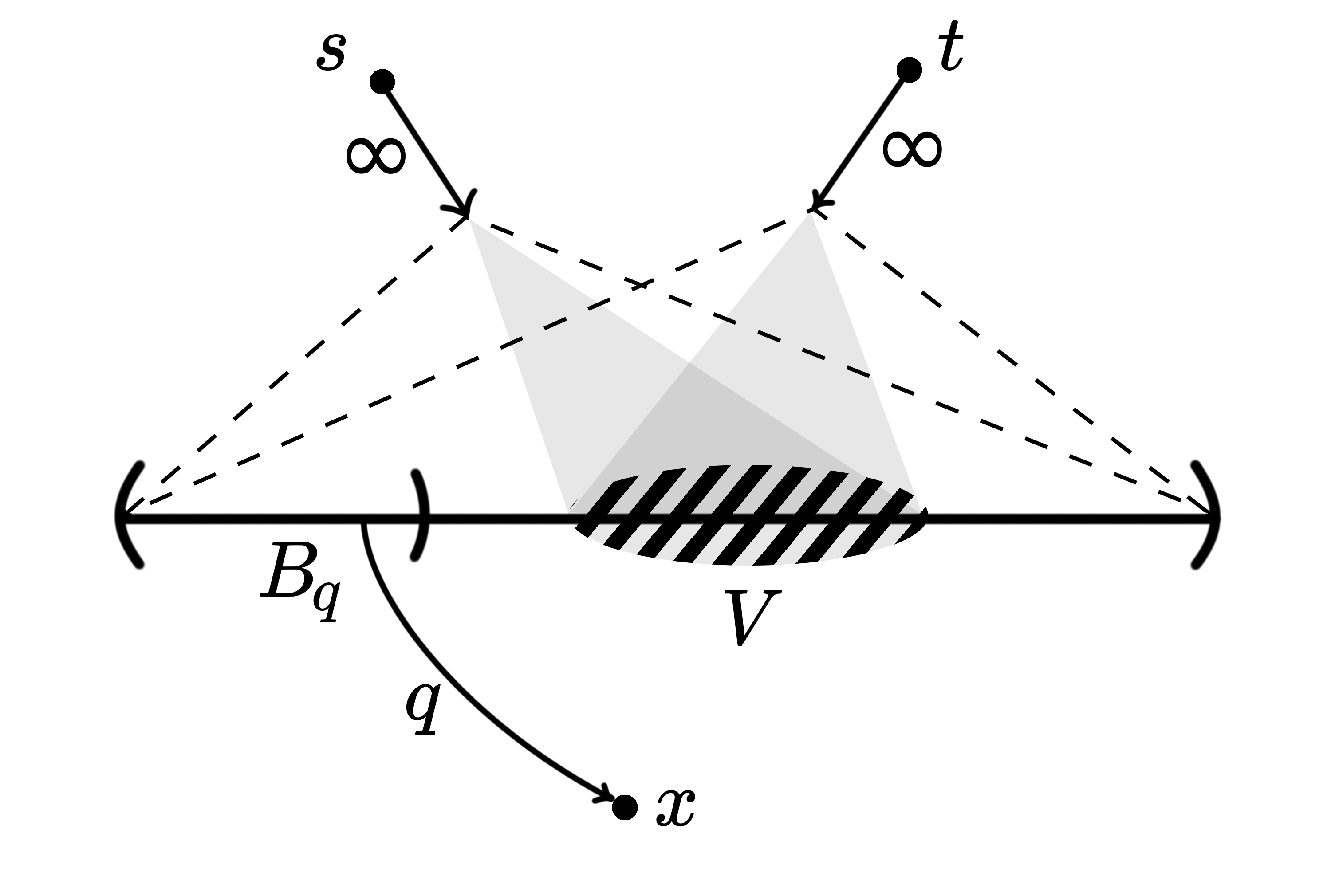}
    \end{center}
    \caption{The LMP $\lmp{U}$.}\label{fig:lmp-U}
  \end{figure}
\end{example}

LMPs form a category  $\mathbf{LMP}$ whose morphisms are given by the
following definition.

\begin{definition}
  A \emph{zigzag} morphism $f$ between two 
  LMPs $\lmp{S}$ and $\lmp{S}'$ is a measurable map $f:S\to{S}'$
  satisfying
  \[
    \forall a \in L, s \in S, B \in \Sigma' \ \ 
    \tau_a(s,f^{-1}[B])=\tau'_a(f(s),B).
  \]
  For pointed LMP, we also require that $f$ preserves the initial
  state.
\end{definition}

\subsection{Internal bisimulations}
\label{sec:lmp-internal}

\subsubsection*{State bisimulations and zigzags}

\begin{definition}\label{def:state-bisimulation}
  Given an LMP $\lmp{S}=(S,\Sigma,\{\tau_a \mid a\in L \})$, an
  \emph{(internal) state bisimulation} $R$ in $\lmp{S}$ is a binary
  relation on $S$ such that if $s\mathrel{R}t$, then for all $C \in
  \Sigma(R)$, it holds that
    \[\forall a\in L \;	\tau_a(s,C)=\tau_a(t,C).\]
    We say that $s$ and $t$ are \emph{state bisimilar}, denoted as
    $s\sim_{\sfs,\lmp{S}} t$, if there exists a state
    bisimulation $R$ such that $s\mathrel{R}t$. 
\end{definition}

We will often drop the subscript $\lmp{S}$ and simply write $s\sim_\sfs t$ if there is no need for clarification.
We have the following well known result:
\begin{prop}\label{prop:union-bisimulation}
  The union of a family of state bisimulations is also a state
  bisimulation. Hence $\sim_\sfs$ is a state bisimulation.
\end{prop}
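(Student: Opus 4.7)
The plan is to show that if $\{R_i\}_{i\in I}$ is any family of state bisimulations on $\lmp{S}$ and $R \defi \bigcup_{i\in I} R_i$, then $R$ is itself a state bisimulation; the statement about $\sim_\sfs$ then follows immediately by taking the family of all state bisimulations.

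The key observation I would establish first is a monotonicity property of the operator $\Gamma(\cdot)$ introduced in the preliminaries: if $R' \sbq R$ then every $R$-closed set is $R'$-closed, and hence $\Sigma(R) \sbq \Sigma(R')$. This is direct from the definition, because if $X$ is $R$-closed then
\[
  \{s \in S \mid \exists x \in X \; (x\mathrel{R'}s \vee s\mathrel{R'}x)\}
  \sbq
  \{s \in S \mid \exists x \in X \; (x\mathrel{R}s \vee s\mathrel{R}x)\}
  \sbq X.
\]

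With this, the main argument is short. Suppose $s \mathrel{R} t$; then by definition of union there exists some $i \in I$ with $s \mathrel{R_i} t$. Pick an arbitrary $C \in \Sigma(R)$; by the monotonicity observation $C \in \Sigma(R_i)$, so since $R_i$ is a state bisimulation we get $\tau_a(s,C) = \tau_a(t,C)$ for every $a \in L$. This is precisely the condition that makes $R$ a state bisimulation.

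For the second sentence, note that $\sim_\sfs$ is by definition the union of all state bisimulations on $\lmp{S}$, so applying the result just proved to this family yields that $\sim_\sfs$ is itself a state bisimulation. I don't anticipate any real obstacle: the only subtle point is the covariance/contravariance direction of $\Gamma(\cdot)$, and once one notices that enlarging $R$ shrinks $\Sigma(R)$, the argument proceeds without any measure-theoretic complications (no limits or countable combinations of the $R_i$ are required).
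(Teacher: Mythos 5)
Your proof is correct, and it is exactly the standard argument the paper relies on when it states this result as ``well known'' without proof: the key point is the antimonotonicity of the operator $R \mapsto \Sigma(R)$ (enlarging the relation can only shrink the family of closed measurable sets), which you identify and use properly. No gaps.
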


\begin{example}[State bisimilarity in Example~\ref{exm:lmp-U}]\label{exm:state-in-U}
  We reproduce part of the proof from \cite[Thm.~15]{Pedro20111048}
  showing that state bisimilarity on $\lmp{U}$ is the identity. It is
  straightforward to verify that $\{x \}$ must be an equivalence class
  of $\sim_\sfs$ (and hence  it belongs to $\Sigma(\sim_\sfs)$): For any
  other state $r\in \lmp{U}$, there exists $l\in L$ such that
  $\tau_l(r,U)= 1$ but $\tau_l(x,U)=0$ (and $U$ is obviously
  $\sim_\sfs$-closed). Now, take $y\neq z\in \Inter$; let us show that $y$ and
  $z$ cannot be related by $\sim_\sfs$. 
  Since $\{B_q\mid q\in \Inter\cap\Q\}$
  generates $\Borel(\Inter)$, there exists $a\in \Inter\cap \Q\sbq L$ such that
  $B_a$ separates $y$ from $z$. Without loss of generality, we can
  assume that $\{y,z \}\cap B_a=\{y \}$. Then $\tau_a(y,\{x \})=1$ but
  $\tau_a(z,\{x \})=0$. We conclude that $\sim_\sfs$ restricted to
  $\Inter\cup \{x \}$ is the identity relation, and (in particular) $V\sbq
  \Inter$ is $\sim_\sfs$-closed. Finally, we observe that
  $\tau_\infty(s,V)\neq\tau_\infty(t,V)$, and therefore $s$ and $t$
  are not state bisimilar.
\end{example}

\begin{lemma}\label{lem:zigzag-state-image}
  Let $f:\lmp{S}\to \lmp{S'}$ be a zigzag morphism. If $s\sim_\sfs t$ in
  $\lmp{S}$, then $f(s)\sim_\sfs f(t)$ in $\lmp{S'}$.
\end{lemma}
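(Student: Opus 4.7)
The plan is the standard one for transport of bisimulations along zigzags: lift a witnessing state bisimulation in $\lmp{S}$ to one in $\lmp{S'}$ by taking its image under $f\times f$. Concretely, fix a state bisimulation $R$ on $\lmp{S}$ with $s \mathrel{R} t$ (which exists by Proposition~\ref{prop:union-bisimulation}, or simply by hypothesis), and define
\[
  R' \defi \{(f(u),f(v))\in S'\times S'\mid u\mathrel{R} v \}.
\]
By construction, $(f(s),f(t))\in R'$, so it suffices to show that $R'$ is a state bisimulation in $\lmp{S'}$.

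The key technical step is the identity $f^{-1}[\Sigma'(R')]\sbq \Sigma(R)$: I would prove that whenever $C'\in\Sigma'(R')$, the preimage $f^{-1}[C']$ belongs to $\Sigma(R)$. Measurability of this preimage follows from measurability of $f$, so only $R$-closedness is at stake. For this, pick $u\in f^{-1}[C']$ and $v\in S$ with $u\mathrel{R} v$ (the case $v\mathrel{R}u$ is symmetric). Then $(f(u),f(v))\in R'$ by definition of $R'$, and $f(u)\in C'$, so the $R'$-closedness of $C'$ forces $f(v)\in C'$, i.e., $v\in f^{-1}[C']$.

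With this in hand, the bisimulation condition is immediate: if $(u',v')\in R'$, choose $u,v\in S$ with $f(u)=u'$, $f(v)=v'$ and $u\mathrel{R} v$. For any $a\in L$ and any $C'\in\Sigma'(R')$, the zigzag property of $f$ gives
\[
  \tau'_a(u',C') = \tau_a(u,f^{-1}[C'])
  \quad\text{and}\quad
  \tau'_a(v',C') = \tau_a(v,f^{-1}[C']),
\]
and since $f^{-1}[C']\in\Sigma(R)$ by the previous paragraph, the state bisimulation property of $R$ equates the right-hand sides. Hence $\tau'_a(u',C')=\tau'_a(v',C')$, which proves $R'$ is a state bisimulation and thus $f(s)\sim_\sfs f(t)$ in $\lmp{S'}$.

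I do not anticipate a real obstacle: the whole argument is a routine chase once one writes down the right candidate $R'$. The only place where carelessness might bite is the $R$-closedness computation, where one must remember that the $R$-closed sets are closed under \emph{both} successors and predecessors of the relation; the argument above covers this by symmetry.
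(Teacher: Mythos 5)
Your proposal is correct and follows essentially the same route as the paper's proof: both define $R'$ as the image $f[R]=\{(f(u),f(v))\mid u\mathrel{R}v\}$, reduce the bisimulation condition to showing that preimages of $R'$-closed measurable sets are $R$-closed and measurable, and conclude via the zigzag identity $\tau'_a(f(u),C')=\tau_a(u,f^{-1}[C'])$. No gaps.
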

\begin{proof}
  Let $R$ be a bisimulation such that $s\mathrel{R}t$; it suffices to
  show that 
  $R'\defi f[R]=\{(f(x),f(y))\mid x\mathrel{R}y\}$ is a bisimulation
  on $\lmp{S'}$.
  Suppose that $f(x)\mathrel{R'}f(y)$ with $x\mathrel{R} y$,
  and let $A\in \Sigma'$ be a $R'$-closed set. We want to verify that
  $\tau_a'(f(x),A)=\tau_a(x,f^{-1}[A])$ is equal to
  $\tau_a'(f(y),A)=\tau_a(y,f^{-1}[A])$. For this, it suffices to see
  that $f^{-1}[A]\in \Sigma(R)$ since $x\sim_\sfs y$. The measurability
  is immediate. We verify that $f^{-1}[A]$ is $R$-closed: Let $w\in
  f^{-1}[A]$ and $z\in S$ such that $w\mathrel{R}z$; since
  $f(w)\mathrel{R'} f(z)$, $f(w)\in A$, and $A$ is $R'$-closed, we
  conclude that $f(z)\in A$, i.e., $z\in f^{-1}[A]$. The case
  $z\mathrel{R}w$ is similar.
\end{proof}

\begin{example}\label{exm:zigzag-not-state-domain}
  The converse of Lemma~\ref{lem:zigzag-state-image} is not true:
  Consider the identity function $\id:\lmp{U}\to \lmp{U}_B$ where
  $\lmp{U}_B\defi (U,\Borel(\Inter)\oplus \Power(\{ s,t,x \}),\{ \tau_a \}_a)$.
  Let $\Sigma$ be the $\sigma$-algebra of $\lmp{U}_B$. We see
  that such a function is measurable since $\Sigma\subseteq \Upsilon$
  and it clearly satisfies the zigzag condition. However, in $\lmp{U}_B$,
  we have $s\sim_\sfs t$ as the non-measurable Lebesgue set $V$ is not
  available to distinguish them, whereas $s \nsim_\sfs t$ in $\lmp{U}$.
\end{example}

\subsubsection*{Direct sums}

\begin{definition}\label{def:sum-LMP}
  Given two LMPs $\lmp{S}$ and $\lmp{S}'$, the \emph{direct sum} LMP
  $\lmp{S}\oplus \lmp{S'}$ has the sum of measurable spaces $(S\oplus
  S',\Sigma \oplus \Sigma')$ as underlying measurable space, and
  Markov kernels
  \begin{align*}
    \tau_a^\oplus(\inl(x),A\oplus A') &\defi \tau_a(x,A)  && x\in S,\\
    \tau_a^\oplus(\inr(x),A\oplus A') &\defi \tau_a(x,A')  && x\in S'.
  \end{align*}
\end{definition}

The above construction corresponds to the categorical coproduct or sum in the category $\mathbf{LMP}$. 
It is straightforward from the definition that the following equivalence holds:
\begin{align}
  \label{eq:equiv-oplus-kernels} \tau_a(s,A)= \tau'_a(s',A') \iff \tau^{\oplus}_a(\inl(s),A\oplus A') = \tau^{\oplus}_a(\inr(s'),A\oplus A').
\end{align}

If $R\sbq S\times S$, we can ``lift'' it to relations $R_l$ and 
$R_r$  on $S\oplus S'$  and $S'\oplus S$, respectively as follows:
\begin{equation}
  \label{eq:lift}
  R_l\defi \{(\inl(s),\inl(t))\mid s\mathrel{R}t \},\qquad
  R_r\defi \{(\inr(s),\inr(t))\mid s\mathrel{R}t \}.
\end{equation}
It is easy to verify that if $R$ is a state bisimulation on $\lmp{S}$, then $R_l$ is a state bisimulation on $\lmp{S}\oplus \lmp{S'}$.
This only depends on the fact that $(\Sigma\oplus\Sigma')(R)\rest S=\Sigma(R)$ (see Lemma~\ref{lem:sigma'_restriction}). 
Additionally, note that this implies that state bisimilarity on the
sum of two LMPs restricted to a summand is not smaller than the
original bisimilarity on the latter.

As a warm up for its future use in defining external bisimulations, we
will consider the case of equal summands and relate state bisimilarity
on $\lmp{S}\oplus \lmp{S}$ to state bisimilarity on $\lmp{S}$.
In this setting, we can lift a relation $R$ on $S$ in various
ways:
To the left summand as $R_l$ above, to the right, in a crossed manner, or completely. 
We will focus on the latter case. 
If $R\sbq S\times S$, we define the following binary relation $R^+$ on
$S\oplus S$ (that is, $R^+ \sbq (S\oplus S)\times (S\oplus S)$): 
\begin{equation}\label{eq:complete-lift}
  u \mathrel{R^+} v \iff \exists s,t\in S,\ s \mathrel{R} t \land
  u\in\{\inl(s),\inr(s) \} \land v\in \{\inl(t),\inr(t) \}
\end{equation}
Note that $R^+$ is an equivalence relation if and only if $R$ is.

\begin{lemma}\label{lem:state-bisimulation-in-sum}
  Let $R \sbq S\times S$.
  \begin{enumerate}
  \item $A\sbq S$ is $R$-closed if and only if $A\oplus A$ is
    $R^+$-closed. Additionally, every $R^+$-closed set is of this
    form.
  \item\label{item:state-bisimulation-in-sum} $R \sbq S\times S$ is a
    state bisimulation on $\lmp{S}$ if and only if $R^+$ is a state
    bisimulation on $\lmp{S}\oplus\lmp{S}$.
  \end{enumerate}
\end{lemma}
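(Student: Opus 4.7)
The plan is to dispatch part (1) first and then leverage it for part (2). For the equivalence in (1), I would argue directly from definitions. If $A$ is $R$-closed and $u \in A\oplus A$ with $u \mathrel{R^+} v$, then the witnesses $s,t$ from the definition of $R^+$ in equation~(\ref{eq:complete-lift}) force $s \in A$ (since $u \in \{\inl(s),\inr(s)\}$); $R$-closedness of $A$ gives $t \in A$, whence $v \in A\oplus A$. The case $v \mathrel{R^+} u$ is symmetric. Conversely, if $A\oplus A$ is $R^+$-closed, $s\in A$, and $s \mathrel{R} t$, then $\inl(s) \in A\oplus A$ and $\inl(s) \mathrel{R^+} \inl(t)$ together give $\inl(t) \in A\oplus A$, so $t\in A$.

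For the second assertion of (1)---that every $R^+$-closed set has the form $A\oplus A$---the key observation is that $\inl(s) \mathrel{R^+} \inr(s)$ holds precisely when $s \mathrel{R} s$. Hence, as soon as $R$ contains the diagonal $\Delta_S$, $R^+$-closedness forces $\inl(s)\in C \iff \inr(s)\in C$, so setting $A\defi \{s \in S : \inl(s)\in C\}$ yields $C = A\oplus A$. For arbitrary $R$ this is too strong (a set such as $\{\inl(c)\}$ is $R^+$-closed whenever $c$ is unrelated by $R$), and I would bridge the gap by replacing $R$ by $R\cup\Delta_S$, which remains a state bisimulation whenever $R$ is, by Proposition~\ref{prop:union-bisimulation}. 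I expect this reflexivity subtlety to be the main obstacle.

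Part (2) then reduces to careful bookkeeping that uses (1) in both directions. For the forward implication, assume $R$ is a state bisimulation, take $u \mathrel{R^+} v$ with witnesses $s \mathrel{R} t$, and pick $C \in (\Sigma\oplus\Sigma)(R^+)$. By (1), $C = A\oplus A$ with $A \in \Sigma(R)$, and in any of the four combinations for $u \in \{\inl(s),\inr(s)\}$ and $v\in\{\inl(t),\inr(t)\}$, Definition~\ref{def:sum-LMP} collapses the computation to $\tau_a^\oplus(u, A\oplus A) = \tau_a(s,A) = \tau_a(t,A) = \tau_a^\oplus(v, A\oplus A)$, with the middle equality coming from the bisimulation property of $R$. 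For the converse, given $s\mathrel{R} t$ and $A \in \Sigma(R)$, I would lift to $\inl(s) \mathrel{R^+} \inl(t)$ and invoke (1) to ensure $A\oplus A \in (\Sigma\oplus\Sigma)(R^+)$; the $R^+$-bisimulation condition then reads $\tau_a(s,A) = \tau_a(t,A)$, as required.
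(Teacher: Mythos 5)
The paper states this lemma without an accompanying proof, so I assess your argument on its own terms. Your verification of the equivalence in part (1) and of the right-to-left direction of part (2) is correct and is the expected direct computation. More importantly, you have correctly spotted that the clause ``every $R^+$-closed set is of the form $A\oplus A$'' fails for arbitrary $R$: if $c$ is neither in the domain nor in the range of $R$, then $\inl(c)$ and $\inr(c)$ are $R^+$-isolated, so a set may contain one without the other. That is a defect of the statement, not of your proof.

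The problem runs deeper than you indicate, however, and your forward direction of part (2) inherits it: the step ``By (1), $C=A\oplus A$'' is exactly the clause that breaks without reflexivity. This is not merely a gap in the write-up --- the implication itself is false for non-reflexive $R$. Take $S=\{s,t,c\}$ with the discrete $\sigma$-algebra, one label with $\tau_a(s,\cdot)=\tau_a(t,\cdot)=\delta_c$ and $\tau_a(c,\cdot)=0$, and $R=\{(s,t)\}$. Then $R$ is a state bisimulation (the $R$-closed sets are those containing both or neither of $s,t$, and $s,t$ have identical kernels), but $R^+$ is not: $\{c\}\oplus\emptyset$ is measurable and $R^+$-closed, $\inl(s)\mathrel{R^+}\inr(t)$, yet $\tau_a^\oplus(\inl(s),\{c\}\oplus\emptyset)=1\neq 0=\tau_a^\oplus(\inr(t),\{c\}\oplus\emptyset)$. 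So the lemma must carry the additional hypothesis that $R$ is reflexive (or at least that every state lies in $\dom(R)\cup\img(R)$); under that hypothesis your observation that $\inl(s)\mathrel{R^+}\inr(s)$ forces $\inl(s)\in C\iff\inr(s)\in C$ does give $C=A\oplus A$, and the rest of your bookkeeping in (2) goes through. Note finally that your proposed repair --- passing to $R\cup\id_S$ --- proves a statement about $(R\cup\id_S)^+$, which strictly contains $R^+$, so it yields a variant of the lemma rather than the lemma as written; that variant is all that is needed downstream, since the paper only applies the result to $\id_S$ and to $\sim_\sfs$, both reflexive.
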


We have the following (unsurprising) characterization: State bisimilarity on the sum $\lmp{S}\oplus \lmp{S}$ is the complete lifting of bisimilarity on $\lmp{S}$.
For this, we use the following auxiliary result.
\begin{lemma}\label{lem:state-bisim-in-summand}
  ${\sim}_{\sfs, \lmp{S}\oplus \lmp{S}}\rest S = {\sim}_{\sfs, \lmp{S}}$.
\end{lemma}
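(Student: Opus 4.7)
The plan is to prove the two inclusions separately. The forward inclusion ${\sim}_{\sfs,\lmp{S}} \sbq {\sim}_{\sfs,\lmp{S}\oplus\lmp{S}}\rest S$ is almost immediate from the lifting construction: given a state bisimulation $R$ on $\lmp{S}$ witnessing $s\sim_\sfs t$, I would invoke the remark following~\eqref{eq:lift} (which rests on Lemma~\ref{lem:sigma'_restriction}) to conclude that $R_l$ is a state bisimulation on $\lmp{S}\oplus \lmp{S}$; since $\inl(s)\mathrel{R_l}\inl(t)$, this yields $\inl(s)\sim_\sfs \inl(t)$ in the sum, as required.

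For the reverse inclusion, the key idea is to use the codiagonal map $\nabla:\lmp{S}\oplus \lmp{S}\to \lmp{S}$ defined by $\nabla(\inl(s)) \defi s \defi \nabla(\inr(s))$, and to verify that it is a zigzag morphism. Measurability is immediate, since $\nabla^{-1}[A]=A\oplus A\in\Sigma\oplus\Sigma$ for any $A\in\Sigma$, and the zigzag condition
\[
\tau^\oplus_a\bigl(\inl(s),\nabla^{-1}[A]\bigr) = \tau^\oplus_a(\inl(s),A\oplus A) = \tau_a(s,A) = \tau_a(\nabla(\inl(s)),A)
\]
follows directly from Definition~\ref{def:sum-LMP}, with the symmetric computation for $\inr$. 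With $\nabla$ in hand, if $\inl(s)\sim_\sfs \inl(t)$ in $\lmp{S}\oplus \lmp{S}$, Lemma~\ref{lem:zigzag-state-image} delivers $s = \nabla(\inl(s)) \sim_\sfs \nabla(\inl(t)) = t$ in $\lmp{S}$.

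The only subtle point worth flagging is the temptation to prove the reverse inclusion by direct means, showing that $R\rest S$ is a bisimulation on $\lmp{S}$ whenever $R$ is one on the sum. That route is awkward: a generic bisimulation $R$ on $\lmp{S}\oplus \lmp{S}$ may correlate elements across the two summands, so $R$-closed sets in $\Sigma\oplus\Sigma$ need not be of the form $A\oplus A'$ with $A$ an $R\rest S$-closed set (the hypothesis of Lemma~\ref{lem:sigma'_restriction} can fail), and it is unclear how to promote a given $R\rest S$-closed set $C\in\Sigma$ to a witness in $(\Sigma\oplus\Sigma)(R)$. Routing through the codiagonal zigzag sidesteps this difficulty entirely.
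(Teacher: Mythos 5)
Your proof is correct. The easy inclusion ${\sim}_{\sfs,\lmp{S}}\sbq{\sim}_{\sfs,\lmp{S}\oplus\lmp{S}}\rest S$ via the left lift $R_l$ is exactly what the paper does. For the substantive inclusion, however, you take a genuinely different route: you observe that the codiagonal $\nabla:\lmp{S}\oplus\lmp{S}\to\lmp{S}$ is a zigzag morphism (your verification of measurability and of the zigzag condition is correct, and indeed $\nabla$ exists automatically as the coproduct map induced by $\id_{\lmp{S}}$ on both summands) and then apply Lemma~\ref{lem:zigzag-state-image} to push $\inl(s)\sim_\sfs\inl(t)$ forward to $s\sim_\sfs t$. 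The paper instead argues directly that ${\sim}_{\sfs,\lmp{S}\oplus\lmp{S}}\rest S$ is itself a state bisimulation on $\lmp{S}$, promoting each closed set $A$ to $A\oplus A$ and checking that $A\oplus A$ is closed under bisimilarity in the sum; the promotion works precisely because the relation in question is the \emph{full} bisimilarity, so one may use that $(\id_S)^+$ is a bisimulation (giving $\inl(r)\sim_\sfs\inr(r)$) together with transitivity of $\sim_\sfs$. This also means your closing caveat is slightly overstated: the direct route is indeed hopeless for a \emph{generic} bisimulation $R$ on the sum (which may correlate the summands arbitrarily, as Example~\ref{exm:restr-not-bisim} later illustrates in the two-process setting), but it goes through for $\sim_\sfs$ itself, which is what the paper exploits. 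Your codiagonal argument buys a shorter and more conceptual proof that avoids invoking transitivity of state bisimilarity, at the modest cost of routing through Lemma~\ref{lem:zigzag-state-image}; the paper's argument is more self-contained at the level of closed sets.
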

\begin{proof}
  We are considering the “leftmost” copy of $S$, and hence the lhs equals $\{(x,y)\mid
    \inl(x)\sim_{\sfs, \lmp{S}\oplus \lmp{S}} \inl(y) \}$. 
  For $(\sbq)$ it suffices to show that ${\sim}_{\sfs, \lmp{S}\oplus \lmp{S}} \rest S$ is a
  state bisimulation on $\lmp{S}$.  
  Let $(x,y)\in
  {\sim}_{\sfs, \lmp{S}\oplus \lmp{S}} \rest S$ and $A\in \Sigma({\sim}_{\sfs, \lmp{S}\oplus \lmp{S}} \rest
  S)$.  
  First, we prove that $A\oplus A$ is
  ${\sim}_{\sfs, \lmp{S}\oplus \lmp{S}}$-closed (we already know it is measurable).
  Assume $u\in A\oplus A$ and
  $u\sim_{\sfs, \lmp{S}\oplus \lmp{S}} v$. 
  Choose $s,t\in S$ such that $u\in\{\inl(s),\inr(s) \}$ and $v\in  \{\inl(t),\inr(t) \}$.  
  Then we have
  \[
    \inl(s)\sim_{\sfs, \lmp{S}\oplus \lmp{S}} u\sim_{\sfs, \lmp{S}\oplus \lmp{S}} v\sim_{\sfs, \lmp{S}\oplus \lmp{S}} \inl(t),
  \]
  where the relations in both extremes hold since $(\mathrm{id}_S)^+$  is a bisimulation.   
  Therefore, by transitivity of state bisimilarity we conclude $(s,t)\in
  {\sim}_{\sfs, \lmp{S}\oplus \lmp{S}} \rest S$. 
  Since $A$ is closed for
  ${\sim}_{\sfs, \lmp{S}\oplus \lmp{S}} \rest S$,
  we obtain $t\in A$; hence  $v\in A\oplus A$ and it is
  ${\sim}_{\sfs, \lmp{S}\oplus \lmp{S}}$-closed.
  We have:
  \[
    \tau_a(x,A)
    =\tau_a^\oplus(\inl(x),A\oplus A) =\tau_a^\oplus(\inl(y),A\oplus A)
    =\tau_a(y,A).
  \]
  
  For $(\supseteq)$, recall that  $(\sim_\sfs)_l$ is a
  state bisimulation on $\lmp{S}\oplus \lmp{S}$.
\end{proof}

\begin{prop}\label{prop:state-bisim-in-sum}
	${\sim}_{\sfs, \lmp{S}\oplus \lmp{S}} = ({\sim}_\sfs)^+$.
\end{prop}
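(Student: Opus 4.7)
The plan is to prove both inclusions using the two lemmas already established: Lemma~\ref{lem:state-bisimulation-in-sum}(2), which says $R$ is a state bisimulation on $\lmp{S}$ iff $R^+$ is one on $\lmp{S}\oplus\lmp{S}$, and Lemma~\ref{lem:state-bisim-in-summand}, which identifies the restriction of $\sim_{\sfs,\lmp{S}\oplus\lmp{S}}$ to the left summand with $\sim_{\sfs,\lmp{S}}$.

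For $(\supseteq)$, the argument is immediate: since $\sim_\sfs$ is a state bisimulation on $\lmp{S}$ (by Proposition~\ref{prop:union-bisimulation}), Lemma~\ref{lem:state-bisimulation-in-sum}(2) gives that $(\sim_\sfs)^+$ is a state bisimulation on $\lmp{S}\oplus\lmp{S}$, and hence is contained in $\sim_{\sfs,\lmp{S}\oplus\lmp{S}}$.

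For $(\sbq)$, I would do a case analysis based on which summand $u$ and $v$ live in. Suppose $u \sim_{\sfs,\lmp{S}\oplus\lmp{S}} v$; write $u\in\{\inl(s),\inr(s)\}$ and $v\in\{\inl(t),\inr(t)\}$ for the underlying states $s,t\in S$. The key auxiliary fact is that $(\id_S)^+$ is a state bisimulation on $\lmp{S}\oplus\lmp{S}$ (by Lemma~\ref{lem:state-bisimulation-in-sum}(2) applied to $\id_S$), so $\inl(r)\sim_{\sfs,\lmp{S}\oplus\lmp{S}}\inr(r)$ for every $r\in S$. Using this together with transitivity, I can bring the $\sim_{\sfs,\lmp{S}\oplus\lmp{S}}$-relation between $u$ and $v$ to the equivalent form $\inl(s)\sim_{\sfs,\lmp{S}\oplus\lmp{S}}\inl(t)$, regardless of the original case. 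By Lemma~\ref{lem:state-bisim-in-summand}, this gives $s\sim_\sfs t$, whence $u\mathrel{(\sim_\sfs)^+}v$ by the definition of the complete lifting.

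I do not expect a genuine obstacle here, as all the heavy lifting has already been done in the two preceding lemmas; the one small subtlety is simply remembering to invoke the $(\id_S)^+$ bisimulation so as to translate between $\inl$- and $\inr$-copies of the same state before applying Lemma~\ref{lem:state-bisim-in-summand}.
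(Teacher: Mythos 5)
Your proof is correct and follows essentially the same route as the paper: the $(\supseteq)$ inclusion via Lemma~\ref{lem:state-bisimulation-in-sum}(\ref{item:state-bisimulation-in-sum}), and the $(\sbq)$ inclusion by reducing to Lemma~\ref{lem:state-bisim-in-summand}. The paper merely states that the latter inclusion ``is equivalent to'' Lemma~\ref{lem:state-bisim-in-summand}$(\sbq)$, whereas you spell out the translation between summands via $(\id_S)^+$ and transitivity, which is exactly the implicit content of that equivalence.
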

\begin{proof}
  The inclusion $(\supseteq)$ follows immediately from
  Lemma~\ref{lem:state-bisimulation-in-sum}(\ref{item:state-bisimulation-in-sum}).
  The other inclusion is equivalent to Lemma~\ref{lem:state-bisim-in-summand}$(\sbq)$.
\end{proof}

\subsubsection*{A review of logic and event bisimulation}

In \cite{Desharnais}, it was shown that for an LMP over an analytic state space, two states are
bisimilar if and only if they
satisfy exactly the same formulas of the following simple modal logic $\Logic$:
\begin{equation}\label{eq:logic} 
  \top \mid \varphi_1 \wedge \varphi_2 \mid \langle a
  \rangle_{>q}\varphi.
\end{equation}
Given an LMP $\lmp{S}$, the semantics $\sem{\varphi}_{\lmp{S}}$ is the set of states of 
$\lmp{S}$ for which the formula holds. For instance, $\sem{\langle a \rangle_{>q}\varphi}$
can be recursively defined to be the set $\{s\in
  S\mid\tau_a(s,\sem{\varphi}) > q \}$. These validity sets are
measurable and 
preserved by zigzag morphisms, as shown in the following lemma (see
\cite[Prop.~9.2]{DEP} for a slightly more general statement):

\begin{lemma}\label{lem:zigzag-logic}
  If $f:\lmp{S}\to \lmp{S}'$ is a zigzag morphism, then for any
  formula $\varphi\in \Logic$, we have
  $f^{-1}[\sem{\varphi}_{\lmp{S}'}] = \sem{\varphi}_{\lmp{S}}$.
  Consequently, $s$ satisfies $\varphi$ in $\lmp{S}$ if and only if $f(s)$
  satisfies $\varphi$ in $\lmp{S}'$.
\end{lemma}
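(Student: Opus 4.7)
The plan is a straightforward structural induction on the formula $\varphi\in \Logic$, carrying along the side fact (needed to even write $\tau_a(\cdot,\sem{\psi})$) that validity sets are measurable in every LMP. Both statements can be proved simultaneously by induction so that at each stage we have both an equality of sets and the measurability of each semantics set.

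For the base case $\varphi=\top$, I would just observe $\sem{\top}_{\lmp{S}'}=S'$ and $\sem{\top}_{\lmp{S}}=S$, so $f^{-1}[S']=S$ since $f:S\to S'$ is total. For the conjunction step, I would use that semantics is defined as $\sem{\varphi_1\wedge\varphi_2}=\sem{\varphi_1}\cap\sem{\varphi_2}$ in each LMP, together with the fact that preimages commute with intersections; measurability is preserved because $\Sigma$ is a $\sigma$-algebra.

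The interesting step is the modality $\varphi=\posg{a}\psi$. Here I would apply the zigzag condition to the set $B=\sem{\psi}_{\lmp{S}'}$, which is measurable by inductive hypothesis, yielding
\[
  \tau'_a(f(s),\sem{\psi}_{\lmp{S}'})=\tau_a(s,f^{-1}[\sem{\psi}_{\lmp{S}'}]).
\]
By the inductive hypothesis on $\psi$, the right-hand side equals $\tau_a(s,\sem{\psi}_{\lmp{S}})$. Unfolding the definition of the $\posg{a}$ modality on both sides then gives
\[
  f^{-1}[\sem{\posg{a}\psi}_{\lmp{S}'}]
  =\{s\in S\mid \tau_a(s,\sem{\psi}_{\lmp{S}})>q\}
  =\sem{\posg{a}\psi}_{\lmp{S}},
\]
and measurability of this set follows from the Markov-kernel condition applied to the measurable set $\sem{\psi}_{\lmp{S}}$.

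For the final sentence of the lemma, I would just translate pointwise: $s\in\sem{\varphi}_{\lmp{S}}$ iff $s\in f^{-1}[\sem{\varphi}_{\lmp{S}'}]$ iff $f(s)\in\sem{\varphi}_{\lmp{S}'}$. There is no serious obstacle here; the only thing to be a bit careful about is to interleave the measurability side claim with the main identity in the induction, so that invoking the Markov-kernel measurability and the zigzag condition at the inductive step is legitimate.
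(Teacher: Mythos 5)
Your proof is correct: the paper itself omits the argument (deferring to \cite[Prop.~9.2]{DEP}), and the structural induction you give—interleaving the set identity with measurability of validity sets, and applying the zigzag condition to $B=\sem{\psi}_{\lmp{S}'}$ at the modal step—is exactly the intended and standard route. No gaps.
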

In other words, the truth value of these formulas is preserved
under zigzag morphisms.

In \cite{coco}, a new notion of “event” 
bisimulation is introduced, which shifts
the focus from equivalence relations to the measurable structure of
the LMP. In the same work, it is proved that the bisimilarity associated
with this new concept is characterized by the logic $\Logic$, and this
holds for Markov processes defined on general measurable spaces. The
following is the central concept in this type of bisimulations.

\begin{definition}\label{def:stable}
  Let $\lmp{S}=(S,\Sigma,\{\tau_a \mid a\in L\})$ be an LMP, and let
    $\Lambda\subseteq\Sigma$. We say that $\Lambda$ is
    \emph{stable}\xindex{stable} with respect to $\lmp{S}$ if for all
    $A \in \Lambda$, $r \in [0,1]\cap \Q$, $a \in L$, it holds that
    $\{s\in S \mid \tau_a(s,A)>r\} \in \Lambda$.
\end{definition}

\begin{definition}[{\cite[Def.~4.3]{coco}}]\label{def:event-bisimulation}
  Let $\lmp{S}=(S,\Sigma,\{\tau_a \mid a\in L \})$ be an LMP.  A
  relation $R$ on $S$ will be an \emph{(internal) event bisimulation} if there
  exists a stable sub-$\sigma$-algebra $\Lambda \subseteq \Sigma$ such that $R=\mathcal{R}(\Lambda)$.
  
  Two states $s$ and $t$ of an LMP are \emph{event bisimilar}, denoted
  as $s\sim_{\e,\lmp{S}} t$ (or $s\sim_\e t$), if there exists an
  event bisimulation $R$ such that $s\mathrel{R}t$.
\end{definition}

From Definition~\ref{def:stable}, we can quickly deduce that the union
and intersection of stable families are stable. The next theorem
identifies the smallest stable $\sigma$-algebra. Let $\sem{\Logic}$
denote $\{\sem{\varphi} \mid \varphi \in \Logic \}$.
\begin{theorem}[{\cite[Prop.~5.5]{coco}}] 
\label{thm:logical-characterization}
  Given an LMP $(S,\Sigma,\{\tau_a \mid a\in L \})$,
  $\sigma(\sem{\Logic})$ is the smallest stable $\sigma$-algebra
  contained in $\Sigma$.
\end{theorem}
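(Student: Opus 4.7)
The plan is to prove the two directions separately: first that $\sigma(\sem{\Logic})$ is contained in every stable sub-$\sigma$-algebra of $\Sigma$, and second that $\sigma(\sem{\Logic})$ is itself stable.

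For the first (minimality) direction, I would proceed by a straightforward induction on the structure of formulas $\varphi \in \Logic$, showing that $\sem{\varphi}\in\Lambda$ whenever $\Lambda\sbq\Sigma$ is a stable sub-$\sigma$-algebra. The base case $\sem{\top}=S\in\Lambda$ is immediate; conjunction is handled by closure of $\Lambda$ under finite intersections; and the modal case $\sem{\langle a\rangle_{>q}\varphi}=\{s\in S\mid \tau_a(s,\sem{\varphi})>q\}$ is exactly the stability condition applied to $\sem{\varphi}\in\Lambda$ (which holds by inductive hypothesis). Hence $\sem{\Logic}\sbq\Lambda$, and since $\Lambda$ is a $\sigma$-algebra, $\sigma(\sem{\Logic})\sbq\Lambda$.

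For the second (stability) direction, the natural approach is a Dynkin-style $\pi$--$\lambda$ argument. Define
\[
  \mathcal{M} \defi \bigl\{A\in\sigma(\sem{\Logic}) \;\bigm|\; \forall a\in L,\;\forall r\in[0,1]\cap\Q,\; \{s\mid \tau_a(s,A)>r\}\in\sigma(\sem{\Logic})\bigr\}.
\]
The goal is to show $\mathcal{M}=\sigma(\sem{\Logic})$. Clearly $\sem{\Logic}\sbq\mathcal{M}$, because for $A=\sem{\varphi}$ the required threshold set is $\sem{\langle a\rangle_{>r}\varphi}\in\sem{\Logic}$. Since $\sem{\Logic}$ is closed under finite intersections (by the clause $\varphi_1\wedge\varphi_2$), it is a $\pi$-system. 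So it suffices to show $\mathcal{M}$ is a $\lambda$-system and invoke Dynkin's theorem.

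The $\lambda$-system verification has three parts. That $S=\sem{\top}\in\mathcal{M}$ is clear. For countable disjoint unions, if the $A_n$ are pairwise disjoint elements of $\mathcal{M}$, then $\tau_a(s,\bigcup_n A_n)=\sum_n\tau_a(s,A_n)$, and by monotonicity of partial sums
\[
  \bigl\{s\mid \textstyle\sum_n\tau_a(s,A_n)>r\bigr\}
  = \bigunion_{N}\bigunion_{\substack{q_0,\dots,q_{N-1}\in\Q\cap[0,1]\\ q_0+\cdots+q_{N-1}>r}} \bigcap_{i<N}\{s\mid \tau_a(s,A_i)>q_i\},
\]
which is a countable union/intersection of sets in $\sigma(\sem{\Logic})$, hence itself in $\sigma(\sem{\Logic})$. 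The step I expect to be the subtlest is closure under set-theoretic differences (equivalently, complements), which is where the absence of negation in $\Logic$ could worry one. The idea to overcome this is to observe that membership in $\mathcal{M}$ is equivalent to $s\mapsto\tau_a(s,A)$ being $(\sigma(\sem{\Logic}),\Borel([0,1]))$-measurable for every $a$; since $S=\sem{\top}\in\mathcal{M}$ so that $s\mapsto\tau_a(s,S)$ has the same measurability, the function $s\mapsto\tau_a(s,S\minus A)=\tau_a(s,S)-\tau_a(s,A)$ is also measurable into $[0,1]$, and therefore $\{s\mid \tau_a(s,S\minus A)>r\}\in\sigma(\sem{\Logic})$. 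Combining the three properties shows $\mathcal{M}$ is a $\lambda$-system, so $\sigma(\sem{\Logic})\sbq\mathcal{M}$, and together with the reverse inclusion built into the definition of $\mathcal{M}$ we conclude $\mathcal{M}=\sigma(\sem{\Logic})$, which is precisely the stability of $\sigma(\sem{\Logic})$.
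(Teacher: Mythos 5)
The paper does not actually prove this statement---it is imported verbatim from Danos et al.~\cite[Prop.~5.5]{coco}, whose argument (via their Lemma~5.4, that the $\sigma$-algebra generated by a stable $\pi$-system is stable) is essentially the same Dynkin-class argument you give. Your minimality direction by induction on formulas is correct and standard, and your overall plan for stability (the $\pi$--$\lambda$ theorem applied to $\mathcal{M}$, noting that $\sem{\Logic}$ is a $\pi$-system of elements of $\mathcal{M}$, and handling complements through the equivalence between $A\in\mathcal{M}$ and $\sigma(\sem{\Logic})$-measurability of $s\mapsto\tau_a(s,A)$) is sound.

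One step fails as literally written: the displayed identity for countable disjoint unions. If $\tau_a(s,A_0)=0$ while $\sum_n\tau_a(s,A_n)>r$, then $s$ belongs to the left-hand side but to no term of the right-hand side, since every tuple indexed by an initial segment $\{0,\dots,N-1\}$ imposes $\tau_a(s,A_0)>q_0$ with $q_0\in\Q\cap[0,1]$, which is unsatisfiable; so the two sets are not equal in general. The repair is immediate: either let the rational tuples range over arbitrary finite subsets of $\omega$ rather than initial segments, or---more economically---reuse the measurability reformulation you already invoke for the complement step: each $s\mapsto\tau_a(s,A_n)$ is $\sigma(\sem{\Logic})$-measurable, hence so is the pointwise sum $s\mapsto\sum_n\tau_a(s,A_n)=\tau_a\bigl(s,\bigcup_nA_n\bigr)$, and its superlevel sets therefore lie in $\sigma(\sem{\Logic})$. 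With that replacement the proof is complete.
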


\begin{corollary}\label{cor:logic-charac-event}
  $\Logic$ characterizes event bisimilarity, i.e.,
  ${\sim_\e}=\Rel(\sem{\Logic})$.
\end{corollary}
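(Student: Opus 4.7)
The plan is to leverage Theorem~\ref{thm:logical-characterization} directly: it identifies $\sigma(\sem{\Logic})$ as the smallest stable sub-$\sigma$-algebra of $\Sigma$, so the content of the corollary reduces to translating minimality of $\sigma(\sem{\Logic})$ into maximality of $\sim_\e$ via the anti-monotone operator $\Rel(\cdot)$.

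First I would record two elementary facts about $\Rel$. The first is anti-monotonicity: $\Lambda_1 \sbq \Lambda_2$ implies $\Rel(\Lambda_2) \sbq \Rel(\Lambda_1)$, which is immediate from the definition. The second is a closure property: for any family $\Lambda$, $\Rel(\Lambda) = \Rel(\sigma(\Lambda))$. The nontrivial direction follows from the standard observation that, for fixed $s,t \in S$, the family $\{A \sbq S \mid s \in A \Leftrightarrow t \in A\}$ is a $\sigma$-algebra; if it contains $\Lambda$, it contains $\sigma(\Lambda)$. In particular, $\Rel(\sem{\Logic}) = \Rel(\sigma(\sem{\Logic}))$.

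Now for the inclusion $\Rel(\sem{\Logic}) \sbq {\sim_\e}$: by Theorem~\ref{thm:logical-characterization}, $\sigma(\sem{\Logic})$ is a stable sub-$\sigma$-algebra, so $\Rel(\sigma(\sem{\Logic}))$ is, by Definition~\ref{def:event-bisimulation}, an event bisimulation. By the closure fact above, this equals $\Rel(\sem{\Logic})$, and any pair it contains is event bisimilar by definition of $\sim_\e$.

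Conversely, assume $s \sim_\e t$ and pick a witnessing stable sub-$\sigma$-algebra $\Lambda$ with $(s,t) \in \Rel(\Lambda)$. By the minimality clause of Theorem~\ref{thm:logical-characterization}, $\sigma(\sem{\Logic}) \sbq \Lambda$; anti-monotonicity then gives $(s,t) \in \Rel(\Lambda) \sbq \Rel(\sigma(\sem{\Logic})) = \Rel(\sem{\Logic})$. There is no serious obstacle here: the entire corollary is a bookkeeping exercise on top of Theorem~\ref{thm:logical-characterization}, and the only step that requires a moment's thought is the Dynkin-style verification that $\Rel(\Lambda) = \Rel(\sigma(\Lambda))$, which is routine.
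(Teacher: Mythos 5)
Your proof is correct and follows exactly the route the paper intends: the corollary is stated without proof as an immediate consequence of Theorem~\ref{thm:logical-characterization}, and your derivation (anti-monotonicity of $\Rel$, the observation that $\Rel(\Lambda)=\Rel(\sigma(\Lambda))$ since $\{A \mid s\in A \Leftrightarrow t\in A\}$ is a $\sigma$-algebra, then minimality of $\sigma(\sem{\Logic})$ among stable sub-$\sigma$-algebras) is precisely the bookkeeping being left implicit.
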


\subsection{External bisimulation}
\label{sec:lmp-external}

Up to this point, we have worked with bisimulations defined as relations within a single LMP or underlying space. 
Now we are interested in the case where we have two LMPs, $\lmp{S}$ and $\lmp{S}'$, and we want bisimulations between states of the respective processes.

G.~Bacci's PhD thesis \cite[Sect.~5.1]{bacci} introduces a definition of bisimulation between states of
generalized Markov processes, that is, processes where $\tau_a(s,\cdot)$ is a general measure on $(S,\Sigma)$ rather than a subprobability measure. 
This definition of bisimulation is similar to the internal one, but it requires the analogous concept of an “$R$-closed pair” when $R\subseteq S\times S'$ is a relation between two possibly different sets.

\begin{definition}\label{def:closed-pair}
	Let $R\subseteq S\times S'$ be a relation, and let $A\subseteq S$ and $A'\subseteq S'$. 
	The pair $(A,A')$ is called \emph{$R$-closed pair} if $R\cap(A\times S') = R\cap(S\times A')$.
\end{definition}

We denote with $R[A]$ the set $\{s'\in S' \mid \exists x\in A: x\mathrel{R}s'\}$ (the states in $S'$ related to some state in $A$).
Similarly, $R^{-1}[A']$ is defined as $\{s\in S \mid \exists x'\in A': s\mathrel{R}x'\}$. 
The following lemma gives us a reformulation of the definition of $R$-closed pairs, as well as a basic closure property it satisfies.

\begin{lemma}\label{lem:pair-equiv}
  \begin{enumerate}
  \item\label{item:in-equiv} The following are equivalent:
    \begin{itemize}
    \item
      $(A,A')$ is an $R$-closed pair;
    \item
      $R[A]\subseteq A'$ and $R^{-1}[A']\subseteq A$;
    \item
      $\forall (s,s')\in R: s\in A \Leftrightarrow s'\in A'$.
    \end{itemize}
  \item The family of $R$-closed pairs is closed under complementation
    and arbitrary unions and intersections in coordinates.
  \item\label{item:closed-pairs-antimonotonicity} If $R_0\subseteq
    R_1$, then every $R_1$-closed pair is also $R_0$-closed.
  \end{enumerate}
\end{lemma}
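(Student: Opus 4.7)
The plan is to reduce everything to the third (pointwise) characterization in part~(1), after which parts~(2) and~(3) fall out almost immediately.

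For part~(1), I would prove a cyclic chain of implications among the three conditions. The equivalence of the first and third bullets is essentially a rewriting: unpacking $(s,s')\in R\cap (A\times S')$ says $(s,s')\in R$ and $s\in A$, while $(s,s')\in R\cap (S\times A')$ says $(s,s')\in R$ and $s'\in A'$, so the equality of those sets amounts exactly to the biconditional in the third bullet. For the second bullet, I would observe that $R[A]\sbq A'$ means precisely that $(s,s')\in R$ and $s\in A$ imply $s'\in A'$, while $R^{-1}[A']\sbq A$ means the converse implication holds for the same generic pair in $R$; together these are exactly the biconditional. None of this requires more than unfolding definitions.

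For part~(2), I would invoke the third characterization from (1). Given a family $\{(A_i,A'_i)\}_i$ of $R$-closed pairs, for any $(s,s')\in R$ we have, for each $i$, $s\in A_i \Leftrightarrow s'\in A'_i$. Taking the disjunction (resp.\ conjunction) over $i$ yields $s\in \bigcup_i A_i \Leftrightarrow s'\in \bigcup_i A'_i$ (resp.\ with intersections), so $(\bigcup_i A_i,\bigcup_i A'_i)$ and $(\bigcap_i A_i,\bigcap_i A'_i)$ are $R$-closed. For complementation, if $(A,A')$ is $R$-closed, then negating both sides of $s\in A \Leftrightarrow s'\in A'$ gives $s\in S\sm A \Leftrightarrow s'\in S'\sm A'$, establishing that $(S\sm A,\,S'\sm A')$ is $R$-closed as well.

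For part~(3), I would use the second characterization from~(1). From $R_0\sbq R_1$ we get $R_0[A]\sbq R_1[A]$ and $R_0^{-1}[A']\sbq R_1^{-1}[A']$; so if $R_1[A]\sbq A'$ and $R_1^{-1}[A']\sbq A$, the same inclusions hold for $R_0$, and $(A,A')$ is $R_0$-closed.

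There is no real obstacle here: the whole lemma is a bookkeeping exercise around Definition~\ref{def:closed-pair}. The only small care needed is in part~(1) to make sure the three formulations are logically equivalent as stated (and not, e.g., one of them secretly being one-sided); once the symmetric pointwise biconditional is isolated as the ``canonical'' form, parts~(2) and~(3) become one-line consequences.
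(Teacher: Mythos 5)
Your proof is correct. The paper states Lemma~\ref{lem:pair-equiv} without proof, treating it as a routine consequence of Definition~\ref{def:closed-pair}, and your argument fills in exactly the intended bookkeeping: the pointwise biconditional $\forall (s,s')\in R\,(s\in A \Leftrightarrow s'\in A')$ is the right canonical form, from which closure under coordinatewise complements, unions and intersections, and the antimonotonicity in $R$ all follow immediately.
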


When the two sets are equal, the notions of $R$-closed pair and $R$-closed set coincide in the following sense.

\begin{lemma}\label{lem:closed-vs-closed-pair}
	If $R\subseteq S\times S$, the following statements hold:
	\begin{enumerate}
		\item $A$ is $R$-closed if and only if $(A,A)$ is an $R$-closed pair.
		\item\label{item:closed-pair-reflexive-case} If $R$ is reflexive and $(A,A')$ is an $R$-closed pair, then $A=A'$.
	\end{enumerate}
\end{lemma}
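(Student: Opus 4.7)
Both parts are essentially unpackings of Lemma~\ref{lem:pair-equiv}(\ref{item:in-equiv}), so my plan is to translate the definition of $R$-closed set into the language of closed pairs in each case.

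For (1), I would first observe that when $A' = A$, the second characterization in Lemma~\ref{lem:pair-equiv}(\ref{item:in-equiv}) becomes $R[A] \sbq A$ and $R^{-1}[A] \sbq A$. Spelled out, these two inclusions say: whenever $x \in A$ and $x \mathrel{R} s$ we have $s \in A$, and whenever $x \in A$ and $s \mathrel{R} x$ we have $s \in A$. Comparing this with the definition of $R$-closed set from the Preliminaries (which demands exactly that $\{s : \exists x \in A\; (x R s \vee s R x)\} \sbq A$), the two conditions are identical. So the equivalence in (1) follows by two short unfoldings.

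For (2), I would instead use the third characterization in Lemma~\ref{lem:pair-equiv}(\ref{item:in-equiv}): $(A,A')$ is an $R$-closed pair iff for all $(s,s') \in R$, $s \in A \Leftrightarrow s' \in A'$. Using reflexivity of $R$, for every $s \in S$ we have $(s,s) \in R$, and the biconditional applied at $(s,s)$ gives $s \in A \Leftrightarrow s \in A'$. Since this holds for all $s$, we conclude $A = A'$.

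\textbf{Main obstacle.} There really isn't one: both items are immediate from the reformulations already established in Lemma~\ref{lem:pair-equiv}. The only thing to be careful about is using the right equivalent form of ``$R$-closed pair'' in each case (the set-theoretic inclusions for (1), the pointwise biconditional for (2)), and noting explicitly in (1) that the defining condition for $R$-closedness naturally decomposes into an ``image'' part and a ``preimage'' part that correspond, respectively, to $R[A] \sbq A$ and $R^{-1}[A] \sbq A$.
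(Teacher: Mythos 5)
Your argument is correct and is exactly the intended one: the paper omits the proof of this lemma as immediate from the reformulations in Lemma~\ref{lem:pair-equiv}(\ref{item:in-equiv}), and your unfolding of the definition of $R$-closed set into the two inclusions $R[A]\sbq A$ and $R^{-1}[A]\sbq A$ for item (1), together with the pointwise biconditional applied to the diagonal pairs $(s,s)$ for item (2), is precisely that argument. No gaps.
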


If we have measurable spaces $(S,\Sigma)$ and $(S',\Sigma')$, we say that $(A,A')$ is a \emph{measurable pair} if $A\in \Sigma$ and $A'\in \Sigma'$.

\begin{definition}[{\cite[Def.~5.1.5]{bacci}}]\label{def:ext-bisimulation}
	An \emph{external bisimulation} (or $\times$-bisimulation) between two LMPs $\lmp{S}$ and $\lmp{S}'$ is a relation $R\subseteq S\times S'$ such that if $s\mathrel{R}s'$ and $(A,A')$ is a $R$-closed measurable pair, then 
	\[ \forall a\in L \; \tau_a(s,A)=\tau_a'(s',A').\]
	Two states $s\in S$ and $s'\in S'$ are externally bisimilar, denoted by $s \sim^{\times}_{\lmp{S},\lmp{S'}} s'$, if there exists an external bisimulation relating them. Two pointed LMPs are externally bisimilar if their initial states are externally bisimilar.
\end{definition}
Whenever possible, we simplify the notation and write $\sim^{\times}$
when the processes are understood from the context.

To take advantage of the information we have collected about internal bisimulations, we can consider relations between $\lmp{S}$ and $\lmp{S}'$ as relations on $\lmp{S}\oplus\lmp{S}'$. 
The following proposition proves that there is a correspondence
between $\times$-bisimulations and  internal bisimulations in the sum,
as long as the latter are reduced to subsets of the product $S\times S'$. 
We can specify this with the following notation: 
Given a relation $R$ on the sum $S\oplus S'$, we define its \emph{descent}:
\begin{equation}\label{eq:R_times}
	R_\times\defi \{(s,s')\mid \inl(s) \mathrel{R} \inr(s')\}\sbq S\times S'.  
\end{equation}

If now $R\sbq S\times S'$, we can lift it to a relation on $S\oplus S'$ as follows:
\begin{equation*}
	\liftrel{R}\defi \{(\inl(s),\inr(s'))\mid s\mathrel{R} s'\} 
	\sbq (S\oplus S')\times (S\oplus S').
\end{equation*}

\begin{prop}\label{prop:ext-equiv-oplus}
	\begin{enumerate}
		\item\label{item:ext-equiv-oplus-closed} Let $R$ be a relation on $S\oplus S'$. 
		If $A\oplus A'$ is $R$-closed, then the pair $(A,A')$ is $R_\times$-closed.
		Conversely, if $R\sbq S\times S'$ and the pair $(A,A')$ is $R$-closed, then $A\oplus A'$ is $\liftrel{R}$-closed.
		\item\label{item:ext-equiv-oplus-bisim} If $R\sbq S\times S'$ is a $\times$-bisimulation, then $\liftrel{R}$ is an internal bisimulation on $\lmp{S}\oplus \lmp{S'}$. 
		Conversely, if $R$ is an internal bisimulation on $\lmp{S}\oplus \lmp{S'}$ such that $R\sbq \liftrel{R_\times}$, then $R_\times$ is a $\times$-bisimulation.    
	\end{enumerate}
\end{prop}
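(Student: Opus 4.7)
The plan is to prove the two parts in sequence, exploiting the fact that the operations $R \mapsto \widehat{R}$ and $R \mapsto R_\times$ are inverses on cross-relations, and dually that measurable sets of the form $A \oplus A'$ correspond to measurable pairs $(A, A')$. Once part (1) establishes this correspondence at the level of closedness, part (2) becomes translation between the two kernel conditions.

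For part (\ref{item:ext-equiv-oplus-closed}), forward direction: given that $A \oplus A'$ is $R$-closed, apply Lemma~\ref{lem:pair-equiv}(\ref{item:in-equiv}) — it suffices to show $s \in A \iff s' \in A'$ for each $(s,s') \in R_\times$. Unpacking, $(s,s') \in R_\times$ means $\inl(s) \mathrel{R} \inr(s')$, and using the bidirectional $R$-closure of $A \oplus A'$ together with $\inl(s) \in A \oplus A' \iff s \in A$ (and analogously on the right) gives the equivalence. Reverse direction: given $R \sbq S\times S'$ and $(A, A')$ an $R$-closed pair, note that $\widehat{R}$ only relates elements of the form $(\inl(s), \inr(s'))$ with $s\mathrel{R}s'$, so checking $\widehat{R}$-closedness of $A \oplus A'$ splits into two symmetric cases ($u = \inl(s) \in A\oplus A'$ with $\inl(s)\widehat{R}v$, and $u = \inr(s') \in A\oplus A'$ with $v\widehat{R}\inr(s')$); each case follows from one of the two inclusions $R[A] \sbq A'$ and $R^{-1}[A'] \sbq A$ supplied by Lemma~\ref{lem:pair-equiv}(\ref{item:in-equiv}).

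For part (\ref{item:ext-equiv-oplus-bisim}), forward direction: pick $u \mathrel{\widehat{R}} v$ and a measurable $\widehat{R}$-closed set $C \sbq S\oplus S'$. Necessarily $u = \inl(s)$, $v = \inr(s')$ with $s\mathrel{R}s'$, and $C = A \oplus A'$ with $A \in \Sigma$, $A' \in \Sigma'$. By part (\ref{item:ext-equiv-oplus-closed}) applied to $\widehat{R}$ (and the trivial identity $(\widehat{R})_\times = R$), the pair $(A,A')$ is $R$-closed; since $R$ is a $\times$-bisimulation, $\tau_a(s,A) = \tau'_a(s',A')$, which translates back through the definition of $\tau^\oplus_a$ to $\tau^\oplus_a(u, C) = \tau^\oplus_a(v, C)$. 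Reverse direction: take $(s,s') \in R_\times$ and an $R_\times$-closed measurable pair $(A,A')$. By part~(\ref{item:ext-equiv-oplus-closed}), $A \oplus A'$ is $\widehat{R_\times}$-closed. Here is the one place the hypothesis $R \sbq \widehat{R_\times}$ is used: by antimonotonicity of closedness under inclusion of relations, $A \oplus A'$ is also $R$-closed. Since it is measurable and $\inl(s) \mathrel{R} \inr(s')$ by definition of $R_\times$, applying the internal bisimulation condition and the kernel equations~(\ref{eq:equiv-oplus-kernels}) yields $\tau_a(s,A) = \tau'_a(s',A')$, as desired.

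The only real subtlety to watch is the bidirectional nature of $R$-closedness for sets versus the two-inclusion form of an $R$-closed pair, which is what forces the two-case split in the reverse direction of (\ref{item:ext-equiv-oplus-closed}); everything else is essentially bookkeeping between $\widehat{\cdot}$ and $(\cdot)_\times$ and the hypothesis $R \sbq \widehat{R_\times}$, which precisely rules out internal bisimulations on the sum that identify states within a single summand (for which no $\times$-bisimulation counterpart could exist).
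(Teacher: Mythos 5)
Your proof is correct, and since the paper states Proposition~\ref{prop:ext-equiv-oplus} without proof, your argument supplies exactly the routine verification that was left to the reader: unfolding the correspondence between $\widehat{\cdot}$ and $(\cdot)_\times$, using Lemma~\ref{lem:pair-equiv}(\ref{item:in-equiv}) for the closedness translations and the kernel identity~(\ref{eq:equiv-oplus-kernels}) for the bisimulation conditions. You also correctly isolate the single use of the hypothesis $R\sbq \liftrel{R_\times}$ (antimonotonicity of closedness), which is the only non-bookkeeping point in the whole argument.
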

The hypothesis $R\sbq \liftrel{R_\times}$ in the second item is another way
of stating that $R$ does not relate points within the same summand.

As we intended, we can now leverage this correspondence (modulo types) and deduce properties of external bisimulations using the known information about internal ones. 
For example, we have an alternative proof of \cite[Prop.~5.1.7]{bacci}:

\begin{corollary}
  The union of a family of $\times$-bisimulations is a $\times$-bisimulation. 
  In particular, $\sim^\times$ is.
\end{corollary}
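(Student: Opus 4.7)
The plan is to reduce the statement to the corresponding fact for internal state bisimulations (Proposition~\ref{prop:union-bisimulation}) via the correspondence established in Proposition~\ref{prop:ext-equiv-oplus}.

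Let $\{R_i\}_{i\in I}$ be a family of $\times$-bisimulations between $\lmp{S}$ and $\lmp{S}'$, and set $R\defi \bigcup_{i\in I} R_i \sbq S\times S'$. First I would observe the purely set-theoretic identity $\liftrel{R}=\bigcup_i \liftrel{R_i}$, which is immediate from the definition of the lift, since both sides consist of pairs $(\inl(s),\inr(s'))$ such that $s\mathrel{R_j}s'$ for some $j\in I$. By Proposition~\ref{prop:ext-equiv-oplus}(\ref{item:ext-equiv-oplus-bisim}), each $\liftrel{R_i}$ is an internal state bisimulation on $\lmp{S}\oplus\lmp{S'}$, so by Proposition~\ref{prop:union-bisimulation} the union $\liftrel{R}$ is itself an internal state bisimulation on $\lmp{S}\oplus\lmp{S'}$.

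Next I would apply the converse direction of Proposition~\ref{prop:ext-equiv-oplus}(\ref{item:ext-equiv-oplus-bisim}) to conclude that $(\liftrel{R})_\times$ is a $\times$-bisimulation, once I check the side condition $\liftrel{R}\sbq \liftrel{(\liftrel{R})_\times}$. But $\liftrel{R}$ by construction only relates points in different summands, and unwinding the definitions gives $(\liftrel{R})_\times=R$, so $\liftrel{(\liftrel{R})_\times}=\liftrel{R}$ and the inclusion holds trivially. Hence $R=\bigcup_i R_i$ is a $\times$-bisimulation.

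The ``in particular'' part then follows at once, since $\sim^\times$ is by Definition~\ref{def:ext-bisimulation} the union of the family of all $\times$-bisimulations between $\lmp{S}$ and $\lmp{S}'$. There is no real obstacle here: the only thing to be careful about is keeping the types straight when shuttling between $R\sbq S\times S'$ and $\liftrel{R}\sbq (S\oplus S')\times (S\oplus S')$, and verifying that the lift commutes with unions and that the side hypothesis of Proposition~\ref{prop:ext-equiv-oplus}(\ref{item:ext-equiv-oplus-bisim}) is automatic for relations of the form $\liftrel{R}$.
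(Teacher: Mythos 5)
Your proposal is correct and follows essentially the same route as the paper's proof: lift each $R_i$ to the sum via Proposition~\ref{prop:ext-equiv-oplus}(\ref{item:ext-equiv-oplus-bisim}), use Proposition~\ref{prop:union-bisimulation} on the union of the lifts, check that the union only relates states in different summands, and descend back. The paper states these steps more tersely, but the argument is identical.
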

\begin{proof}
	Given a family $\Fam$ of $\times$-bisimulations between $\lmp{S}$ and $\lmp{S}'$, by Proposition~\ref{prop:ext-equiv-oplus}(\ref{item:ext-equiv-oplus-bisim}), we only need to verify that $\bigcup\{\liftrel{R}\mid R\in\Fam\}$ is an internal bisimulation on $\lmp{S}\oplus\lmp{S}'$ that only relates states in separate processes. 
	This is straightforward from Proposition~\ref{prop:union-bisimulation} and the definition of $\liftrel{R}$.
\end{proof}

We will use $(\sim_\sfs)_\times$ as an abbreviation of $(\sim_{\sfs,\lmp{S}\oplus\lmp{S}'})_\times$.

\begin{corollary}\label{cor:ext-bis-sbq-R-oplus}
	${\sim}^\times\sbq ({\sim}_\sfs)_\times$.
\end{corollary}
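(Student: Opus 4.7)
The plan is to leverage the correspondence between external bisimulations and internal bisimulations on the sum that was established in Proposition~\ref{prop:ext-equiv-oplus}. Specifically, the key observation is that item~(\ref{item:ext-equiv-oplus-bisim}) of that proposition already tells us that the lift $\liftrel{R}$ of any $\times$-bisimulation $R$ is an internal state bisimulation on the sum $\lmp{S}\oplus\lmp{S'}$, so all the hard work has been done upstream.

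Concretely, I would start by fixing an arbitrary pair $(s,s')\in {\sim}^\times$. By definition of $\sim^\times$, there exists some $\times$-bisimulation $R\subseteq S\times S'$ with $s\mathrel{R}s'$. Then I would apply Proposition~\ref{prop:ext-equiv-oplus}(\ref{item:ext-equiv-oplus-bisim}) to conclude that $\liftrel{R}$ is an internal state bisimulation on $\lmp{S}\oplus\lmp{S'}$. Since $s\mathrel{R}s'$, by the definition of $\liftrel{R}$ we have $\inl(s)\mathrel{\liftrel{R}}\inr(s')$, and therefore $\inl(s)\sim_{\sfs,\lmp{S}\oplus\lmp{S'}}\inr(s')$. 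By the definition of the descent in \eqref{eq:R_times}, this is exactly the statement $(s,s')\in ({\sim}_\sfs)_\times$, completing the proof.

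There is essentially no obstacle here: the proof reduces to unfolding the definitions of $\sim^\times$, $\liftrel{R}$, and $(\sim_\sfs)_\times$, and applying the already-proved half of Proposition~\ref{prop:ext-equiv-oplus}. The technical content — that the lift of a $\times$-bisimulation is an internal bisimulation — has already been absorbed into the proposition via Lemma~\ref{lem:sigma'_restriction} and Proposition~\ref{prop:ext-equiv-oplus}(\ref{item:ext-equiv-oplus-closed}), so the corollary can be presented as a short one-paragraph argument.
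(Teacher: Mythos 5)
Your argument is correct and is essentially the paper's intended derivation: the corollary is presented as an immediate consequence of Proposition~\ref{prop:ext-equiv-oplus}(\ref{item:ext-equiv-oplus-bisim}), lifting a witnessing $\times$-bisimulation to an internal state bisimulation on the sum and reading off the descent. Nothing is missing.
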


This Corollary allows us to obtain an external version of
Lemma~\ref{lem:zigzag-state-image}:

\begin{prop}\label{prop:ext-bisim-any-cospan}
  Assume two zigzag morphisms $\lmp{S} \xrightarrow{f} \lmp{T}
  \xleftarrow{g} \lmp{S}'$, $s\in \lmp{S}$, $s'\in \lmp{S}'$, and
  $s\sim^\times s'$. Then $f(s) \sim_{\sfs,\lmp{T}} g(s')$.
\end{prop}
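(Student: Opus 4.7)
The plan is to route the proof through the coproduct $\lmp{S}\oplus \lmp{S}'$, on which both the external hypothesis and the zigzag data $f,g$ can be combined into a single internal statement. Concretely, my strategy is: first convert $s\sim^\times s'$ into an internal state bisimilarity on the sum using Corollary~\ref{cor:ext-bis-sbq-R-oplus}; then use the universal property of the coproduct to produce a zigzag $[f,g]: \lmp{S}\oplus\lmp{S}' \to \lmp{T}$; and finally transport the bisimilarity forward via Lemma~\ref{lem:zigzag-state-image}.

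Step one is immediate: Corollary~\ref{cor:ext-bis-sbq-R-oplus} gives $\inl(s)\sim_{\sfs,\lmp{S}\oplus \lmp{S}'} \inr(s')$. Step two is to define the cotupling $h\defi [f,g]$ by $h(\inl(x))\defi f(x)$ and $h(\inr(x'))\defi g(x')$, and verify it is a zigzag from $\lmp{S}\oplus \lmp{S}'$ to $\lmp{T}$. Measurability follows since for $B\in \Sigma_T$ one has $h^{-1}[B] = f^{-1}[B]\oplus g^{-1}[B]\in \Sigma\oplus\Sigma'$. The zigzag condition is checked componentwise: for $x\in S$,
\[
\tau^\oplus_a(\inl(x), h^{-1}[B]) = \tau_a(x, f^{-1}[B]) = \tau^{\lmp{T}}_a(f(x), B) = \tau^{\lmp{T}}_a(h(\inl(x)), B),
\]
and symmetrically for $\inr(x')$ using that $g$ is a zigzag. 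Step three is then a direct application of Lemma~\ref{lem:zigzag-state-image} to the zigzag $h$: from $\inl(s)\sim_{\sfs,\lmp{S}\oplus \lmp{S}'}\inr(s')$ we conclude $h(\inl(s))\sim_{\sfs,\lmp{T}} h(\inr(s'))$, i.e., $f(s)\sim_{\sfs,\lmp{T}} g(s')$.

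There is essentially no deep obstacle here; the only point requiring mild care is the verification that the cotupling $[f,g]$ is actually a zigzag morphism, which amounts to a bookkeeping computation using Definition~\ref{def:sum-LMP} and the definition of $h^{-1}$ on sums. Once that is in hand, the chain external~$\rightarrow$~internal-on-sum~$\rightarrow$~internal-on-$\lmp{T}$ closes the argument.
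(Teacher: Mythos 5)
Your proposal is correct and follows the same route as the paper's proof: pass from $s\sim^\times s'$ to $\inl(s)\sim_{\sfs,\lmp{S}\oplus\lmp{S}'}\inr(s')$ via Corollary~\ref{cor:ext-bis-sbq-R-oplus}, obtain the mediating morphism $[f,g]$ from the coproduct, and push forward with Lemma~\ref{lem:zigzag-state-image}. The only difference is that the paper invokes the universal property of the coproduct abstractly where you verify the cotupling is a zigzag by hand; both are fine.
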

\begin{proof}
  Consider the diagram $\lmp{S} \xrightarrow{\inl} \lmp{S}\oplus \lmp{S}' \xleftarrow{\inr} \lmp{S}'$ in $\LMP$ and recall that this is the coproduct of $\lmp{S}$ and $\lmp{S}'$ in this category.
  Thus, we know that there exists a morphism $\alpha: \lmp{S}\oplus\lmp{S}' \to \lmp{T}$ such that $\alpha \circ \inl=f$ and $\alpha\circ \inr = g$.
  By Corollary~\ref{cor:ext-bis-sbq-R-oplus} $s\sim^\times s'$ implies $\inl(s)\sim_{\sfs,\lmp{S}\oplus \lmp{S}'} \inr(s')$, and Lemma~\ref{lem:zigzag-state-image} guarantees that $f(s)= \alpha(\inl(s))\sim_{\sfs,\lmp{T}} \alpha(\inr(s'))=g(s')$.
\end{proof}

The question of whether $({\sim}_\sfs)_\times\sbq {\sim}^\times$ will resurface in a different context later on. 

In the case where $\lmp{S}=\lmp{S}'$, we can directly compare external and internal bisimulations because they are of the same type. 
In this situation, Lemma~\ref{lem:closed-vs-closed-pair} guarantees that every $\times$-bisimulation is an internal bisimulation, and both concepts coincide for reflexive relations.

\begin{prop}\label{prop:same-LMP-ext-bisi-equal-internal}
	If $\lmp{S}=\lmp{S}'$, the equalities ${\sim^\times}={\sim_\sfs} = ({\sim}_\sfs)_\times$ hold.
\end{prop}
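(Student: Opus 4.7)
The plan is to prove the chain of equalities ${\sim^\times} \subseteq {\sim_\sfs} \subseteq ({\sim}_\sfs)_\times \subseteq {\sim^\times}$, leveraging Lemma~\ref{lem:closed-vs-closed-pair} to translate between $R$-closed sets and $R$-closed pairs when everything lives on the same space, and Proposition~\ref{prop:state-bisim-in-sum} to handle the descent from $\lmp{S}\oplus\lmp{S}$ back to $\lmp{S}$.

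For ${\sim^\times} \subseteq {\sim_\sfs}$: suppose $s \sim^\times t$, witnessed by a $\times$-bisimulation $R \subseteq S \times S$. I claim $R$ is itself an internal state bisimulation. Indeed, if $C \in \Sigma(R)$ then $C$ is $R$-closed, so by Lemma~\ref{lem:closed-vs-closed-pair}(1) the diagonal pair $(C,C)$ is an $R$-closed measurable pair, and the $\times$-bisimulation property yields $\tau_a(s,C) = \tau_a(t,C)$ for all $a$, as required.

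For ${\sim_\sfs} \subseteq {\sim^\times}$: first observe that $\sim_\sfs$ is reflexive, since $\id_S$ is trivially a state bisimulation (for any $C\in\Sigma$, $\tau_a(s,C) = \tau_a(s,C)$), and unions of state bisimulations remain so by Proposition~\ref{prop:union-bisimulation}. Now take $R = {\sim_\sfs}$ and let $(A,A')$ be an $R$-closed measurable pair; by reflexivity and Lemma~\ref{lem:closed-vs-closed-pair}(\ref{item:closed-pair-reflexive-case}) we have $A = A'$, and by Lemma~\ref{lem:closed-vs-closed-pair}(1) again $A$ is $R$-closed, i.e.\ $A \in \Sigma(R)$. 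Then if $s \sim_\sfs t$, the state-bisimulation property of $\sim_\sfs$ gives $\tau_a(s,A) = \tau_a(t,A) = \tau_a(t,A')$, showing that $\sim_\sfs$ is a $\times$-bisimulation and thus $\sim_\sfs \subseteq \sim^\times$.

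Finally, for ${\sim_\sfs} = ({\sim}_\sfs)_\times$: unfolding the definition of descent~\eqref{eq:R_times} gives $(s,t)\in({\sim}_\sfs)_\times \iff \inl(s)\sim_{\sfs,\lmp{S}\oplus\lmp{S}}\inr(t)$. By Proposition~\ref{prop:state-bisim-in-sum}, the right-hand side equals $\inl(s)\mathrel{({\sim}_\sfs)^+}\inr(t)$, and by the definition~\eqref{eq:complete-lift} of the complete lifting this is precisely $s \sim_\sfs t$. The three equalities compose to the proposition. No step is particularly hard; the only subtlety is remembering that reflexivity of $\sim_\sfs$ is what collapses the notion of $R$-closed pair into that of $R$-closed set, so that the internal and external definitions become syntactically the same condition.
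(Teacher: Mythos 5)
Your proof is correct and follows essentially the same route as the paper's: Lemma~\ref{lem:closed-vs-closed-pair} to identify $R$-closed pairs with $R$-closed sets (the paper invokes this in the sentence immediately preceding the proposition, noting that every $\times$-bisimulation on a single space is an internal one and that the two notions coincide for reflexive relations), and Proposition~\ref{prop:state-bisim-in-sum} together with the identity ${\sim}_\sfs = (({\sim_\sfs})^+)_\times$ for the last equality. You merely spell out the details more explicitly and observe that reflexivity of $\sim_\sfs$ suffices where the paper cites that it is an equivalence relation.
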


\begin{proof}
	As $\sim_\sfs$ is an equivalence relation, it is also a $\times$-bisimulation, therefore ${\sim_\sfs}\sbq {\sim^\times}$. 
	For the converse inclusion, the $\times$-bisimulation $\sim^\times$ is also an internal bisimulation, and thus included in $\sim_\sfs$.
	The last equality follows from Proposition~\ref{prop:state-bisim-in-sum} and the identity ${\sim}_\sfs = (({\sim_\sfs})^+)_\times$.
\end{proof}

An immediate consequence is that in this case of a single process, $\sim^\times$ is an equivalence relation \cite[Thm.~8]{Bacci2014BisimulationOM}. 
This information, combined with that provided by Lemma~\ref{lem:state-bisimulation-in-sum} and Proposition~\ref{prop:ext-equiv-oplus}, when specialized to the case $\lmp{S}=\lmp{S}'$, shows that there are no major differences between internal, external, and internal-in-sum bisimulations, as long as they meet some straightforward regularity conditions. 
Remember that every state bisimulation on an LMP $\lmp{S}$ is included in a state bisimulation that is also an equivalence relation \cite[Prop.~4.12]{coco}. %

We comment briefly on zigzag morphisms here. As expected, these functions provide a way to obtain external bisimulations.

\begin{lemma} \label{lem:graph(f)-bisim}
	If $f:\lmp{S}\to \lmp{S'}$ is a zigzag morphism, then $\mathrm{graph}(f)=\{(s,f(s))\mid s \in S\}$ is a $\times$-bisimulation. Conversely, if $f$ is a measurable, surjective function, and $\mathrm{graph}(f)$ is a $\times$-bisimulation, then $f$ is a zigzag morphism.
\end{lemma}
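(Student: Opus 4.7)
Write $R \defi \mathrm{graph}(f)$. The key observation underlying both directions is that, by Lemma~\ref{lem:pair-equiv}(\ref{item:in-equiv}), a pair $(A,A')\sbq S\times S'$ is $R$-closed iff $R[A]\sbq A'$ and $R^{-1}[A']\sbq A$, which for the graph relation translates to $f[A]\sbq A'$ and $f^{-1}[A']\sbq A$. I will exploit this translation to connect $R$-closed pairs with preimages under $f$.

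\textbf{Forward direction.} Assume $f$ is zigzag and let $(A,A')$ be an $R$-closed measurable pair. From the reformulation above, $f^{-1}[A']\sbq A$ and $A\sbq f^{-1}[A']$ (the latter from $f[A]\sbq A'$), so $A=f^{-1}[A']$. Since any element of $R$ has the form $(s,f(s))$, the zigzag condition immediately yields
\[
  \tau_a(s,A)=\tau_a(s,f^{-1}[A'])=\tau'_a(f(s),A'),
\]
which is exactly what is required for $R$ to be a $\times$-bisimulation.

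\textbf{Converse direction.} Assume $f$ is measurable, surjective and $R$ is a $\times$-bisimulation. Given $a\in L$, $s\in S$, and $B\in\Sigma'$, consider the pair $(f^{-1}[B],B)$. It is measurable by measurability of $f$. To see it is $R$-closed, I use again Lemma~\ref{lem:pair-equiv}(\ref{item:in-equiv}): $R^{-1}[B]=f^{-1}[B]$, and $R[f^{-1}[B]]=f[f^{-1}[B]]\sbq B$ (surjectivity actually upgrades this inclusion to equality, but only $\sbq$ is needed). Since $(s,f(s))\in R$, applying the $\times$-bisimulation condition to this pair gives $\tau_a(s,f^{-1}[B])=\tau'_a(f(s),B)$, which is the zigzag condition.

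\textbf{Expected obstacle.} There is no deep obstacle; the whole proof is an exercise in translating between the ``relational'' formulation of closed pairs and the ``functional'' language of preimages. The only point that deserves a line of explicit justification is the identity $A=f^{-1}[A']$ in the forward direction, where both inclusions from Lemma~\ref{lem:pair-equiv}(\ref{item:in-equiv}) are needed. Surjectivity in the converse is what guarantees the natural bijective correspondence between measurable sets of $S'$ and (a subfamily of) measurable sets of $S$, but as noted, for the bare inequality used in the computation one could dispense with it.
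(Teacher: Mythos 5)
Your proof is correct and is exactly the intended argument (the paper states this lemma without an explicit proof): identifying $\mathrm{graph}(f)$-closed measurable pairs with pairs of the form $(f^{-1}[A'],A')$ via Lemma~\ref{lem:pair-equiv}(\ref{item:in-equiv}) is the right reduction, after which both directions are immediate from the zigzag condition. Your side remark that surjectivity is never actually used in the converse is also accurate and consistent with the paper's stated policy of dropping surjectivity from the definition of zigzag morphisms.
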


With this lemma, we recover the following result from
\cite[Prop.~3.5.3]{Desharnais}, where it is stated in a slightly
different way involving “$\oplus$-bisimulations”, to be reviewed in
Section~\ref{sec:oplus-bisimilarity}.
\begin{corollary}\label{coro:prop353-Desharnais}
	If $f:\lmp{S}\to \lmp{S'}$ is a zigzag morphism, then $s\sim^{\times}f(s)$.
\end{corollary}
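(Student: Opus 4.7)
The plan is to appeal directly to Lemma~\ref{lem:graph(f)-bisim}, which furnishes a ready-made $\times$-bisimulation out of any zigzag morphism. Concretely, given a zigzag $f:\lmp{S}\to\lmp{S'}$, the forward direction of that lemma tells us that $\mathrm{graph}(f)\sbq S\times S'$ is a $\times$-bisimulation between $\lmp{S}$ and $\lmp{S'}$.

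Next, for each $s\in S$ the pair $(s,f(s))$ belongs to $\mathrm{graph}(f)$ by construction. Therefore, by Definition~\ref{def:ext-bisimulation} of external bisimilarity (the existence of \emph{some} $\times$-bisimulation relating the two states), we conclude $s\sim^\times f(s)$, which is precisely the statement.

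There is essentially no obstacle here: the whole content of the corollary is packed into the nontrivial direction of Lemma~\ref{lem:graph(f)-bisim}, so the proof reduces to a one-line invocation. If one wished to spell out why $\mathrm{graph}(f)$ is a $\times$-bisimulation (in case the reader skipped the lemma), the key point is that for an $\mathrm{graph}(f)$-closed measurable pair $(A,A')$ we have $A=f^{-1}[A']$ (from Lemma~\ref{lem:pair-equiv}(\ref{item:in-equiv})), and then the zigzag condition gives $\tau_a(s,A)=\tau_a(s,f^{-1}[A'])=\tau'_a(f(s),A')$ for every $a\in L$. But within this corollary, simply citing the lemma suffices.
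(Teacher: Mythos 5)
Your proof is correct and follows exactly the route the paper intends: the corollary is stated immediately after Lemma~\ref{lem:graph(f)-bisim} precisely so that it follows by observing that $\mathrm{graph}(f)$ is a $\times$-bisimulation containing $(s,f(s))$. The optional unpacking of why $\mathrm{graph}(f)$ is a $\times$-bisimulation is also accurate.
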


\subsubsection*{Mutual transitivities}

The question about transitivity for different notions of bisimilarity is not generally a straightforward one. 
For example, in the case of internal bisimulation of pointed LMPs as in \cite{Desharnais}, a logical characterization in analytic spaces is provided, which solves this question. 
In \cite[Prop.~7.12]{doi:10.1142/p595}, a proof of transitivity is given for the modified definition of $\oplus$-bisimilarity that we will see in Section~\ref{sec:oplus-bisimilarity}. 
For the categorical definitions that we will see in Section~\ref{sec:catsp-bisimulations}, the same question requires highly technical tools to be answered only in certain cases.

We can ask a more general question about the relationship between one
notion of bisimulation and the composition of potentially different
ones. The first known fact of this sort involving external
bisimilarity concerns the “$z$-closure” of a relation. 
A relation $R\sbq S\times S'$ is \emph{$z$-closed} if, for all $x,y\in S$ and $x',y'\in S'$, $(x,x'),(y,x'),(y,y')\in R \implies (x,y')\in R$. 
The \emph{$z$-closure} of a relation is the smallest $z$-closed relation that contains it. 
In \cite[Lem.~5.3.5]{bacci} it is proven 
that if $R$ is a $\times$-bisimulation between $\lmp{S}$ and $\lmp{S}'$, then its $z$-closure is also a $\times$-bisimulation.
From this, it can be deduced that $\sim^\times$ is $z$-closed: If $x\sim^\times x'$, $y\sim^\times x'$, and $y\sim^\times y'$, then $x\sim^\times y'$.

One interesting question regarding transitivity involves three LMPs,
as in the following statement:
\begin{equation}\label{eq:times-transitivity}
	s \sim^\times_{\lmp{S}_0,\lmp{S}_1} s' \;\wedge \; 
	s'\sim^\times_{\lmp{S}_1,\lmp{S}_2} s'' \implies 
	s\sim^\times_{\lmp{S}_0,\lmp{S}_2} s''.
\end{equation}

The structures defined in \cite[Exm.~1]{Pedro20111048} provide a counterexample to \eqref{eq:times-transitivity}. We reproduce it below with some modifications in the notation, as it will be useful on other occasions.

\begin{example}\label{exm:ext-bisim-not-transitive}
	We consider the LMP $\lmp{U}$ from Example~\ref{exm:lmp-U} and the following modifications:
	\begin{align*}
		\lmp{U}_s&=(U\setminus\{t\},\Upsilon\rest(U\setminus\{t\}),\{\tau_a\mid a\in L\}), \\
		\lmp{U}_t&=(U\setminus\{s\},\Upsilon\rest(U\setminus\{s\}),\{\tau_a\mid a\in L\}),\\
		\lmp{T}'&=((U\setminus \{s,t\})\cup\{t'\},\sigma(\Borel(\Inter)\cup \Power(\{t',x\})),\{\bar{\tau}_a\mid a\in L\}).
	\end{align*}
	Here, we understand that the kernels in $\lmp{U}_s$ and $\lmp{U}_t$ are the appropriate restrictions. The kernels $\bar{\tau}_a$ coincide with $\tau_a$ if $a\in \Inter\cap \Q$, and for $E\in \sigma(\Borel(\Inter)\cup\Power(\{t',x\}))$,
	\[\bar{\tau}_\infty(t',E)= \leb(E), \qquad \bar{\tau}_\infty(r,E)= 0 \text{ for } r\neq t'.\]
	The fact that the non-measurable set $V$ is not available in $\lmp{T}'$ allows for the bisimilarity $s\sim^\times_{\lmp{U}_s,\lmp{T}'} t'$. Additionally, it is clear that $t'\sim^\times_{\lmp{T}',\lmp{U}_t} t$. However, $s \nsim^\times_{\lmp{U}_s,\lmp{U}_t} t$. 
	If they were bisimilar, $\inl(s)$ and $\inr(t)$ would be state bisimilar in $\lmp{U}_s\oplus\lmp{U}_t$, and the proof that they are not closely resembles the one given in Example~\ref{exm:state-in-U}, where the state bisimilarity is explicitly calculated.
\end{example}

If two LMPs coincide in the implication \eqref{eq:times-transitivity}, we know that $\sim_{\sfs,\lmp{S}}$ and $\sim^\times_{\lmp{S},\lmp{S}}$ coincide; therefore, the question can be understood as a connection between the internal bisimilarity on one of them and the external bisimilarity between the two.
If $\lmp{S}_0=\lmp{S}_2$, we obtain the implication $x\sim^\times x' \; \wedge \; y\sim^\times x' \implies x\sim_\sfs y$.
It is not hard to see that this particular case also fails; a counterexample involves the LMP $\lmp{U}_B$ (Example~\ref{exm:zigzag-not-state-domain}) obtained by modifying the $\sigma$-algebra of $\lmp{U}$ and allowing only Borel sets in the interval. 
Between $\lmp{U}$ and $\lmp{U}_B$, it holds that $s\sim^\times s$ and $t\sim^\times s$, but $s \nsim_\sfs t$ in $\lmp{U}$. 
The failure of this property is reasonable because an external bisimulation does not impose conditions on what happens within one of the two LMPs.

The third type of transitivity corresponds to the case $\lmp{S}_0=\lmp{S}_1$, namely,
\begin{question}\label{question:transitivity}
	$y\sim_\sfs x \; \wedge \; x\sim^\times x'\implies 
	y\sim^\times x'$.
\end{question} 

One attempt to prove such a property proceeds by considering the direct sum $\lmp{S}\oplus \lmp{S}'$. 
According to Proposition~\ref{prop:ext-equiv-oplus}(\ref{item:ext-equiv-oplus-bisim}), $\liftrel{{\sim}^\times}$ is an internal state bisimulation on $\lmp{S}\oplus \lmp{S}'$. 
On the other hand, we recall that the left lift %
$(\sim_{\sfs,\lmp{S}})_l$ is also a state bisimulation on $\lmp{S}\oplus \lmp{S}'$. 
Furthermore, it is easy to see that the smallest equivalence relation containing these two internal bisimulations is an internal bisimulation too.
We could conclude the proof if we show that its descent %
is a $\times$-bisimulation. 
However, the following example serves to discard the possibility of proving that if $R$ is an internal bisimulation on $\lmp{S} \oplus\lmp{S}'$ then $R_\times$ is a $\times$-bisimulation.

\begin{example}\label{exm:restr-not-bisim}
	Let $\lmp{S}$ be the discrete process consisting of two states, $x$ and $y$, with a single transition from $x$ to $y$ with probability 1. 
	Let $\lmp{S}'$ be a copy of $\lmp{S}$ with an additional third isolated state.
	These processes can be represented as in the diagrams
        \begin{center}
	  \begin{tabular}{rc@{\hspace{5em}}rr}
	    $\lmp{S}$: & $x \longrightarrow y$  & $\lmp{S}'$: & $x'
            \longrightarrow y'$\\
            &  &  & \rule{0em}{3ex} $z'$ 
          \end{tabular}
        \end{center}
	Consider the following relation $R$ on $S\oplus S'$ and its descent $R_\times$:
	\begin{align*}
		R &= \{(\inl(x),\inr(x')), (\inl(y),\inr(z')), (\inr(y'),\inr(z'))\}, \\
		R_\times &= \{(x,x'), (y,z')\}.
	\end{align*} 
	Then, $R$ is an internal bisimulation on $\lmp{S}\oplus \lmp{S}'$, since the only nontrivial $R$-closed sets are $\{\inl(x), \inr(x')\}$ and $\{ \inl(y), \inr(y'), \inr(z')\}$.
	However, $R_\times$ is not a $\times$-bisimulation between $\lmp{S}$ and $\lmp{S}'$, because the pair $(\emptyset,\{y'\})$ is $R$-closed measurable, but $\tau(x,\emptyset)= 0\neq 1 = \tau'(x',\{y'\})$.
\end{example}

Note that in this example, the relation $R$ is not 
an equivalence relation.
This raises the following question:

\begin{question}\label{question:equiv-int-sum-implies-ext}
  Suppose $R$ is an equivalence and an internal bisimulation on the sum. %
  Does there exist a $\times$-bisimulation that contains $R_\times$?
\end{question}

By Proposition~\ref{prop:ext-equiv-oplus}(\ref{item:ext-equiv-oplus-bisim}) we already know that for bisimulations of the correct type in the sum (which are not equivalence relations), the answer is positive. 
The question now is whether we still have this property for richer relations.
In the particular case of $R={\sim}_{\sfs,\lmp{S}\oplus \lmp{S}'}$, by
Corollary~\ref{cor:ext-bis-sbq-R-oplus}, we know that
$({\sim}_{\sfs,\lmp{S}\oplus \lmp{S}'})_\times = ({\sim}_\sfs)_\times \supseteq
{\sim}^\times$. Therefore, each of the following statements are
equivalent to a positive answer to
Question~\ref{question:equiv-int-sum-implies-ext}:
\begin{enumerate}
\item
  There is a $\times$-bisimulation that contains $({\sim}_\sfs)_\times$;
\item
  $({\sim}_\sfs)_\times = {\sim^\times}$;
\item
  $\liftrel{({\sim}_\sfs)_\times}$ is a state bisimulation on $\lmp{S}\oplus \lmp{S}'$.
\end{enumerate}

\begin{prop}\label{prop:nec-condition-transitivity}
	Let $\lmp{S}$ and $\lmp{S}'$ be two LMPs such that $({\sim}_\sfs)_\times = {\sim^\times}$.
	If $R\sbq S\times S$ is a state equivalence, and $R'\sbq S\times S'$ is a $\times$-bisimulation, then $R\circ R' \sbq {\sim^\times}$, where $s\mathrel{R\circ R'}s' \iff \exists t\in S \ (s\mathrel{R} t \,\wedge\, t\mathrel{R'}s')$. As a consequence, the transitivity of Question~\ref{question:transitivity} holds.
\end{prop}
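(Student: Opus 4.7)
The plan is to reduce the target inclusion $R\circ R'\sbq {\sim^\times}$ to an internal transitivity argument inside the sum $\lmp{S}\oplus\lmp{S}'$, and then invoke the standing hypothesis $({\sim}_\sfs)_\times={\sim^\times}$ at the very end to translate back from the internal to the external world.

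Fix $(s,s')\in R\circ R'$ with witness $t\in S$, so $s\mathrel{R}t$ and $t\mathrel{R'}s'$. I would establish two links in $\lmp{S}\oplus\lmp{S}'$: first, $\inl(s)\sim_{\sfs,\lmp{S}\oplus\lmp{S}'}\inl(t)$, by noting that the left lift $R_l$ from~\eqref{eq:lift} of the state bisimulation $R$ is an internal bisimulation on $\lmp{S}\oplus\lmp{S}'$, as recorded immediately after that equation; and second, $\inl(t)\sim_{\sfs,\lmp{S}\oplus\lmp{S}'}\inr(s')$, via Proposition~\ref{prop:ext-equiv-oplus}(\ref{item:ext-equiv-oplus-bisim}), which promotes the $\times$-bisimulation $R'$ to the internal bisimulation $\liftrel{R'}$ on the sum and plainly contains $(\inl(t),\inr(s'))$. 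Transitivity of $\sim_{\sfs,\lmp{S}\oplus\lmp{S}'}$ then chains the two links to yield $\inl(s)\sim_{\sfs,\lmp{S}\oplus\lmp{S}'}\inr(s')$, which by definition~\eqref{eq:R_times} of the descent says $(s,s')\in(\sim_\sfs)_\times$. The hypothesis $({\sim}_\sfs)_\times={\sim^\times}$ then delivers $s\sim^\times s'$.

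For the concluding remark on Question~\ref{question:transitivity}: given $y\sim_\sfs x$ in $\lmp{S}$ and $x\sim^\times x'$, I would take $R\defi{\sim_{\sfs,\lmp{S}}}$ (a state equivalence, as recalled in the paragraph preceding the proposition) and $R'\defi{\sim^\times}$ (itself a $\times$-bisimulation). Then $y\mathrel{R\circ R'}x'$, and the first part of the proposition gives $y\sim^\times x'$.

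I do not foresee a serious obstacle here: the entire argument is a single transitivity step performed in the sum LMP, and every ingredient—the lift $R_l$ of an internal bisimulation, the correspondence between $\times$-bisimulations and internal bisimulations on the sum via $\liftrel{(\cdot)}$ and $(\cdot)_\times$, and the fact that $\sim_\sfs$ is a (transitive) equivalence relation—has already been established or explicitly noted. The only minor care needed is to use transitivity of $\sim_{\sfs,\lmp{S}\oplus\lmp{S}'}$ at the right moment; without the hypothesis $({\sim}_\sfs)_\times={\sim^\times}$, that internal transitivity would not be enough to conclude anything about $\sim^\times$, so this hypothesis is genuinely doing all the heavy lifting.
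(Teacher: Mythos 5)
Your argument is correct and is precisely the strategy the paper sketches in the discussion preceding the proposition: lift $R$ via $R_l$ and $R'$ via $\liftrel{R'}$ to internal bisimulations on $\lmp{S}\oplus\lmp{S}'$, chain them using transitivity of $\sim_{\sfs,\lmp{S}\oplus\lmp{S}'}$, and then use the hypothesis $({\sim}_\sfs)_\times={\sim^\times}$ to descend — exactly the step that fails in general (Example~\ref{exm:restr-not-bisim}) but is supplied here by assumption. The paper leaves the proof to the reader, and your write-up, including the derivation of the consequence for Question~\ref{question:transitivity} by taking $R={\sim_\sfs}$ and $R'={\sim^\times}$, fills it in correctly.
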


Corollary~\ref{cor:R-oplus-ext} states a particular case where the equality $({\sim}_\sfs)_\times = {\sim}^\times$ holds.

We conclude this section with a brief comment on the possibility of
defining a processes  “generated from a state.”
Suppose we have a single LMP $\lmp{S}$ and two states $s_0,s_1\in S$ such that $s_0\sim_\sfs s_1$. 
Our motivation lies in obtaining two LMPs $\lmp{S}_0$ and $\lmp{S}_1$ such that $s_0\in S_0$, $s_1\in S_1$, and $s_0\sim^\times s_1$. 
One trivial way to achieve this, as a consequence of Lemma~\ref{lem:closed-vs-closed-pair}, is to duplicate the LMP $\lmp{S}$.
With this approach, the contexts of those states are exactly the same
as before.
A less trivial approach is to try to “minimize” the context of a state
$s$ by defining an LMP \textit{generated} by it.
If $X_s$ were the base space of such an LMP, we would be interested in ensuring that it satisfies the following properties:
\begin{enumerate}
\item $s\in X_s$;
\item the probability of each trace starting in $s$ does not decrease in
  comparison to the original LMP;
\item\label{item:outside-Xs-irrelevant} what remains outside $X_s$ must be irrelevant, i.e., $\tau_a(r,S\sm X_s)=0$ for all $a\in L$ and $r\in X_s$.
\end{enumerate}

We cannot define $X_s$ as a $\sbq$-minimal set for these properties
because oftentimes arbitrary points can be removed without modifying
probabilities (e.g., when all measures are continuous).
Thus we can use a less strict concept of a sub-LMP, defined as
follows: Given $B\sbq S$ which is a \textit{thick} subset of all the
measure spaces $(S,\Sigma,\tau_a(x,\cdot))$
(cf. \cite[Prop.~4.34]{moroniphd}), we define the process $\lmp{S}\rest B = (B,\Sigma\rest B,\{\tau_a\rest_{B\times \Sigma\rest B}\mid a\in L\})$.
Thickness ensures that a sub-LMP containing $s$ will include all the relevant probabilistic information.
This way, Item~\ref{item:outside-Xs-irrelevant} above is guaranteed
whenever $X_s\defi B$ is measurable.
However, even with this idea, it may happen that $s,t \in S$ are state
bisimilar in $\lmp{S}$ but they are not related by any external
bisimulation between the respective generated LMPs. 
A case of this will be shown in Example~\ref{exm:ext-not-cat} in the next section.

\subsection{$\catsp$-bisimulations and coalgebraic bisimilarity}
\label{sec:catsp-bisimulations}

\subsubsection*{Categorical bisimilarity}

A span between two objects $X, Y$ in a category $\Cat$ consists of a third object $Z$ and two morphisms $f:Z\to X$ and $g:Z\to Y$; we will gather all of its components into a tuple as in $(Z,f,g)$
This serves as a generalization of a binary relation $Z$, thought of as a diagram $X \xleftarrow{\pi_X} Z \xrightarrow{\pi_Y} Y$ in the category $\Set$ (a ``jointly monic span"), and it can be used to give a more general categorical definition of bisimulation.
Dually, a cospan between $U$ and $V$ consists of an object $W$ and two morphisms with the same codomain $j:U \to W$ and $k:V\to W$.

\begin{definition}\label{def:catsp-bisimulation}
	A \emph{categorical bisimulation} (or $\catsp$-bisimulation) between two LMPs $\lmp{S}$ and $\lmp{S}'$ is a span $(\lmp{T},f,g)$ in the category $\LMP$.
	We say that $s\in S$ and $s'\in S'$ are \emph{$\catsp$-bisimilar} if there exists a span $(\lmp{T},f,g)$ such that $f(t)=s$ and $g(t)=s'$ for some $t$ in $T$. 
	In such case, we write $s\sim^{\catsp} s'$.
\end{definition}

This definition is presented in \cite{BDEP-IEEE}, where zigzag morphisms include the surjectivity condition, which we omit in this work.

The following result indicates that the existence of a span allows us to generate an external state bisimulation.

\begin{prop}\label{prop:span-implies-ext}
	If there is a span $\lmp{S} \xleftarrow{f} \lmp{W} \xrightarrow{g} \lmp{S'}$, then for every $w\in W$, $f(w)\sim^{\times} g(w)$.
\end{prop}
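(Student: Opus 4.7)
The plan is to exhibit an explicit $\times$-bisimulation that contains all pairs of the form $(f(w), g(w))$, and then verify the definition by pulling back measurable sets through the two zigzag morphisms.

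Concretely, I would define
\[
  R \defi \{(f(w), g(w)) \mid w\in W\} \sbq S\times S'.
\]
Then, by construction, $(f(w), g(w))\in R$ for every $w\in W$, so if $R$ turns out to be a $\times$-bisimulation, the statement follows immediately from Definition~\ref{def:ext-bisimulation}.

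To verify that $R$ is a $\times$-bisimulation, fix $(s,s')\in R$ with $s=f(w)$, $s'=g(w)$, and let $(A,A')$ be an $R$-closed measurable pair. The key observation is that $f^{-1}[A]=g^{-1}[A']$: using the reformulation from Lemma~\ref{lem:pair-equiv}(\ref{item:in-equiv}), for any $w'\in W$ the pair $(f(w'),g(w'))$ is in $R$, so $f(w')\in A \Leftrightarrow g(w')\in A'$, i.e.\ $w'\in f^{-1}[A] \Leftrightarrow w'\in g^{-1}[A']$. Since $f$ and $g$ are zigzag morphisms, for any $a\in L$,
\[
  \tau_a(s,A) = \tau_a(f(w),A) = \tau^W_a(w, f^{-1}[A]) = \tau^W_a(w, g^{-1}[A']) = \tau'_a(g(w),A') = \tau'_a(s',A'),
\]
which is exactly the $\times$-bisimulation condition.

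There is no real obstacle here: the whole argument hinges on the simple observation that an $R$-closed pair $(A,A')$ forces the preimages $f^{-1}[A]$ and $g^{-1}[A']$ to coincide in $W$, after which the zigzag equations for $f$ and $g$ give the required equality of transition probabilities. Alternatively, one could route the proof through Proposition~\ref{prop:ext-equiv-oplus}(\ref{item:ext-equiv-oplus-bisim}) by using the coproduct property to factor $[\inl\circ f, \inr\circ g]$ and invoke Lemma~\ref{lem:zigzag-state-image}, but the direct verification above is shorter and self-contained.
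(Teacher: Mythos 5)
Your proof is correct and follows essentially the same route as the paper, which defines the very same relation $R=(f\times g)[W]$ and asserts (without detail) that it is a $\times$-bisimulation. Your verification that an $R$-closed pair forces $f^{-1}[A]=g^{-1}[A']$, followed by the two zigzag equations, is exactly the intended filling-in of that ``easy to see'' step.
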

\begin{proof}
	It is easy to see that the relation $R=\{(s,s')\mid \exists w\in W \; s=f(w) \wedge s'=g(w)\}=\{(f(w),g(w))\mid w\in W\} =(f\times g)[W] \sbq S\times S'$ is a $\times$-bisimulation.
\end{proof}

The proof of this proposition generalizes the first part of Lemma~\ref{lem:graph(f)-bisim}, and if we focus on states, it means that ${\sim}^\catsp\sbq {\sim}^\times$. 
Additionally, it also shows a case where the transitivity~\eqref{eq:times-transitivity} holds: $f(w)\sim^\times w\sim^\times g(w)\implies f(w)\sim^\times g(w)$.

The following example shows that the converse is not true in the following sense: There can be two LMPs in which each state from the first is externally bisimilar to one from the second and vice versa, yet no span exists that preserves that relation.

\begin{example}\label{exm:ext-not-cat}
	We revisit the LMP $\lmp{U}$ and the modifications of Example~\ref{exm:ext-bisim-not-transitive}. 
	In the LMPs $\lmp{U}_s$ and $\lmp{U}_t\oplus\lmp{T'}$, every element of the first is externally bisimilar to one from the second and vice versa. 
	Indeed, the relation $R$ that identifies the states $s$, $t$ and $t'$ and also identifies the three copies of each state in $\lmp{U}_s$ and $\lmp{U}_t\oplus\lmp{T'}$ is a bisimulation relation on the sum, as proven in \cite[Exm.~1]{Pedro20111048}. 
	Moreover, it is the bisimilarity $\sim_{\sfs, \lmp{U}_s\oplus (\lmp{U}_t\oplus \lmp{T}')}$, and by Corollary~\ref{cor:R-oplus-ext}, its descend to $\lmp{U}_s \times (\lmp{U}_t\oplus\lmp{T'})$ is the external bisimilarity.
	However, we will prove that there is no span $\lmp{U}_s \xleftarrow{f} \lmp{W} \xrightarrow{g} \lmp{U}_t\oplus\lmp{T'}$ such that $f(w_0)=s$ and $g(w_0)=t$ for some $w_0$ in $W$.
	
	Assume, in search of a contradiction, that such a span exists and define $\bar{W}\defi W\setminus g^{-1}[T']$ and the sub-LMP $\bar{\lmp{W}}\defi (\bar{W},\Sigma_W\rest{\bar{W}}, \{\tau^W_a\rest_{\bar{W}\times\Sigma_W\rest{\bar{W}}}\})$. 
	We can then restrict the morphisms and consider $f\rest_{\bar{W}}:\bar{W}\to \lmp{U}_s$ and $g\rest_{\bar{W}}:\bar{W}\to \lmp{U}_t$. 
	We will prove that these restrictions are also zigzag morphisms.
	Their measurability follows from the fact that $\bar{W}$ is measurable. 
	Now, let us verify that $f\rest_{\bar{W}}$ is zigzag: If $A\subseteq \lmp{U}_s$ is measurable, and $\bar{w}\in \bar{W}$, we want to show that $\tau^W_a(\bar{w},f^{-1}[A]) = \tau^W_a(\bar{w},(f\rest_{\bar{W}})^{-1}[A])$. 
	We write $f^{-1}[A] = (f^{-1}[A]\cap\bar{W})\cup(f^{-1}[A]\cap g^{-1}[T']) = (f\rest_{\bar{W}})^{-1}[A]\cup(f^{-1}[A]\cap g^{-1}[T'])$. 
	Since $f^{-1}[A]\cap g^{-1} [T']\subseteq g^{-1}[T']$ and $g(\bar{w})\notin T'$, we have $\tau^W_a(\bar{w},f^{-1}[A]\cap g^{-1}[T'])\leq \tau^W_a(\bar{w},g^{-1}[T']) =\tau_a(g(\bar{w}),T')=0$.
	Hence, $\tau_a(f\rest_{\bar{W}}(\bar{w}),A) =\tau_a(f(\bar{w}),A)= \tau^W_a(\bar{w},f^{-1}[A]) = \tau^W_a(\bar{w},(f\rest_{\bar{W}})^{-1}[A])$, and this proves that $f\rest_{\bar{W}}$ is zigzag.
	 
	The case for $g\rest_{\bar{W}}$ is a bit simpler: If $B\subseteq \lmp{U}_t$ is measurable, then $g^{-1}[B] =(g\rest_{\bar{W}})^{-1}[B]$ and thus 
	\[\tau_a(g\rest_{\bar{W}}(\bar{w}),B) = \tau_a(g(\bar{w}),B) = \tau^W_a(\bar{w},g^{-1}[B]) = \tau^W_a(\bar{w},(g\rest_{\bar{W}})^{-1}[B]),\] 
	proving that this restriction is also zigzag.
	
	We have constructed a span $\lmp{U}_s 
	\xleftarrow{f\rest_{\bar{W}}} \lmp{\bar{W}} 
	\xrightarrow{g\rest_{\bar{W}}} \lmp{U}_t$ such that 
	$f\rest_{\bar{W}}(w_0)=s$ and $g\rest_{\bar{W}}(w_0)=t$.
	However, this is not possible because Proposition~\ref{prop:span-implies-ext} would imply that $s\sim^\times t$, which contradicts what we observed in Example~\ref{exm:ext-bisim-not-transitive}. 
\end{example}

\subsubsection*{Coalgebraic bisimilarity}

A stronger version of $\catsp$-bisimilarity can be defined by
appealing to coalgebraic concepts.
LMPs can be presented as coalgebras for the $\Mes$-endofunctor
$\Delta^L$ (a power of the Giry monad \cite{Giry}) defined on objects
as the measurable space
\[
  \Delta^L (S,\Sigma) \defi (\{ \theta \mid
    \theta : L \to \Delta(S) \}, \Delta^L(\Sigma))
\]
where  $\Delta^L(\Sigma)$ is the initial $\sigma$-algebra for the family of
maps
\[
  \{ \theta \mapsto \theta(a)(E) \mid E\in \Sigma, \; a\in L \}.
\]
In this setting, an LMP $(S,\Sigma,\{\tau_a \}_a)$ corresponds to
the coalgebra structure $\tau : (S,\Sigma) \to \Delta^L (S,\Sigma)$
given by $\tau(a)(Q) \defi \tau_a(Q)$ for $a\in L$ and $Q\in
\Sigma$. Hence zigzag  morphisms correspond \textit{mutatis mutandis}
to those of $\Delta^L$-coalgebras.

The definition of bisimulation between coalgebras was introduced by
Aczel and Mendler \cite{am89} and adapted to our context of interest
by de Vink and Rutten \cite{Vink1997BisimulationFP}.

\begin{definition}\label{def:coalgebraic-bisimulation}
	A \emph{coalgebraic bisimulation} or \emph{$\coalg$-bisimulation} between two $\Delta^L$-coalgebras $\lmp{S}$ and $\lmp{S}'$ is a relation $R\sbq S\times S'$ such that there exists a structure of $\Delta^L$-coalgebra $\gamma:R\to \Delta^L(R)$ that makes the projections $\pi:R\to S$ and $\pi':R\to S'$ coalgebra morphisms, i.e., it makes the following diagram commute:
	\[\xymatrix{
		S \ar[d]_\tau & R \ar[l]_{\pi} \ar[r]^{\pi'} 
		\ar[d]^{\gamma} & S' 
		\ar[d]^{\tau'} \\
		\Delta^L(S) & \Delta^L(R) \ar[l]^{\Delta^L(\pi)} 
		\ar[r]_{\Delta^L(\pi')}  & \Delta^L(S')
	}\]
	We say that $s\in S$ and $s'\in S'$ are \emph{$\coalg$-bisimilar} if there exists a $\coalg$-bisimulation $R$ such that $s\mathrel{R} s'$. 
	We denote this as $s\sim^{\coalg} s'$.
\end{definition}

It is clear that ${\sim}^\coalg \sbq{\sim}^\catsp$ since a coalgebraic bisimulation is a particular case of a span in which a \textit{relation} has an LMP structure and the projections are zigzag morphisms.
Proposition~\ref{prop:span-implies-ext} showed that ${\sim}^\catsp \sbq{\sim}^\times$, and therefore, we have ${\sim}^\coalg\sbq {\sim}^\catsp \sbq {\sim}^\times$ (\cite[Prop.~5.2.4]{bacci} gives a direct proof of ${\sim}^\coalg \sbq {\sim}^\times$). 

In \cite[Thm.~16]{Bacci2014BisimulationOM}, it is stated that $\sim^\times$ and $\sim^{\coalg}$ are equivalent. 
To show this, the authors argue that every external bisimulation is also
coalgebraic, first describing in Proposition~13 (op.cit.) a technical procedure 
to construct a measure  $\mu\wedge \nu$ on a subset $R$ of product of two
measure spaces $(X,\mu)$ and $(Y,\nu)$ 

As Gburek already observed in 2016 \cite{gburek}, Proposition~13 does
not hold in general, but he proves it in the case of $R$ being a
countably-separated “quasi-equivalence” (a $z$-transitive relation $R\sbq
X\times Y$ that projects surjectively onto $X$ and $Y$)
with $X$ and $Y$ Polish.

Next we offer a counterexample to the Proposition for a
quasi-equivalence $R$ when $Y$ is not Polish.
\begin{example}
	Recall that $I=(0,1)$. Consider the measurable space $X=(I,
        \Borel(I)_V)$ equipped with $\mu\defi \leb_0$, and $Y=(I,\Borel(I)_V)
        \oplus (I, \Borel(I))$ equipped with $\nu\defi \leb_1$	supported on
        $I\oplus \emptyset$.
	Without loss of generality, assume that $\leb_0(V)<\leb_1(V)$.
	Let $R\defi \{(a,\inl(a))\mid a\in I\}\cup\{(a,\inr(a))\mid a\in I\}$ ($R$ is 
	the union of the two ``diagonals'') and let $\pi_X:R\to X$ and $\pi_Y:R\to Y$ be the canonical projections.
	Observe that for every $E$ and $F$ in the respective $\sigma$-algebras, the implication 
        \begin{equation}\label{eq:hip-bacci}
          \pi_X^{-1}[E] =\pi_Y^{-1}[F]  \implies \mu(E)=\nu(F)
        \end{equation}
        holds: If the antecedent is true, the definition of $R$ implies that $F=E\oplus E$; in addition, the second summand of $Y$ forces $E$ to be Borel. 
	\cite[Prop.~13]{Bacci2014BisimulationOM} states that under
        (\ref{eq:hip-bacci}), there should exist a measure $\mu\wedge \nu$ such that
        $(\mu \wedge \nu)(\pi_X^{-1}[V]) = \mu(V)$ and $(\mu \wedge
        \nu)(\pi_Y^{-1}[V\oplus \emptyset]) = \nu(V\oplus \emptyset)$. Hence
        \[
          (\mu \wedge \nu)(\pi_X^{-1}[V])=\leb_0(V)<  
	  \leb_1(V) = (\mu \wedge \nu)(\pi_Y^{-1}[V\oplus \emptyset]).
        \]
	But $\pi_Y^{-1}[V\oplus \emptyset] \subseteq \pi_X^{-1}[V]$, 
	leading to a contradiction. 
\end{example}

In any case, \cite[Thm.~16]{Bacci2014BisimulationOM} does not hold since Example~\ref{exm:ext-not-cat} shows $ {\sim}^\times \nsubseteq {\sim}^\coalg $. 
Nevertheless, the question about the equivalence between the bisimilarities $\sim^\coalg$ and $\sim^\catsp$ remains.

\begin{question}\label{question:span-implies-coalg}
	Does ${\sim}^\catsp \sbq {\sim}^\coalg$ hold? 
	In other words, if two states are bisimilar via a span, does a $\coalg$-bisimulation also exist that relates them?
\end{question}

An antecedent to this question can be found in \cite[Thm.~5.8]{Vink1997BisimulationFP}, where it is proven that for LMP in the category of ultrametric spaces with expansive maps, every $\times$-bisimulation that admits a Borel decomposition is also a $\coalg$-bisimulation. 
However, this condition seems to be too restrictive.

Given a span $(\lmp{T},f,g)$, we could ask about the possibility of constructing a coalgebraic bisimulation that includes $(f\times g)[T]$.
However, this might be too much to ask for, as the question of whether ${\sim}^\catsp \sbq {\sim}^\coalg$ does not necessarily imply that the \textit{same} relation attests $\catsp$-bisimilarity.

With this in mind, let us consider a very simple particular case where the answer to Question~\ref{question:span-implies-coalg} is positive. 
We recall that if $(X,\Sigma_X)$ is a measurable space and $h:X\to Y$, then $\Sigma_Y=\{B\sbq Y\mid h^{-1}[B]\in \Sigma_X\}$ is the largest $\sigma$-algebra on $Y$ that makes $h$ measurable. 
This is called the final $\sigma$-algebra with respect to $h$.

\begin{prop}\label{prop:monic-span-case}
	Let $\lmp{S}\leftarrow \lmp{T} \rightarrow \lmp{S}'$ be a monic span.
	Then the relation $R\defi (f\times g)[T]$ with the final $\sigma$-algebra with respect to $f\times g$, $\Sigma_R$, is an LMP if we define $\tau_a((f(t),g(t)),Q)\defi \tau_a^{\lmp{T}}(t,(f\times g)^{-1}[Q])$.
\end{prop}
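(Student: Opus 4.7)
The plan is to verify the three conditions in Definition~\ref{def:Markov-kernel}, together with well-definedness of $\tau_a$, working directly from the definitions and using the monic hypothesis only once, right at the beginning.

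First I would unpack the monic hypothesis: the span being monic means that $f \times g : T \to S \times S'$ is injective, so it restricts to a bijection between $T$ and $R = (f\times g)[T]$. This is what makes the definition $\tau_a((f(t),g(t)),Q) \defi \tau_a^{\lmp{T}}(t,(f\times g)^{-1}[Q])$ unambiguous: the element $t$ is uniquely determined by the pair $(f(t),g(t))\in R$, so no compatibility condition needs to be checked.

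Next I would verify that $\tau_a(r,\cdot)$ is a subprobability measure on $(R,\Sigma_R)$ for each fixed $r\in R$. Writing $r=(f(t),g(t))$, this measure is nothing but the pushforward of $\tau_a^{\lmp{T}}(t,\cdot)$ along the bijection $f\times g\colon T\to R$; since preimage commutes with arbitrary unions, intersections, and complements, countable additivity on $\Sigma_R$ follows at once from the countable additivity of $\tau_a^{\lmp{T}}(t,\cdot)$. For the subprobability bound, $(f\times g)^{-1}[R]=T$, hence $\tau_a(r,R)=\tau_a^{\lmp{T}}(t,T)\le 1$.

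Then I would check measurability of $\tau_a(\cdot,Q)\colon R\to[0,1]$ for each $Q\in\Sigma_R$. By the universal property of the final $\sigma$-algebra, this map is $\Sigma_R$-measurable if and only if its composition with $f\times g$ is $\Sigma_T$-measurable. But that composition sends $t$ to $\tau_a^{\lmp{T}}(t,(f\times g)^{-1}[Q])$, and $(f\times g)^{-1}[Q]\in\Sigma_T$ by the very definition of $\Sigma_R$; measurability then follows from the measurability clause in Definition~\ref{def:Markov-kernel} applied to the Markov kernel $\tau_a^{\lmp{T}}$.

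The whole argument is essentially a bookkeeping exercise, and I do not expect any real obstacle: the monic hypothesis is used only for well-definedness, while the choice of the final $\sigma$-algebra $\Sigma_R$ is tailor-made so that both the subprobability measure and the measurability conditions transfer transparently from $\lmp{T}$ to $R$ through the bijection $f\times g$.
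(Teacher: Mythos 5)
Your proof is correct and follows essentially the same route as the paper's: the monic hypothesis amounts to injectivity of $f\times g$, which gives well-definedness; the measure axioms transfer along the resulting bijection $T\to R$; and measurability of $\tau_a(\cdot,Q)$ comes from the choice of $\Sigma_R$ together with the measurability clause for the kernel $\tau_a^{\lmp{T}}$. The only difference is presentational: the paper checks measurability by computing the preimage $\tau_a(\cdot,Q)^{-1}[(0,q)]$ explicitly as an image $(f\times g)[C]$ and using injectivity a second time via $(f\times g)^{-1}[(f\times g)[C]]=C$, whereas you invoke the universal property of the final $\sigma$-algebra, which packages the same computation and makes it transparent that injectivity is needed only for well-definedness.
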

\begin{proof}
	Since $\Set$ has products, the injectivity of the span is equivalent to requiring $f\times g:T\to S\times S'$ to be injective. 
	Thanks to this, $\tau_a$ is well-defined. 
	For a fixed $r=(f(t),g(t))$, $\tau_a(r,\cdot)$ is a subprobability measure. 
	If we fix $Q\in \Sigma_R$, 
	\begin{align*}
		\tau_a(\cdot,Q)^{-1}[(0,q)] &= \{(f(t),g(t))\in R\mid \tau_a((f(t),g(t)),Q)<q\} \\
		& = \{(f(t),g(t))\in R\mid \tau_a^{\lmp{T}}(t,(f\times g)^{-1}[Q])<q\} \\
		&= (f\times g)[\{t\in T\mid \tau_a^{\lmp{T}}(t,(f\times g)^{-1}[Q])<q\}].
	\end{align*}
	Given that $f\times g$ is injective, we have $(f\times
        g)^{-1}[(f\times g)[C]]=C$ for any set $C$.  Hence, by the
        definition of $\Sigma_R$, $(f\times g)[\{t\in T\mid
            \tau_a^{\lmp{T}}(t,(f\times g)^{-1}[Q])<q \}]\in \Sigma_R$
        and we conclude that $\tau_a(\cdot, Q)$ is measurable.
\end{proof}

This result is quite weak, as a monic span in $\Set$ is (isomorphic to) a relation. 
Thus, the assumptions practically require a coalgebraic bisimulation. 
However, it does provide some insight when attempting to find a counterexample to Question~\ref{question:span-implies-coalg}: Avoid considering injective spans (and therefore neither of the morphisms can be injective).

In \cite{sokolova05}, properties of coalgebraic bisimulation are listed (with proofs in \cite{Rutten00}). 
The functors used there possess a particular property: They preserve weak pullbacks, or weakly preserve pullbacks. 
The second condition is weaker than the first, but both are equivalent if the category has pullbacks. 
The \textit{pullback} of a cospan between two objects is a span between the same objects completing the diagram to form a commutative square, while also satisfying a universal property. 
The dual concept is that of a \textit{pushout}. 
A weak pullback completes the commutative square without uniqueness in the universal property, and a semi-pullback only completes the commutative square.
The importance of a functor preserving these constructions in some manner lies
in the fact that bisimulations between their coalgebras exhibit good properties
\cite[p.10]{viglizzo05}. At the end of next Section~\ref{sec:catco-bisimulations} we
will resume the discussion of the relationship between categorical and
coalgebraic bisimilarity since it also involves the bisimilarity defined in
terms of cospans.

\subsection{$\catco$-bisimulations}
\label{sec:catco-bisimulations}

We will now address event bisimulations in a categorical context.  In
Corollary~\ref{cor:logic-charac-event}, we stated the logical
characterization of event bisimilarity via the modal logic $\Logic$.
Based on Lemma~\ref{lem:zigzag-logic}, we can conclude that if $f$ and
$g$ form a span of zigzag morphisms, then $f(w)$ and $g(w)$ satisfy
the same formulas of $\Logic$.  Similarly, if $j$ and $k$ now form a
cospan of zigzag morphisms and $j(s) = k(s')$, then $s$ and $s'$ are
logically equivalent.  It is noteworthy that logical equivalence
allows for an external perspective and behaves like an equivalence
relation, whereas event bisimilarity as a sub-$\sigma$-algebra is
internal to the LMP.

Cospans are an appropriate categorical representation of equivalence
relations; in this direction, \cite{coco} indicates that event
bisimilarity corresponds to surjective zigzag cospans in the category
$\LMP$.  An advantage of this approach is that the transitivity of
bisimilarity using cospans follows directly due to the existence of
pushouts in $\LMP$ (which are essentially pushouts in
$\Mes$). The following definition is adapted from \cite[Def.~6.2]{coco}
to our context.

\begin{definition}\label{def:event-bisimulation-ext}
  An \emph{external event bisimulation between $\lmp{S}$ and $\lmp{S'}$} is a
  cospan of surjections to some object $\lmp{T}$ in the category of
  coalgebras.  We will say that $s$ and $s'$ are
  \emph{$\catco$-bisimilar} if there exists a cospan of morphisms
  $(\lmp{T},f,g)$ such that $f(s) = g(s')$.  In this case, we write $s
  \sim^{\catco} s'$.
\end{definition}

\begin{example}\label{exm:catco-not-ext}
  The function $\mathrm{swap}:\lmp{U}\to \lmp{U}_B$ (see
  Example~\ref{exm:zigzag-not-state-domain}) that maps $s\mapsto t$
  and $t\mapsto s$ (and is the identity on the rest) is zigzag.
  Therefore, $\lmp{U} \xrightarrow{\id} \lmp{U}_B
  \xleftarrow{\mathrm{swap}} \lmp{U}$ is a cospan that witnesses
  $s\sim^\catco t$.
\end{example}

\begin{prop}\label{prop:ctaco-implies-event-bisim}
  If $s \sim^\catco s'$, then they are (internally) event bisimilar in
  $\lmp{S}\oplus\lmp{S}'$.
\end{prop}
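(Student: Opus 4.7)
The plan is to use the logical characterization of event bisimilarity together with the universal property of the coproduct in $\LMP$.

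By Corollary~\ref{cor:logic-charac-event}, event bisimilarity in $\lmp{S}\oplus\lmp{S}'$ is given by $\Rel(\sem{\Logic})$, i.e.\ two states are event bisimilar exactly when they satisfy the same formulas of the modal logic $\Logic$. So the goal reduces to showing that for every $\varphi \in \Logic$,
\[
\inl(s) \in \sem{\varphi}_{\lmp{S}\oplus\lmp{S}'} \iff \inr(s') \in \sem{\varphi}_{\lmp{S}\oplus\lmp{S}'}.
\]

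To verify this, I would invoke the zigzag cospan $\lmp{S} \xrightarrow{f} \lmp{T} \xleftarrow{g} \lmp{S}'$ provided by the hypothesis $s \sim^\catco s'$, together with $f(s) = g(s')$. Since $\lmp{S} \oplus \lmp{S}'$ is the coproduct of $\lmp{S}$ and $\lmp{S}'$ in $\LMP$ (exactly as used in the proof of Proposition~\ref{prop:ext-bisim-any-cospan}), there is a unique zigzag morphism $\alpha : \lmp{S}\oplus\lmp{S}' \to \lmp{T}$ satisfying $\alpha \circ \inl = f$ and $\alpha \circ \inr = g$.

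Now apply Lemma~\ref{lem:zigzag-logic} to $\alpha$: for every formula $\varphi \in \Logic$,
\[
\alpha^{-1}\bigl[\sem{\varphi}_{\lmp{T}}\bigr] = \sem{\varphi}_{\lmp{S}\oplus\lmp{S}'}.
\]
Therefore $\inl(s) \in \sem{\varphi}_{\lmp{S}\oplus\lmp{S}'}$ iff $f(s) = \alpha(\inl(s)) \in \sem{\varphi}_{\lmp{T}}$, and likewise $\inr(s') \in \sem{\varphi}_{\lmp{S}\oplus\lmp{S}'}$ iff $g(s') = \alpha(\inr(s')) \in \sem{\varphi}_{\lmp{T}}$. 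Since $f(s) = g(s')$, both conditions are equivalent, and we conclude that $\inl(s)$ and $\inr(s')$ satisfy exactly the same formulas in $\lmp{S}\oplus\lmp{S}'$, hence are internally event bisimilar.

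I don't foresee a real obstacle: the proof is essentially bookkeeping, combining the coproduct property in $\LMP$ with the preservation of $\Logic$-semantics under zigzag maps. Notice that surjectivity of $f$ and $g$ (part of the definition of $\sim^\catco$) is not used; only the zigzag condition is needed to invoke Lemma~\ref{lem:zigzag-logic}, so the statement actually holds for arbitrary zigzag cospans.
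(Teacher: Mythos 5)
Your proof is correct, but it takes a genuinely different route from the paper's. The paper argues directly at the level of $\sigma$-algebras: it forms the family $\Fam=\{f^{-1}[A]\oplus g^{-1}[A]\mid A\in \Sigma_T\}$, uses the zigzag condition to show $\Fam$ is a stable $\pi$-system, invokes \cite[Lem.~5.4]{coco} to conclude that $\Lambda\defi\sigma(\Fam)$ is a stable sub-$\sigma$-algebra of $\Sigma\oplus\Sigma'$, and then checks $\inl(s)\mathrel{\Rel(\Lambda)}\inr(s')$ from $f(s)=g(s')$. You instead reduce to logical equivalence via Corollary~\ref{cor:logic-charac-event}, factor the cospan through the coproduct to get a single zigzag $\alpha:\lmp{S}\oplus\lmp{S}'\to\lmp{T}$ (exactly as in the proof of Proposition~\ref{prop:ext-bisim-any-cospan}), and apply Lemma~\ref{lem:zigzag-logic} to $\alpha$; there is no circularity, since both ingredients are established independently of this proposition. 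What the paper's route buys is an explicit stable $\sigma$-algebra witnessing the event bisimulation, obtained without appealing to the logical characterization theorem of \cite{coco}; what your route buys is brevity, and it essentially proves Corollary~\ref{cor:catco-implies-logic} (that $\sim^\catco$ implies logical equivalence) along the way rather than deriving it afterwards. Your closing observation that surjectivity of $f$ and $g$ is never used matches the remark the paper makes immediately after its own proof.
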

\begin{proof}
  Given a cospan $(\lmp{T},f,g)$ such that $f(s)=g(s')$, we must prove
  that there exists a stable $\sigma$-algebra $\Lambda$ in
  $\lmp{S}\oplus\lmp{S}'$ such that
  $\inl(s)\mathrel{\Rel(\Lambda)}\inr(s')$.  Let $\Fam=\{f^{-1}[A]\oplus
  g^{-1}[A]\mid A\in \Sigma_T\}$.  Using the fact that $f$ and $g$ are
  zigzag, it can be observed that for any $A\in \Sigma_T$,
  \[
  \{r\in S\oplus S'\mid \tau^\oplus_a(r,f^{-1}[A]\oplus g^{-1}[A])<q\}
    = f^{-1}[B]\oplus g^{-1}[B]
  \] 
  where $B=\{t\in T\mid \tau^{\lmp{T}}_a(t,A)<q\}$. Consequently,
  $\Fam$ is stable and also forms a $\pi$-system.  By
  \cite[Lem.~5.4]{coco}, the $\sigma$-algebra generated by a stable
  $\pi$-system is also stable.  Therefore, $\Lambda\defi \sigma(\Fam)$
  is stable. Finally, we note that
  \[
    \inl(s)\mathrel{\Rel(\Lambda)} \inr(s')
    \Leftrightarrow \inl(s)\mathrel{\Rel(\Fam)} \inr(s') \Leftrightarrow
    \forall A\in\Sigma_T,\,(f(s)\in A\Leftrightarrow g(s')\in A),
  \]
  which is true by definition of $\Fam$.
\end{proof}

Actually, the surjectivity condition required by
Definition~\ref{def:event-bisimulation-ext} was not used in
the previous proof. Furthermore, this
requirement is an obstacle for the comparison between that definition
and logical equivalence. Consider the following finite
LMPs:
\[
  \xymatrix @R=1mm{
    \lmp{S}: &s_1 \ar[ddr]^a&   &s_2 \ar[ddl]_a&&\lmp{S'}:&s'_1 
    \ar[dd]^a&s'_4 \ar@(ur,dr)[]^a \\ 
    &              &   &              &&             &              &                     \\ 
    &              &s_3&              &&             &s'_3          &                     \\
  }
\]

We note that the sets $\{ s_1,s_2,s'_1 \}$, $\{ s_3,s'_3 \}$, and $\{
  s'_4 \}$ are classes of logical equivalence in the sum
$\lmp{S}\oplus \lmp{S}'$ (the algebra generated by these sets is
stable), and zigzag morphisms must preserve this property. Therefore,
if $f:\lmp{S}\to \lmp{T}$ and $g:\lmp{S'}\to \lmp{T}$ are zigzag, then
$g(s'_4)\notin \mathrm{Im}(f)$, and thus $f$ cannot be surjective.

\subsubsection*{Quotients and finality}

The converse to Proposition~\ref{prop:ctaco-implies-event-bisim} also
holds, but in order to prove it we need to construct a “quotient”
of sorts; we will state most of the auxiliary results
about it without proof.

For an LMP $\lmp{S}=(S,\Sigma,\{\tau_a \}_a)$ and  $\Lambda$ a stable sub-$\sigma$-algebra, consider the
projection $\pi: S \to S/\Rel(\Lambda)$. We can endow
$S/\Rel(\Lambda)$ with the following $\sigma$-algebra and kernels 
\[
  \Lambda/\mathcal{R}(\Lambda)\defi \{\pi[Q] 
    \mid Q\in \Lambda \},
  \qquad
  \bar{\tau}_a(\pi(s),Q')\defi\tau_a(s,\pi^{-1}[Q']),
\]
where $s\in S$ and $Q' \in \Lambda/\mathcal{R}(\Lambda)$. These kernels
are well defined and then
\[
  \lmp{S}/\Lambda\defi
  (S/\mathcal{R}(\Lambda),\Lambda/\mathcal{R}(\Lambda),\{\bar{\tau}_a
  \}_a)
\]
is an LMP and the canonical projection $\pi:\lmp{S}\to
\lmp{S}/\Lambda$ is upgraded to a zigzag morphism. An antecedent of this construction is
\cite{DBLP:journals/siamcomp/Doberkat05}, where the whole setting is
restricted to analytic state spaces and countably generated $\Lambda$.

It is to be noted
that this is not a quotient “over $\Set$”, but it will
be shown to be one “over $\Mes$” since it is an instance of the following particular
case of \cite[Def.~2.9]{dubuc2006topological}:
\begin{definition}
  Let $\functor{U} : \mathbf{T} \to \mathbf{S}$ be a functor, $\lmp{X}_0,\lmp{X}\in
  \mathbf{T}$ and $f_0: \lmp{X}_0 \to \lmp{X}$.  The morphism $f_0$ is \emph{$\functor{U}$-final}
  if for any $g_0 : \lmp{X}_0 \to \lmp{Y}$ in $\mathbf{T}$ and arrow $\phi:
  \functor{U}(\lmp{X}) \to \functor{U}(\lmp{Y})$ in $\mathbf{S}$ such that $\functor{U}(g_0)=\phi\circ
  \functor{U}(f_0)$, there exists a unique $g : \lmp{X} \to \lmp{Y}$ such that
  $\functor{U}(g)=\phi$ and $g \circ f_0 = g_0$.
\end{definition}

For the rest of this section, let $\functor{G}$ and $\functor{V}$  be the forgetful functors
from $\LMP$ to $\Mes$  and from $\LMP$ to $\Set$, respectively. Next, we
will briefly discuss $\functor{G}$- and $\functor{V}$-finality of
zigzag morphisms.

\begin{lemma}\label{lem:pi-is-G-final}
  Every surjective zigzag $\pi:\lmp{S}\to \lmp{S}'$ is $\functor{G}$-final.
\end{lemma}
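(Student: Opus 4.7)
The plan is to show existence and uniqueness of the lifting $g$, with existence being the substantive part and uniqueness trivial. Given zigzag morphisms $\pi:\lmp{S}\to \lmp{S}'$ (surjective) and $g_0:\lmp{S}\to \lmp{Y}$ together with a measurable map $\phi:\functor{G}(\lmp{S}')\to \functor{G}(\lmp{Y})$ satisfying $\functor{G}(g_0)=\phi\circ \functor{G}(\pi)$, the only candidate for $g$ is $\phi$ itself (as the underlying set map), since the requirement $\functor{G}(g)=\phi$ determines $g$ completely: an $\LMP$-morphism carries no extra data beyond its underlying measurable function. This also yields uniqueness, as any two zigzag morphisms with the same underlying measurable map coincide.

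For existence, I set $g\defi \phi$ and verify that $g$ satisfies the zigzag condition, namely $\tau'_a(s',g^{-1}[B])=\tau^{\lmp{Y}}_a(g(s'),B)$ for all $a\in L$, $s'\in S'$, and $B\in \Sigma_{\lmp{Y}}$. The key move is to use surjectivity of $\pi$: pick $s\in S$ with $\pi(s)=s'$. Then, applying the zigzag condition of $\pi$ to the measurable set $g^{-1}[B]=\phi^{-1}[B]$, I get $\tau'_a(\pi(s),\phi^{-1}[B])=\tau_a(s,\pi^{-1}[\phi^{-1}[B]])=\tau_a(s,(\phi\circ\pi)^{-1}[B])=\tau_a(s,g_0^{-1}[B])$. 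Then, since $g_0$ is zigzag, this last term equals $\tau^{\lmp{Y}}_a(g_0(s),B)=\tau^{\lmp{Y}}_a(\phi(\pi(s)),B)=\tau^{\lmp{Y}}_a(g(s'),B)$. Chaining these equalities gives exactly the zigzag condition for $g$, and the relation $g\circ \pi=g_0$ is immediate from $\phi\circ \functor{G}(\pi)=\functor{G}(g_0)$.

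The main (mild) obstacle is simply organizing this chain of equalities and recognizing that the value $\tau'_a(s',\phi^{-1}[B])$ does not depend on the chosen preimage $s$ (which it cannot, since the chain of equalities ends with $\tau^{\lmp{Y}}_a(g(s'),B)$, a quantity that depends only on $s'$). Surjectivity of $\pi$ is used \emph{only} to guarantee that each $s'$ has at least one preimage; no other measurability or image-of-$\sigma$-algebra subtleties intervene, because $\phi$ is given to us already as a morphism in $\Mes$.
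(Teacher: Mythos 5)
Your proof is correct and follows essentially the same route as the paper's: both exploit surjectivity of $\pi$ to pick a preimage $s$ of $s'$ and then chain the zigzag identities for $\pi$ and $g_0$ through the equality $g_0^{-1}[B]=\pi^{-1}[\phi^{-1}[B]]$, with uniqueness coming from faithfulness of $\functor{G}$. The only (cosmetic) difference is that you take $g\defi\phi$ outright, whereas the paper defines $g(\pi(s))\defi g_0(s)$, checks well-definedness, and then observes that $g$ coincides with $\phi$ as a set map.
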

\begin{proof}
  Let $g_0:\lmp{S}\to \lmp{T}$ be a zigzag morphism and
  $\phi:\functor{G}(\lmp{S}') \to \functor{G}(\lmp{T})$ a measurable map such
  that $\functor{G}(g_0)=\phi\circ \functor{G}(\pi)$. We complete the diagram in
  Figure~\ref{fig:final} by defining $g:\lmp{S}' \to
  \lmp{T}$ by $g(\pi(s))=g_0(s)$.  To check well-definedness, assume
  that $\pi(s)=\pi(s')$, which is the same as
  $\functor{G}(\pi)(s)=\functor{G}(\pi)(s')$. By applying $\phi$, we have 
  $\functor{G}(g_0)(s)=\functor{G}(g_0)(s')$ by hypothesis, which is again the same as
  $g_0(s)=g_0(s')$.
  As $g$ and $\phi$ coincide as maps between
  sets, $g$ is measurable.  Let $B\sbq T$ be measurable; then
  \begin{multline*}
    \tau_a^\lmp{T}(g(\pi(s)),B)= \tau_a^\lmp{T}(g_0(s),B) =
    \tau_a^\lmp{S}(s,g_0^{-1}[B]) =\\
    = \tau_a^\lmp{S}(s,\pi^{-1}[g^{-1}[B]]) =
    \tau_a^{\lmp{S}'}(\pi(s),g^{-1}[B]).
  \end{multline*}
  This proves that $g$ is zigzag. As $\functor{G}$ is faithful, $g$ is
  unique.
\end{proof}
\begin{figure}
  \begin{center}
    \begin{tikzpicture}
      \matrix (m) [ matrix of math nodes, row sep=3em,
        column sep=3em, nodes in empty cells,
        nodes={anchor=center} ] { \functor{G}(\lmp{S}) & & & &
        \functor{G}(\lmp{S}') \\[-3em] & \lmp{S} & &
        \lmp{S}' & \\[1em] & & & \lmp{T} &\\[-3em] &
        & & & \functor{G}(\lmp{T}) \\ };
      
      \draw[->] (m-1-1) -- node[above]{$\functor{G}(\pi)$} (m-1-5);
      \draw[->, bend right=28] (m-1-1) to
      node[left]{$\functor{G}(g_0)\quad$} (m-4-5); \draw[->] (m-2-2) --
      node[above]{$\pi$} (m-2-4); \draw[->] (m-2-2) to
      node[left]{$\forall g_0$} (m-3-4); \draw[->, dashed]
      (m-2-4) -- node[right]{$\exists !g$} (m-3-4);
      \draw[->] (m-1-5) --node[right]{$\forall \phi$}
      (m-4-5);
    \end{tikzpicture}
  \end{center}
 \caption{The projection $\pi$ is $(\LMP\to\Mes)$-final.\label{fig:final}}
\end{figure}
This result slightly strengthens the idea proposed in the last
paragraph \cite[Sect.~6]{coco}, were epi-mono factorizations of zigzag
morphisms are discussed. As it is expected, not all epimorphims behave
as in other more structured categories:
\begin{example}
  The map $\pi:\lmp{S}\to \lmp{S}/\Lambda$ may not be $\functor{V}$-final.  Take $\lmp{S}\defi \lmp{U}$,
  $\Lambda \defi \Borel(\Inter)\oplus \Power(\{ s,t,x \})$, the zigzag
  $\id:\lmp{U}\to \lmp{U}$, and the function $\phi:U/\Rel(\Lambda) \to
  U$ given by $\phi(\pi(s))=s$ (this is well defined since
  $\Rel(\Lambda)$ is the identity relation).  Then there is no
  measurable $g:\lmp{U}/\Lambda \to \lmp{U}$ such that $\id=g\circ
  \pi$ and $\functor{V}(g)=\phi$.

  Since this  $\lmp{U}/\Lambda$ is isomorphic to $\lmp{U}_B$ from
  Example~\ref{exm:zigzag-not-state-domain}, we have obtained that the
  identity morphism $\lmp{U}\to \lmp{U}_B$ is $\functor{G}$-final but not
  $\functor{V}$-final.
\end{example}

We will need the following basic result for studying $\functor{V}$-finality:
\begin{lemma}[{\cite[Lem.~4.5]{coco}}]\label{lem:preimage-stable}
  If $\pi: \lmp{S} \to \lmp{S}'$ is zigzag morphism, then
  $\{ \pi^{-1}[Q] \mid Q \text{ measurable in
    }\lmp{S}' \}$ is a stable sub-$\sigma$-algebra of $\Sigma$.
\end{lemma}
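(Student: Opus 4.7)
The plan is to denote $\Lambda \defi \{\pi^{-1}[Q] \mid Q\in \Sigma'\}$ and verify the two required properties separately: that $\Lambda$ is a sub-$\sigma$-algebra of $\Sigma$, and that it is stable in the sense of Definition~\ref{def:stable}.

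For the first property, I would invoke the standard fact that the preimage operator commutes with complements and arbitrary unions and intersections: if $Q\in \Sigma'$ then $S\sm \pi^{-1}[Q]=\pi^{-1}[S'\sm Q]\in \Lambda$, and for any countable family $\{Q_n\}_{n\in\omega}\sbq \Sigma'$ we have $\bigcup_n \pi^{-1}[Q_n]=\pi^{-1}[\bigcup_n Q_n]\in\Lambda$. Together with the fact that $S=\pi^{-1}[S']\in \Lambda$, this makes $\Lambda$ a $\sigma$-algebra. The inclusion $\Lambda\sbq \Sigma$ is just the measurability of $\pi$, which is part of being a zigzag morphism.

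For the stability, the key step is the zigzag equation. Fix $A=\pi^{-1}[Q]\in \Lambda$ with $Q\in\Sigma'$, an action $a\in L$, and $r\in [0,1]\cap\Q$. By the zigzag condition,
\[
  \tau_a(s,A)=\tau_a(s,\pi^{-1}[Q])=\tau'_a(\pi(s),Q)
\]
for all $s\in S$, so the set $\{s\in S\mid \tau_a(s,A)>r\}$ equals $\pi^{-1}\bigl[\{s'\in S'\mid \tau'_a(s',Q)>r\}\bigr]$. Now $\{s'\in S'\mid \tau'_a(s',Q)>r\}$ is measurable in $\lmp{S}'$ because, by Definition~\ref{def:Markov-kernel}, $\tau'_a(\cdot,Q):S'\to [0,1]$ is $(\Sigma',\Borel([0,1]))$-measurable. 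Hence the preimage lies in $\Lambda$, which is exactly what stability requires.

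There is no real obstacle here: the whole argument is a direct combination of the zigzag identity and the measurability of the kernel in its first argument. The only point worth stressing is the interplay between these two ingredients, since it is the reuse of the original kernel's measurability (on $\lmp{S}'$) that puts the truncation set back in the form $\pi^{-1}[\,\cdot\,]$, and without the zigzag condition one could only say $\{s\mid \tau_a(s,A)>r\}\in\Sigma$, not $\in\Lambda$.
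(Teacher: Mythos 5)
Your proof is correct. The paper does not include a proof of this lemma (it is cited from Danos et al.\ \cite{coco}), but your argument — preimages commute with the Boolean operations, and the zigzag identity turns the truncation set $\{s \mid \tau_a(s,\pi^{-1}[Q])>r\}$ into $\pi^{-1}[\{s'\mid \tau'_a(s',Q)>r\}]$, which lies in $\Lambda$ by measurability of the kernel $\tau'_a(\cdot,Q)$ — is exactly the standard one.
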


\begin{theorem}\label{th:charact-V-final}
  Let $\pi: \lmp{S} \to \lmp{S}'$ be a surjective zigzag morphism, $R\defi
  \{ (s,t) \mid  \pi(s) = \pi(t)\}$ be its \emph{kernel}, and $\Lambda
  \defi \{ \pi^{-1}[Q]  \mid Q \text{ measurable in }\lmp{S}'
  \}$.
  The following are equivalent:
  \begin{enumerate}
  \item\label{item:pi-V-final} $\pi$ is $\functor{V}$-final.
  \item\label{item:greatest-stable} $\Lambda$ is the greatest
    stable sub-$\sigma$-algebra of $\Sigma(R)$.
  \end{enumerate}
\end{theorem}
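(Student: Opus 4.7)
The plan is to prove the two implications separately, with the quotient construction $\lmp{S}/\Lambda'$ recalled immediately above as the key tool for $(\ref{item:pi-V-final}) \Rightarrow (\ref{item:greatest-stable})$, and Lemma~\ref{lem:preimage-stable} together with the maximality hypothesis for the converse. Before splitting into cases I would record that $\Lambda$ is automatically stable by Lemma~\ref{lem:preimage-stable} and that $\Lambda \sbq \Sigma(R)$: if $A = \pi^{-1}[Q]$ and $(s,t) \in R$ then $\pi(s) = \pi(t)$ immediately yields $s \in A \iff t \in A$. Hence in both directions the substantive content is the maximality clause.

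For $(\ref{item:pi-V-final}) \Rightarrow (\ref{item:greatest-stable})$, given a stable sub-$\sigma$-algebra $\Lambda' \sbq \Sigma(R)$ I would form the quotient LMP $\lmp{S}/\Lambda'$ with its canonical zigzag projection $\pi': \lmp{S} \to \lmp{S}/\Lambda'$. Since every $B \in \Lambda'$ is $R$-closed, one has $R \sbq \Rel(\Lambda')$, so $\pi'$ is constant on the fibers of $\pi$ and factors as $\pi' = \phi \circ \pi$ for a uniquely determined set map $\phi: S' \to S/\Rel(\Lambda')$. Applying $\functor{V}$-finality of $\pi$ (with $g_0 \defi \pi'$) yields a zigzag $g: \lmp{S}' \to \lmp{S}/\Lambda'$ with $\functor{V}(g) = \phi$, hence $\pi' = g \circ \pi$. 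For any $B \in \Lambda'$, using that $B$ is $\Rel(\Lambda')$-saturated and that $\pi'[B]$ belongs to the quotient $\sigma$-algebra, I would write $B = \pi'^{-1}[\pi'[B]] = \pi^{-1}[g^{-1}[\pi'[B]]] \in \Lambda$, proving $\Lambda' \sbq \Lambda$.

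For the converse, given a zigzag $g_0: \lmp{S} \to \lmp{T}$ and a set map $\phi: S' \to T$ with $g_0 = \phi \circ \pi$, I would define $g \defi \phi$; uniqueness is immediate from surjectivity of $\pi$, so only the zigzag conditions for $g$ need to be checked. The family $\Lambda_{g_0} \defi \{g_0^{-1}[B] \mid B \in \Sigma_T\}$ is stable by Lemma~\ref{lem:preimage-stable} and lies in $\Sigma(R)$ (for $(s,t) \in R$, $g_0(s) = \phi(\pi(s)) = \phi(\pi(t)) = g_0(t)$). Hypothesis (\ref{item:greatest-stable}) then forces $\Lambda_{g_0} \sbq \Lambda$, so each $g_0^{-1}[B]$ equals $\pi^{-1}[Q]$ for some $Q \in \Sigma_{S'}$; since $\pi^{-1}$ is injective on subsets of $S'$ (by surjectivity of $\pi$), $Q$ must coincide with $g^{-1}[B]$, giving measurability of $g$. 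The zigzag identity is then a short computation: picking any $s \in \pi^{-1}[\{s'\}]$ and chaining the zigzag identities of $\pi$ and $g_0$ produces $\tau^{\lmp{S}'}_a(s', g^{-1}[B]) = \tau^{\lmp{T}}_a(g(s'), B)$.

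The main obstacle I expect is the delicate use of surjectivity of $\pi$ in the second implication to transport measurability from $\Sigma$ back to $\Sigma_{S'}$; this is precisely the gap between $\functor{V}$-finality and $\functor{G}$-finality, since in the latter $\phi$ is already assumed measurable. The example preceding the theorem shows that this gap is genuine, and the maximality condition on $\Lambda$ is what bridges it.
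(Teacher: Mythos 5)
Your proposal is correct and follows essentially the same route as the paper's proof: both directions use the quotient $\lmp{S}/\Lambda'$ together with $\functor{V}$-finality to get $\Lambda'\sbq\Lambda$, and Lemma~\ref{lem:preimage-stable} plus maximality to transport measurability of $g$ back to $\Sigma'$ via surjectivity of $\pi$. The only cosmetic difference is that you phrase the last step as injectivity of $\pi^{-1}$ on subsets of $S'$ where the paper writes $g^{-1}[B]=\pi[g_0^{-1}[B]]$; these are the same argument.
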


\begin{proof}
  For \ref{item:pi-V-final}$\Rightarrow$\ref{item:greatest-stable},
  first note that $\Lambda$ consists of $R$-closed subsets by
  construction and it is stable by Lemma~\ref{lem:preimage-stable}.

  Assume $\Lambda_*$ is a stable sub-$\sigma$-algebra of
  $\Sigma(R)$. It follows that
  $R\sbq\Rel(\Lambda_*)$ and there is a canonical
  map $\phi : \functor{V}(\lmp{S}') \to
  \functor{V}(\lmp{S}/\Lambda_*)$ satisfying $\phi \circ
  \functor{V}(\pi) = \functor{V}(\pi_*)$, where $\pi_* : \lmp{S}
  \to \lmp{S}/\Lambda_*$ is  the projection zigzag. By considering $g_0$ in
  Figure~\ref{fig:final} to be $\pi_*$, there must exist a zigzag $g:\lmp{S}' \to
  \lmp{S}/\Lambda_*$ with $\functor{V}(g) =\phi$ making the diagram commute. We show that every
  $A\in \Lambda_*$ belongs to $\Lambda$.
  Note that $A = \pi_*^{-1}[\pi_*[A]] =\pi^{-1}[g^{-1}[\pi_*[A]]]$;
  since $A\in \Lambda_*$, $\pi_*[A]$ is measurable and one can take
  $Q\defi g^{-1}[\pi_*[A]]$ in the definition of $\Lambda$.
  
  For \ref{item:greatest-stable}$\Rightarrow$\ref{item:pi-V-final}, let $g_0:\lmp{S}\to \lmp{T}$ be a zigzag morphism and $\phi:\functor{V}(\lmp{S}') \to \functor{V}(\lmp{T})$ a map such
  that $\functor{V}(g_0)=\phi\circ \functor{V}(\pi)$.
  Analogous to the proof of Lema~\ref{lem:pi-is-G-final}, let
  $g:\lmp{S}' \to \lmp{T}$ be given by $g(\pi(s))=g_0(s)$. 
  By Lemma~\ref{lem:preimage-stable}, $\Theta \defi \{ g_0^{-1}[C]  \mid C \text{ measurable in }\lmp{T} \}$
  is stable. Since $  g_0^{-1}[C] = \pi^{-1}[g^{-1}[C]]$, $\Theta \sbq \Sigma(R)$ and hence
  $\Theta \sbq \Lambda$ by maximality.

  We only have to check that $g$ is measurable. Take $B\sbq T$
  measurable; it is enough to show that $g^{-1}[B] = \pi[A]$ for
  some $A\in\Lambda$. We choose  $A\defi g_0^{-1}[B] \in \Theta \sbq
  \Lambda$. Hence $\pi[A] = \pi[g_0^{-1}[B]] =
  \pi[\pi^{-1}[g^{-1}[B]]] = g^{-1}[B]$, by surjectivity of $\pi$.
\end{proof}
We immediately obtain:
\begin{corollary}\label{cor:charact-V-final}
  Let $\Lambda\sbq \Sigma$ be a stable $\sigma$-algebra. The
  following are equivalent:
  \begin{enumerate}
  \item $\pi:\lmp{S} \to \lmp{S}/\Lambda$ is $\functor{V}$-final.
  \item $\Lambda$ is the greatest
    stable sub-$\sigma$-algebra of $\Sigma(\Rel(\Lambda))$.\qed
  \end{enumerate}
\end{corollary}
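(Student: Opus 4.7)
The plan is to reduce the corollary directly to Theorem~\ref{th:charact-V-final} applied to the canonical projection $\pi : \lmp{S} \to \lmp{S}/\Lambda$. This $\pi$ is a surjective zigzag morphism by the construction of the quotient given just before Lemma~\ref{lem:pi-is-G-final}, so the theorem applies. The whole task is then to match the two auxiliary data in the theorem — the kernel $R$ of $\pi$ and the $\sigma$-algebra $\{\pi^{-1}[Q] \mid Q \text{ measurable in }\lmp{S}/\Lambda\}$ — with $\Rel(\Lambda)$ and $\Lambda$ from the statement of the corollary.

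First, I would check that the kernel of $\pi$ is exactly $\Rel(\Lambda)$. Indeed, the underlying set of $\lmp{S}/\Lambda$ is $S/\Rel(\Lambda)$ by definition, and $\pi$ sends each $s$ to its $\Rel(\Lambda)$-equivalence class; hence $\pi(s)=\pi(t)$ if and only if $(s,t)\in\Rel(\Lambda)$.

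Second, I would show that $\{\pi^{-1}[Q'] \mid Q' \in \Lambda/\Rel(\Lambda)\}=\Lambda$. The measurable sets of $\lmp{S}/\Lambda$ are, by definition, the sets $\pi[Q]$ with $Q\in\Lambda$. The key point is that every $Q\in\Lambda$ is $\Rel(\Lambda)$-closed, directly from the definition of $\Rel(\Lambda)$, and therefore $\pi^{-1}[\pi[Q]]=Q$. This shows that the displayed collection is exactly $\Lambda$.

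Once these two identifications are in place, items (1) and (2) of Theorem~\ref{th:charact-V-final} applied to our $\pi$ become verbatim items (1) and (2) of the corollary, and the proof is finished. There is no real obstacle here: the argument is pure unfolding of definitions, and the only observation with any substance is the identity $\pi^{-1}[\pi[Q]]=Q$ for $\Rel(\Lambda)$-closed $Q$.
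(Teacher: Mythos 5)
Your proof is correct and matches the paper's intent exactly: the corollary is stated as an immediate consequence of Theorem~\ref{th:charact-V-final}, obtained by applying it to the surjective zigzag $\pi:\lmp{S}\to\lmp{S}/\Lambda$ and checking that its kernel is $\Rel(\Lambda)$ and that $\{\pi^{-1}[Q']\mid Q'\in\Lambda/\Rel(\Lambda)\}=\Lambda$ (via $\pi^{-1}[\pi[Q]]=Q$ for $\Rel(\Lambda)$-closed $Q$), which is precisely what you do.
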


In particular,  if we alter the domain of the projection and view it as a function $\pi:(S,\Lambda,\{\tau_a\}_a) \to \lmp{S}/\Lambda$, we will always get a $\functor{V}$-final morphism.
A sufficient condition that appears in many contexts for $\functor{V}$-finality (i.e., it implies the second
item of Corollary~\ref{cor:charact-V-final}) is $\Lambda=\Sigma(\Rel(\Lambda))$;
we do not know if it is necessary.

Back to our discussion of bisimilarities, we can show:
\begin{theorem}\label{th:catco_eq_event}
  The relation $\sim^\catco$ coincides with the restriction of event
  bisimilarity on $\lmp{S}\oplus\lmp{S}'$ to $\inl[S] \times
  \inr[S']$.
\end{theorem}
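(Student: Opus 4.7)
The plan is to prove the two inclusions separately. The forward direction $(\sbq)$ is immediate from Proposition~\ref{prop:ctaco-implies-event-bisim}: any cospan $(\lmp{T},f,g)$ witnessing $s\sim^\catco s'$ already yields that $\inl(s)$ and $\inr(s')$ are (internally) event bisimilar in $\lmp{S}\oplus\lmp{S}'$, which is exactly what the restriction to $\inl[S]\times \inr[S']$ asserts.

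For the reverse inclusion $(\supseteq)$, assume $\inl(s)\sim_\e \inr(s')$. By Corollary~\ref{cor:logic-charac-event} combined with Theorem~\ref{thm:logical-characterization}, event bisimilarity on the sum equals $\Rel(\Lambda)$, where $\Lambda\defi \sigma(\sem{\Logic}_{\lmp{S}\oplus \lmp{S}'})$ is the smallest stable sub-$\sigma$-algebra of $\Sigma\oplus \Sigma'$. I would then invoke the quotient construction introduced earlier in this section (just before Lemma~\ref{lem:pi-is-G-final}): form the quotient LMP $\lmp{T}\defi (\lmp{S}\oplus \lmp{S}')/\Lambda$ together with its canonical projection $\pi:\lmp{S}\oplus \lmp{S}' \to \lmp{T}$, which is a surjective zigzag morphism. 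Define $f\defi \pi\circ \inl$ and $g\defi \pi\circ \inr$; both are zigzag, since $\inl,\inr$ are easily verified to be zigzag morphisms into the sum (the kernel condition follows immediately from Definition~\ref{def:sum-LMP}). The hypothesis $\inl(s)\mathrel{\Rel(\Lambda)}\inr(s')$ unfolds to $\pi(\inl(s))=\pi(\inr(s'))$, i.e., $f(s)=g(s')$, so $(\lmp{T},f,g)$ is the desired cospan witnessing $s\sim^\catco s'$.

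The main obstacle is the surjectivity requirement appearing in the first sentence of Definition~\ref{def:event-bisimulation-ext}: the maps $f$ and $g$ in the construction above are only jointly surjective (through $\pi$), and not individually so in general, since $f[S]$ and $g[S']$ may each miss classes of $\Lambda$ that only the other side hits. The example exhibited immediately after Proposition~\ref{prop:ctaco-implies-event-bisim} (with the isolated state $s'_4$) shows that individual surjectivity of both zigzags in the cospan cannot be enforced in general. The proof therefore relies on reading the definition of $\sim^\catco$ for states (the second sentence of Definition~\ref{def:event-bisimulation-ext}) as requiring only the existence of a cospan of zigzag morphisms satisfying $f(s)=g(s')$, without demanding surjectivity of each leg; that this is the intended reading is consistent both with the wording ``cospan of morphisms'' there and with the authors' explicit remark that surjectivity was never used in the proof of Proposition~\ref{prop:ctaco-implies-event-bisim}.
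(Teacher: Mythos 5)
Your proposal is correct and follows essentially the same route as the paper: the forward inclusion is Proposition~\ref{prop:ctaco-implies-event-bisim}, and the reverse inclusion is obtained by passing to the quotient $(\lmp{S}\oplus\lmp{S}')/\Lambda$ for a stable $\sigma$-algebra $\Lambda$ witnessing the event bisimilarity and taking the cospan $\pi\circ\inl$, $\pi\circ\inr$. Your additional remarks on the surjectivity clause are consistent with the paper's own discussion following Proposition~\ref{prop:ctaco-implies-event-bisim}.
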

\begin{proof}
  The ($\Rightarrow$) direction is
  Proposition~\ref{prop:ctaco-implies-event-bisim}. For
  ($\Leftarrow$), assume $\inl(s)$ and $\inr(s')$ are event bisimilar in
  $\lmp{S}\oplus\lmp{S}'$. Let $\Lambda$ be a stable $\sigma$-algebra
  on $\lmp{S}\oplus\lmp{S}'$ such that $\inl(s)\mathrel{\Rel(\Lambda)}\inr(s')$. Then
  \begin{equation*}%
    \lmp{S} \xrightarrow{\pi\circ\, \inl}  
    (\lmp{S}\oplus\lmp{S}')/\Lambda \xleftarrow{\pi\circ\, 
      \inr} \lmp{S}'
  \end{equation*}
  is a cospan of zigzags witnessing $s\sim^\catco s'$.
\end{proof}
Validity sets of formulas are preserved under sum, so we can conclude
that
\begin{corollary}\label{cor:catco-implies-logic}
  External event bisimilarity coincides with logical equivalence.
\end{corollary}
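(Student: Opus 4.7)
The plan is to chain together three earlier results: Theorem~\ref{th:catco_eq_event}, Corollary~\ref{cor:logic-charac-event}, and Lemma~\ref{lem:zigzag-logic}. The corollary just asserts that $s \sim^\catco s'$ if and only if $s$ and $s'$ satisfy the same formulas of $\Logic$ in their respective LMPs, and the hint in the preceding sentence already points to the key fact that formula validity is preserved under sum.

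First, I would apply Theorem~\ref{th:catco_eq_event} to reduce the statement to one about event bisimilarity: $s \sim^\catco s'$ in the pair $(\lmp{S}, \lmp{S}')$ if and only if $\inl(s) \sim_\e \inr(s')$ in $\lmp{S} \oplus \lmp{S}'$. Next, Corollary~\ref{cor:logic-charac-event} identifies event bisimilarity with logical equivalence, so the latter is in turn equivalent to the condition that $\inl(s)$ and $\inr(s')$ belong to the same validity sets $\sem{\varphi}_{\lmp{S}\oplus\lmp{S}'}$ for every $\varphi \in \Logic$.

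The remaining step is to transport this condition back to the two separate processes. Here I would observe that the canonical injections $\inl:\lmp{S}\to \lmp{S}\oplus\lmp{S}'$ and $\inr:\lmp{S}'\to \lmp{S}\oplus\lmp{S}'$ are zigzag morphisms (immediate from Definition~\ref{def:sum-LMP}, since $\tau_a^\oplus(\inl(x), A\oplus A') = \tau_a(x,A) = \tau_a(x, \inl^{-1}[A\oplus A'])$, and symmetrically for $\inr$). Then Lemma~\ref{lem:zigzag-logic} yields $\inl^{-1}[\sem{\varphi}_{\lmp{S}\oplus\lmp{S}'}] = \sem{\varphi}_{\lmp{S}}$ and $\inr^{-1}[\sem{\varphi}_{\lmp{S}\oplus\lmp{S}'}] = \sem{\varphi}_{\lmp{S}'}$, so $\inl(s)\in \sem{\varphi}_{\lmp{S}\oplus\lmp{S}'}$ iff $s\in \sem{\varphi}_{\lmp{S}}$, and analogously for $\inr(s')$. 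Chaining the three equivalences finishes the proof.

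There is no real obstacle: the argument is a straightforward composition of results already established, and the only thing to verify by hand is that $\inl$ and $\inr$ are zigzag, which is immediate from how the sum LMP is defined. The whole proof should comfortably fit in a few lines.
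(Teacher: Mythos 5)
Your proposal is correct and follows exactly the route the paper intends: the paper's one-line justification (``validity sets of formulas are preserved under sum'') is precisely the combination of Theorem~\ref{th:catco_eq_event}, Corollary~\ref{cor:logic-charac-event} applied to $\lmp{S}\oplus\lmp{S}'$, and Lemma~\ref{lem:zigzag-logic} applied to the zigzag injections $\inl$ and $\inr$. Your fleshed-out version, including the verification that the injections are zigzag, is a faithful expansion of that argument.
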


Since internal bisimilarity implies event bisimilarity, we then have:
\begin{corollary}\label{cor:state-impl-catco}
  $s \mathrel{(\sim_\sfs)}_\times s'$ implies $s\sim^\catco s'$. \qed
\end{corollary}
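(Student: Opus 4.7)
The statement chains together two facts: first, that internal state bisimilarity is contained in internal event bisimilarity; second, the characterization of $\sim^\catco$ as the descent of internal event bisimilarity on the sum $\lmp{S}\oplus\lmp{S}'$ (Theorem~\ref{th:catco_eq_event}).

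My plan is to unpack the hypothesis, translating $s \mathrel{(\sim_\sfs)}_\times s'$ via the definition~\eqref{eq:R_times} of descent into $\inl(s) \sim_{\sfs,\lmp{S}\oplus\lmp{S}'} \inr(s')$. Then the task reduces to showing that $\inl(s) \sim_{\e,\lmp{S}\oplus\lmp{S}'} \inr(s')$, since Theorem~\ref{th:catco_eq_event} then immediately gives $s \sim^\catco s'$.

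To pass from state bisimilarity to event bisimilarity on a single LMP, I would exhibit a stable sub-$\sigma$-algebra witnessing event bisimilarity. Let $R$ be any state bisimulation on $\lmp{S}\oplus\lmp{S}'$ with $\inl(s) \mathrel{R} \inr(s')$; take $\Lambda \defi \Sigma^\oplus(R)$, the $\sigma$-algebra of $R$-closed measurable sets, which is a sub-$\sigma$-algebra by the general remark in Section~\ref{sec:preliminaries}. Stability needs to be verified: given $A \in \Lambda$, $a \in L$, and $r \in [0,1]\cap\Q$, the set $B \defi \{u : \tau_a^\oplus(u,A) > r\}$ is measurable by definition of a Markov kernel, and it is $R$-closed because if $u \mathrel{R} v$ and $u \in B$ then the bisimulation property (applied to the $R$-closed set $A$) yields $\tau_a^\oplus(v,A) = \tau_a^\oplus(u,A) > r$, so $v \in B$. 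Thus $\Lambda$ is stable and $R \sbq \Rel(\Lambda)$, which shows that $R$ is contained in an event bisimulation; hence ${\sim_\sfs} \sbq {\sim_\e}$ internally.

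There is no real obstacle here: all ingredients are either explicit in the excerpt or are one-line verifications. The only point requiring care is bookkeeping of the sum structure, which is handled by working uniformly on $\lmp{S}\oplus\lmp{S}'$ throughout and invoking Theorem~\ref{th:catco_eq_event} only at the end.
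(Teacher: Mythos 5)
Your proof is correct and follows the same route as the paper: translate the hypothesis to $\inl(s)\sim_{\sfs,\lmp{S}\oplus\lmp{S}'}\inr(s')$, pass to event bisimilarity in the sum (the paper simply cites the standard fact that state bisimilarity implies event bisimilarity, which you verify explicitly via the stable $\sigma$-algebra of $R$-closed measurable sets), and conclude by Theorem~\ref{th:catco_eq_event}.
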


\subsubsection*{From $\catco$ to $\catsp$ and coalgebraic bisimilarity}

From \cite{coco}, we know that every span of LMP can be completed to a
commutative square, and hence ${\sim^\catsp} \subseteq {\sim^\catco}$ holds in
general. Whenever cospans can be completed to square, we have the reverse inclusion.
\begin{lemma}\label{lem:semi-pullbacks}
  For every category of LMP having semi-pullbacks, ${\sim^\catsp} =
  {\sim^\catco}$.
\end{lemma}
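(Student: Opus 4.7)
The plan is to prove only the inclusion ${\sim^\catco} \subseteq {\sim^\catsp}$, since the reverse is stated above to hold in general thanks to pushouts in $\LMP$ (following \cite{coco}). So suppose $s \sim^\catco s'$ is witnessed by a cospan of surjective zigzags $\lmp{S} \xrightarrow{f} \lmp{T} \xleftarrow{g} \lmp{S}'$ with $f(s) = g(s')$. I would then apply the semi-pullback hypothesis to this particular cospan to obtain an LMP $\lmp{Z}$ and zigzag morphisms $p:\lmp{Z}\to\lmp{S}$, $q:\lmp{Z}\to\lmp{S}'$ making the square $f\circ p = g\circ q$ commute. This $(\lmp{Z}, p, q)$ is the candidate span witnessing $s\sim^\catsp s'$.

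The only remaining step is to exhibit a point $z\in Z$ with $p(z) = s$ and $q(z) = s'$. This is where the word “semi” is doing work: in the coalgebraic bisimulation literature (going back to Edalat and used throughout the sequel of \cite{coco}), a semi-pullback of a cospan is required not just to make the square commute in $\LMP$, but to map surjectively onto the set-theoretic pullback $\{(x,y)\in S\times S'\mid f(x)=g(y)\}$. Under that convention, since the pair $(s,s')$ lies in this pullback, it lifts to some $z\in Z$ with $p(z) = s$, $q(z) = s'$, and we are done.

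The main obstacle is thus conceptual rather than technical, namely pinning down the right formulation of semi-pullback so that the lift of the compatible pair $(s,s')$ is available. If the definition in force only demands commutativity of the square, one would have to argue additionally that, for the specific cospan and the specific points involved, a semi-pullback containing $(s,s')$ exists; typically this is handled by restricting to a suitable subprocess (for instance, using thick sub-LMPs as discussed around the generated-process construction) before invoking the hypothesis. With the standard convention, however, the proof is essentially a one-line diagram chase.
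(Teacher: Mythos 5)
Your proof is correct and is essentially the argument the paper intends: the lemma is stated there without an explicit proof, justified only by the preceding remark that ${\sim^\catsp}\subseteq{\sim^\catco}$ follows from pushouts and that the reverse inclusion holds ``whenever cospans can be completed to square.'' Your caveat about needing a point of the semi-pullback lying over the pair $(s,s')$ is exactly the right one to raise --- the paper's own gloss that a semi-pullback ``only completes the commutative square'' would not by itself suffice (an empty completing object would satisfy it) --- and your resolution matches how the lemma is actually used in Theorem~\ref{th:coalg-cat-coanalytic}, where the constructed semi-pullbacks are pullbacks in $\Set$ and hence contain $(s,s')$.
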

The hypothesis does not hold in general as it was shown in \cite[Thm.~12]{Pedro20111048}. 
Nevertheless, a fair amout of research has been directed to the recovery of semi-pullbacks
by considering LMP over better behaved spaces.
An important antecedent is \cite{Edalat}, which deals with LMP over analytic spaces with universally measurable transitions. 
Recall that \cite{Desharnais} introduced \textit{generalized} LMP to take advantage of this result. 
These are analogous to LMP with an analytic state space but the functions $\tau_a(\cdot,A)$
are only required to be universally measurable for each $A\in \Sigma$.

Later it was proved that the category of LMP over analytic spaces has
semi-pullbacks (more precisely, for the category of stochastic relations over
analytic spaces), and furthermore, the upper vertex of the required span can be
chosen to be Polish \cite[Thm.~5.2]{Doberkat:2005:SSR:1089905.1089907}.

The most recent results in this direction ensure the existence of semi-pullbacks
for LMP over separable metrizable spaces where $\tau_a(s,\cdot)$ is a Radon measure
for each $s$ and $a$ \cite[Thm.~5.1]{PachlST} and, in particular, for LMP over
universally measurable separable spaces \cite[Thm.~5.3]{PachlST}.

It can be checked from the proofs in \cite{PachlST} that the semi-pullbacks
constructed in each case turn
out to be pullbacks in $\Set$ and therefore they are relations. We conclude:
\begin{theorem}\label{th:coalg-cat-coanalytic}
  Coalgebraic bisimilarity $\sim^\coalg$, external event bisimilarity
  $\sim^\catco$, and hence also categorical bisimilarity $\sim^\catsp$ are all the same on the categories of
  LMP over:
  \begin{enumerate}
  \item
    universally measurable separable spaces;
  \item coanalytic sets; and
  \item {\cite{Doberkat:2005:SSR:1089905.1089907}} analytic sets.
  \end{enumerate}
\end{theorem}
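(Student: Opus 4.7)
The plan is to close the chain of inclusions ${\sim^\coalg} \subseteq {\sim^\catsp} \subseteq {\sim^\catco}$, which is already known in full generality (the first from the fact that coalgebraic bisimulations are a special case of spans, and the second as recalled in the paragraph preceding Lemma~\ref{lem:semi-pullbacks}). So the only thing to establish is ${\sim^\catco} \subseteq {\sim^\coalg}$ under each of the three hypotheses, and in fact it is enough to show ${\sim^\catco} \subseteq {\sim^\coalg}$ in the setting where the semi-pullback constructions of \cite{PachlST} and \cite{Doberkat:2005:SSR:1089905.1089907} apply.

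Concretely, I would start with states $s,s'$ witnessed by a cospan of surjective zigzags $\lmp{S} \xrightarrow{f} \lmp{T} \xleftarrow{g} \lmp{S}'$ with $f(s)=g(s')$ (Definition~\ref{def:event-bisimulation-ext}). The key step is to invoke the appropriate semi-pullback theorem in each of the three cases (\cite[Thm.~5.3]{PachlST} for universally measurable separable state spaces, \cite[Thm.~5.1]{PachlST} specialized to coanalytic spaces, and \cite[Thm.~5.2]{Doberkat:2005:SSR:1089905.1089907} for analytic spaces) to obtain an LMP $\lmp{W}$ in the same class together with zigzag morphisms $\pi_1:\lmp{W}\to\lmp{S}$ and $\pi_2:\lmp{W}\to\lmp{S}'$ satisfying $f\circ\pi_1=g\circ\pi_2$.

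Next I would appeal to the observation (made just before the theorem in the paper) that in all three cited constructions the underlying set of $\lmp{W}$ can be taken to be the set-theoretic pullback $R\defi\{(a,b)\in S\times S' \mid f(a)=g(b)\}$ with $\pi_1,\pi_2$ the coordinate projections. Since $f(s)=g(s')$, we have $(s,s')\in R$. Because $\lmp{W}$ is an LMP whose underlying relation is $R$ and whose projections are zigzag, it realizes $R$ as a $\Delta^L$-coalgebra making the square of Definition~\ref{def:coalgebraic-bisimulation} commute; hence $R$ is a $\coalg$-bisimulation relating $s$ and $s'$, giving $s\sim^\coalg s'$. Combined with the a priori chain ${\sim^\coalg}\subseteq{\sim^\catsp}\subseteq{\sim^\catco}$, this yields the three-fold equality, and Lemma~\ref{lem:semi-pullbacks} then gives the claim about $\sim^\catsp$ as a bonus.

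The main obstacle — really the only non-bookkeeping step — is the assertion that the semi-pullback constructed in each cited reference is actually a pullback in $\Set$, i.e., that one may take $W$ to be precisely $\{(a,b) : f(a)=g(b)\}$ with coordinate projections. This requires inspecting the explicit constructions in \cite{PachlST} and \cite{Doberkat:2005:SSR:1089905.1089907}, checking that the $\sigma$-algebra and Markov kernels they build sit on top of this set (rather than on some auxiliary measurable space that only projects onto it), and verifying that the class of state spaces considered is closed under forming such a $W$. Once this is confirmed, the rest of the argument is the formal chain of implications outlined above.
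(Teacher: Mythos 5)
Your proposal is correct and takes essentially the same route as the paper: the paper likewise closes the chain ${\sim^\coalg}\subseteq{\sim^\catsp}\subseteq{\sim^\catco}$ by invoking the semi-pullback results of \cite{PachlST} and \cite{Doberkat:2005:SSR:1089905.1089907} and observing that the semi-pullbacks constructed there are pullbacks in $\Set$, hence relations carrying an LMP structure with zigzag projections, i.e., coalgebraic bisimulations. The one step you flag as requiring inspection of the cited constructions is precisely the step the paper itself disposes of with ``it can be checked from the proofs in \cite{PachlST}.''
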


\subsection{External bisimilarities based on sums}
\label{sec:oplus-bisimilarity}

Now in the general setting of two possibly different LMP $\lmp{S}$ and $\lmp{S}'$, one possibility is the approach taken in \cite[Def.~3.5.1]{Desharnais}.
We adapt the name of the bisimulation defined there to our purposes.

\begin{definition}\label{def:oplus-bisimulation}
	An \emph{$\oplus$-bisimulation} (read ``oplus bisimulation'') between $\lmp{S}$ and $\lmp{S'}$ is an equivalence relation $R$ on $S\oplus S'$ such that for $s \in S$ and $s' \in S'$ with $\inl(s)\mathrel{R}\inr(s')$, and for every $A\oplus A'\in (\Sigma\oplus\Sigma')(R)$, the following holds:
	\[ \forall a\in L \; \tau_a(s,A)=\tau_a'(s',A').\]
\end{definition}

We mentioned that the interest in defining an external state bisimulation $R$ is to detect two bisimilar states, one in each process. 
The above definition takes this premise to the extreme: It does not impose any conditions on the Markov kernel values for pairs of states in the \textit{same} process; the only requirement is that the relation is an equivalence. 
This is unfortunate because, for instance, such a relation restricted
to the first space could identify  points that are not $R$-related
to any other in the second space. In an extreme case, $R$ could be the union of two equivalence relations, one in each process.

Based on the definition of $\tau^\oplus$ and the equivalence \eqref{eq:equiv-oplus-kernels}, we can see that the condition for being an $\oplus$-bisimulation is the same as that of internal bisimulations in $\lmp{S}\oplus\lmp{S}'$, except that it applies to fewer pairs. 
From this, we immediately deduce the following straightforward result.

\begin{lemma}\label{lem:internal-equiv-implies-oplus}
	If $R$ is an internal bisimulation on the sum $\lmp{S}\oplus \lmp{S'}$ and is an equivalence relation, then it is an $\oplus$-bisimulation between $\lmp{S}$ and $\lmp{S'}$.
\end{lemma}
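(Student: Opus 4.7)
The plan is to unwind the definitions and observe that the conclusion is essentially immediate from the relationship between $\tau^\oplus$ and the original kernels recorded in equivalence \eqref{eq:equiv-oplus-kernels}. Since the hypothesis already gives us that $R$ is an equivalence relation on $S \oplus S'$, the only thing left to verify is the kernel-agreement condition of Definition~\ref{def:oplus-bisimulation}.

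First I would fix $s \in S$ and $s' \in S'$ with $\inl(s) \mathrel{R} \inr(s')$, and an $R$-closed measurable set of the form $A \oplus A' \in (\Sigma \oplus \Sigma')(R)$. Because $R$ is an internal state bisimulation on $\lmp{S} \oplus \lmp{S}'$ and $\inl(s) \mathrel{R} \inr(s')$, applying Definition~\ref{def:state-bisimulation} to the set $A \oplus A'$ yields
\[
  \tau_a^\oplus(\inl(s), A \oplus A') = \tau_a^\oplus(\inr(s'), A \oplus A') \qquad \text{for every } a\in L.
\]
Next, I would invoke the definition of the sum kernel (Definition~\ref{def:sum-LMP}), which gives $\tau_a^\oplus(\inl(s), A\oplus A') = \tau_a(s,A)$ and $\tau_a^\oplus(\inr(s'), A\oplus A') = \tau'_a(s',A')$, so that the displayed equality is exactly $\tau_a(s,A) = \tau'_a(s',A')$. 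Together with the equivalence hypothesis on $R$, this is precisely the condition demanded by Definition~\ref{def:oplus-bisimulation}.

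Since every step is just a direct application of the definitions plus \eqref{eq:equiv-oplus-kernels}, I do not anticipate any real obstacle; the proof is essentially one line. The only conceptual point worth emphasizing, as the paragraph preceding the lemma already notes, is that the $\oplus$-bisimulation requirement is strictly weaker than the internal bisimulation requirement in the sum: the former only constrains pairs of the form $(\inl(s), \inr(s'))$, whereas the latter constrains all related pairs. Thus the implication holds trivially in the stated direction.
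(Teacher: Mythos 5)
Your proof is correct and matches the paper's (implicit) argument exactly: the paper states the lemma as an immediate consequence of the observation that, via \eqref{eq:equiv-oplus-kernels}, the $\oplus$-bisimulation condition is the internal-bisimulation condition on $\lmp{S}\oplus\lmp{S}'$ restricted to fewer pairs. Your unwinding of Definitions~\ref{def:state-bisimulation}, \ref{def:sum-LMP} and \ref{def:oplus-bisimulation} is precisely the intended one-line verification.
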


In particular, this implies that  state bisimilarity on $\lmp{S}\oplus \lmp{S'}$ is an $\oplus$-bisimulation. 
On the other hand, considering the observation following Definition~\ref{def:oplus-bisimulation}, it is clear that not every $\oplus$-bisimulation $R$ is an internal bisimulation on $\lmp{S}\oplus \lmp{S'}$. 
For example, the relation $R=\id_{U\oplus U}\cup \{(\inl(s),\inl(t)),(\inl(t),\inl(s))\}$ is an $\oplus$-bisimulation between $\lmp{U}$ and $\lmp{U}$, but this relation is not an internal bisimulation in $\lmp{U}\oplus\lmp{U}$ because $V\oplus \emptyset$ is $R$-closed. 
Furthermore, it cannot be contained in any internal bisimulation on the sum, as shown in Proposition~\ref{prop:state-bisim-in-sum}.

We complete Definition~\ref{def:oplus-bisimulation} with the corresponding bisimilarity.
Instead of defining it as a relation on the sum, to facilitate comparisons with other bisimilarity relations, we will define it as follows:

\begin{definition}\label{def:sim-oplus}
	If $s\in S$ and $s'\in S'$, we will say that $s\sim^\oplus s'$ if
	there exists an $\oplus$-bisimulation $R$ between $\lmp{S}$ and
	$\lmp{S'}$ such that $(\inl(s),\inr(s'))\in R$.
\end{definition}

This way, ${\sim^\oplus}\sbq S\times S'$, and therefore it cannot be an $\oplus$-bisimulation. 
Note that in the definition, we state that two states can be $\oplus$-bisimilar only if they are in different summands.

It is worth noting that if $R$ is an $\oplus$-bisimulation, then its
descent~\eqref{eq:R_times} $R_\times$ is a subset of ${\sim}^\oplus$. 
If we write $T$ to be the union of all $\oplus$-bisimulations (which is the usual way to define bisimilarity), then it follows that $T_\times={\sim}^\oplus$.

From this, we can conclude a simple consequence of the previous Lemma~\ref{lem:internal-equiv-implies-oplus}:
\begin{corollary}\label{cor:R-oplus-sbq-sim-oplus}
	$({\sim}_\sfs)_\times \sbq {\sim}^\oplus$.
\end{corollary}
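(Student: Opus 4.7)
The plan is to apply Lemma~\ref{lem:internal-equiv-implies-oplus} directly to the state bisimilarity on the sum $\lmp{S}\oplus\lmp{S}'$. The argument should be essentially a one-liner chaining definitions.

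First I would observe that ${\sim}_{\sfs,\lmp{S}\oplus\lmp{S}'}$ is, by Proposition~\ref{prop:union-bisimulation}, an internal state bisimulation on $\lmp{S}\oplus\lmp{S}'$; moreover, appealing to the known fact (also invoked after Proposition~\ref{prop:same-LMP-ext-bisi-equal-internal}, via \cite[Prop.~4.12]{coco}) that every state bisimulation is contained in one which is an equivalence relation, the maximal such bisimilarity is itself an equivalence relation. Hence the hypotheses of Lemma~\ref{lem:internal-equiv-implies-oplus} are met, and ${\sim}_{\sfs,\lmp{S}\oplus\lmp{S}'}$ is an $\oplus$-bisimulation between $\lmp{S}$ and $\lmp{S}'$.

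Now I would unfold the definitions to conclude. Suppose $(s,s')\in ({\sim}_\sfs)_\times$; by the definition of the descent~\eqref{eq:R_times}, this means exactly that $\inl(s)\sim_{\sfs,\lmp{S}\oplus\lmp{S}'}\inr(s')$. Since ${\sim}_{\sfs,\lmp{S}\oplus\lmp{S}'}$ is an $\oplus$-bisimulation witnessing this related pair, Definition~\ref{def:sim-oplus} yields $s\sim^\oplus s'$, as desired.

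There is no real obstacle here: the nontrivial work is entirely packaged into Lemma~\ref{lem:internal-equiv-implies-oplus} and the fact that $\sim_\sfs$ is an equivalence. The only thing one might worry about is whether the bisimilarity on the sum is genuinely an equivalence (as opposed to merely being contained in one), but this is standard and already implicitly used elsewhere in the paper.
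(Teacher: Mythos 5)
Your proposal is correct and follows exactly the paper's route: the paper derives this corollary as an immediate consequence of Lemma~\ref{lem:internal-equiv-implies-oplus} applied to ${\sim}_{\sfs,\lmp{S}\oplus\lmp{S}'}$ (which is an internal bisimulation on the sum and an equivalence relation), combined with the observation that the descent of any $\oplus$-bisimulation is contained in ${\sim}^\oplus$. Your extra care in justifying that the bisimilarity on the sum is genuinely an equivalence is fine and matches facts the paper already uses implicitly.
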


The reverse inclusion does not hold. 
Following Lemma~\ref{lem:internal-equiv-implies-oplus}, we have given an example showing that not every $\oplus$-bisimulation is an internal bisimulation on $\lmp{S}\oplus \lmp{S'}$.
We will see in the next example that we can also have an $\oplus$-bisimulation $R$ between $\lmp{S}$ and $\lmp{S'}$ such that $R_\times$ cannot be included in any $\times$-bisimulation.
In particular, $R\cap (\inl[S]\times\inr[S'])$ does not lie within any internal bisimulation on the sum.

\begin{example}\label{exm:oplus-notsbq-cospan}
	Consider $\lmp{U}_B$ from Example~\ref{exm:zigzag-not-state-domain} and versions of $(\lmp{U}_B)_s$ and $(\lmp{U}_B)_t$	(cfr. Example~\ref{exm:ext-bisim-not-transitive}) where the $\infty$-label kernels have been replaced by \begin{equation*}
		\begin{aligned}
			\tau^{(\lmp{U}_B)_s}_\infty(r,A)&\defi \chi_{\{s \}}(r)\cdot \leb(A), \\
			\tau^{(\lmp{U}_B)_t}_\infty(r,A)&\defi \chi_{\{t \}}(r)\cdot \bigl( \tfrac{1}{2} \cdot \delta_{\frac{1}{4}}(A\cap (0,\tfrac{1}{2})) + \leb(A\cap [\tfrac{1}{2},1))\bigr).
		\end{aligned}
	\end{equation*}
	First, let us show that $s$ and $t$ are not logically equivalent. 
	Note that $\forall q\in\Inter\cap\Q \, \sem{\pos{q}_{\geq 1}\top} = B_q$.
	For any $q\in \Inter\cap\Q$ such that $q<\frac{1}{4}$, a state $r$ satisfies $\pos{\infty}_{\geq q}\pos{q}_{\geq 1}\top$ if and only if $\tau_\infty(r,B_q)\geq q$.
	Given that $\tau^{(\lmp{U}_B)_s}_\infty(s,B_q)=q$ and $\tau^{(\lmp{U}_B)_t}_\infty(t,B_q)=0$, we can distinguish $s$ and $t$ with a formula.

	Now, take any non Borel $D \sbq (\frac{1}{2},1)$.
	Let $R$ be the equivalence relation on the sum $(\lmp{U}_B)_s \oplus
        (\lmp{U}_B)_t$ whose equivalence classes are $\{\inl(s),\inr(t)\}$,
        $\{ \inl(r),\inr(r)\}$ with $r\in D\cup\{x\}$,
        $\{\inl(r)\mid r\in I\sm D \}$,
        and $\{\inr(r)\mid r\in I\sm D\}$.
	That is, $R$ identifies $s$ and $t$, each point in $D\cup\{x\}$ with its copy on the other process, and clusters each copy of the set $I\sm D$ independently.
	We will show that $R$ is an $\oplus$-bisimulation.
	
	Let $A$ be an $R$-closed measurable set. Proving that the
	$\oplus$-bisimulation definition holds for pairs $(\inl(r),\inr(r))$
	is direct, as $\inl(x)\in A \iff \inr(x)\in A$ is true.
	
	For $(\inl(s),\inr(t))$, we need to show that $\tau_\infty^\oplus(\inl(s),A) = \tau_\infty^\oplus(\inr(t),A)$.
	If $A$ has empty intersection with $I\sm D\oplus I\sm
	D$, it is of the form $B\oplus B$ with $B$ Borel measurable and
	the conclusion holds.
	
	Otherwise, and without loss of generality, $\inl(r)\in A$ for some
	$r\in I\sm D$. Then $\inl[I\sm D]\sbq A$. Note also that
	$\inr[(A\cap \inl[D])\rest (\lmp{U}_B)_s] = A \cap \inr[D]$.
	Moreover, given that $A\cap \inl[I]$ is Borel, it
	must be the case that $A\cap \inl[D]$ is not in $\Borel(I)$, otherwise $D$ would be Borel.
	Therefore $A\cap \inr[D] \subsetneq A \cap \inr[I]$ and finally $\inr[I\sm D]\sbq A$.
	
	Hence, $A\cap (I\oplus I)$ can only be of the form $B\oplus B$ where $B$ is a
	Borel subset of $I$ satisfying $B\subset D$ or $I\sm D \subset B$.
	As a consequence, the required measures coincide.
	 
	Note that all cases have been considered,
        as an $\oplus$-bisimulation only requires checking states in different summands.
	Thus, the relation $R$ witnesses that $s\sim^\oplus t$. 
	In summary, ${\sim}^\oplus \nsubseteq {\sim}^\catco$.
	
\end{example}

Using this Example and Corollary~\ref{cor:state-impl-catco}  we conclude that ${\sim}^\oplus \nsubseteq ({\sim}_\sfs)_\times$.
Moreover, Corollary~\ref{cor:ext-bis-sbq-R-oplus} then implies that ${\sim}^\oplus \nsubseteq {\sim}^\times$.

Next, we will explore a connection between $\oplus$- and $\times$-bisimulations.
Corollaries~\ref{cor:R-oplus-sbq-sim-oplus} and \ref{cor:ext-bis-sbq-R-oplus} indicate that ${\sim}^\times\sbq ({\sim}_\sfs)_\times \sbq {\sim}^\oplus$. 
We observe that in Example~\ref{exm:oplus-notsbq-cospan}, the equivalence relation there is not measurable. 
In the following, we will give sufficient measurability conditions for $R_\times$ to be an $\times$-bisimulation whenever $R$ is an $\oplus$-bisimulation.

The following concept will be useful: A pair $(A,A')$ is \emph{saturated} if $R[A]=A'$ and $A=R^{-1}[A']$.

\begin{lemma}\label{lem:R-oplus-aux}
	Let $R$ be an $\oplus$-bisimulation between $\lmp{S}$ and $\lmp{S'}$.
	\begin{enumerate}
		\item\label{item:R-oplus-aux-sets} If $(A,A')$ is a $R_\times$-saturated pair, then $A\oplus A'$ is $R$-closed.
		\item\label{item:R-oplus-aux-kernels} If $\dom(R_\times)\in \Sigma$, then $\forall s\in \dom(R_\times)\; \forall a\in L \; \tau_a(s,S\sm \dom(R_\times))=0$.
	\end{enumerate}
\end{lemma}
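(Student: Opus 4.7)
For item~\ref{item:R-oplus-aux-sets}, I would unfold the definition of $R$-closedness and perform a case analysis on the locations (left or right summand) of $u$ and $v$. Given $v\in A\oplus A'$ and $u\mathrel{R}v$ (by symmetry of $R$ this is the only direction to check), assume first that $v=\inl(s_0)$ with $s_0\in A$. Since $R_\times^{-1}[A']=A$, pick $s_0'\in A'$ with $s_0\mathrel{R_\times}s_0'$; equivalently, $\inl(s_0)\mathrel{R}\inr(s_0')$. If $u=\inl(s_1)$, transitivity gives $\inl(s_1)\mathrel{R}\inr(s_0')$, i.e., $s_1\mathrel{R_\times}s_0'$, so $s_1\in R_\times^{-1}[A']=A$. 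If instead $u=\inr(s_1')$, then $s_0\mathrel{R_\times}s_1'$, and $R_\times[A]=A'$ forces $s_1'\in A'$. The case $v=\inr(s_0')$ is handled symmetrically using the other half of saturation.

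For item~\ref{item:R-oplus-aux-kernels}, the plan is to exhibit a specific $R$-closed measurable subset of $S\oplus S'$ that isolates $S\sm\dom(R_\times)$, and then apply the $\oplus$-bisimulation condition. Set $B\defi S\sm\dom(R_\times)$ and $C\defi B\oplus\emptyset=\inl[B]$; by hypothesis $B\in\Sigma$, so $C$ is measurable. To see $C$ is $R$-closed, take $v=\inl(s_0)$ with $s_0\in B$ and $u\mathrel{R}v$. If $u=\inr(s_1')$ then $s_0\mathrel{R_\times}s_1'$, contradicting $s_0\notin\dom(R_\times)$, so this case is impossible. If $u=\inl(s_1)$ and we assume $s_1\in\dom(R_\times)$, take $s_1'$ with $s_1\mathrel{R_\times}s_1'$ and transitivity yields $s_0\mathrel{R_\times}s_1'$, the same contradiction; hence $s_1\in B$. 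Now fix $s\in\dom(R_\times)$ and pick $s'\in S'$ with $\inl(s)\mathrel{R}\inr(s')$. Applying the $\oplus$-bisimulation property to $C$ gives
\[
  \tau_a(s,B)=\tau_a^{\oplus}(\inl(s),C)=\tau_a^{\oplus}(\inr(s'),C)=\tau'_a(s',\emptyset)=0,
\]
as required.

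I expect the main obstacle to be the careful bookkeeping in item~\ref{item:R-oplus-aux-sets}: there are four sub-cases (depending on which summand each of $u$ and $v$ lies in) and each must be dispatched by invoking the appropriate half of $R_\times$-saturation together with the equivalence properties of $R$. Item~\ref{item:R-oplus-aux-kernels} is then a short application once the correct test set $C$ is identified; the subtlety there is recognising that putting $\emptyset$ on the right summand keeps $C$ manifestly $R$-closed and sidesteps any need for $R_\times$ to have a measurable image.
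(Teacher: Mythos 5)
Your proof is correct and follows essentially the same route as the paper: item~1 by the same case analysis on the summands using the two halves of saturation together with transitivity of $R$, and item~2 by showing that $(S\sm\dom(R_\times))\oplus\emptyset$ is an $R$-closed measurable set and applying the $\oplus$-bisimulation condition to a pair $\inl(s)\mathrel{R}\inr(s')$. No gaps.
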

\begin{proof}
  For the first part, let $x\in A$ and consider the case $\inl(x)\mathrel{R}\inr(y)$.
  Since $R_\times[A]\sbq A'$, we have $y\in A'$. 
  Now suppose $\inl(x)\mathrel{R}\inl(y)$. 
  Since $(A,A')$ is saturated, there exists $x'\in A'$ such that $x\mathrel{R_\times} x'$. 
  Hence, $\inl(x)\mathrel{R} \inr(x')$.  
  Since $R$ is an equivalence relation, we have $\inl(y)\mathrel{R} \inr(x')$, and as $(R_\times)^{-1}[A']\sbq A$, we conclude that $y\in A$.
  The case for $x$ such that $\inr(x)\mathrel{R}z$ for some $z$ is similar.
		
  For the second item, we first prove that $S_{\mathrm{anti}}\defi (S\sm \dom(R_\times))\oplus \emptyset$ is $R$-closed: Let $\inl(x)\in S_{\mathrm{anti}}$ and assume that $\inl(x)\mathrel{R} z$ for some $z$. 
  Then, there exists $y$ such that $z=\inl(y)$. Otherwise, we would immediately have $x\in \dom(R_\times)$.  
  Now, by way of contradiction, assume $y\in \dom(R_\times)$.  
  Then there exists $y'\in \img(R_\times)$ such that $y\mathrel{R_\times} y'$.
  Consequently, $\inl(y)\mathrel{R} \inr(y')$, and by transitivity of $R$, we have $\inl(x)\mathrel{R}\inr(y')$.  
  Thus, $x\in \dom(R_\times)$, which is a contradiction.
  
  Now, consider $s'\in \img(R_\times)$ such that $s\mathrel{R_\times} s'$.
  Since $S_{\mathrm{anti}}$ is $R$-closed and measurable, for all $a\in L$, we conclude
    \[
      \tau_a(s,S\sm \dom(R_\times)) =
      \tau_a^\oplus(\inl(s),S_{\mathrm{anti}}) =
      \tau_a^\oplus(\inr(s'),S_{\mathrm{anti}}) =
      \tau'_a(s',\emptyset) = 0. \qedhere
    \]
\end{proof}

\begin{prop}\label{prop:case-R-oplus-is-ext}
	Let $R$ be an $\oplus$-bisimulation between $\lmp{S}$ and $\lmp{S'}$ such that $\dom(R_\times)\in \Sigma$ and $\img(R_\times)\in \Sigma'$, then $R_\times$ is a $\times$-bisimulation.
\end{prop}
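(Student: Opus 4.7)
The plan is to reduce a general $R_\times$-closed measurable pair $(A,A')$ to a saturated one in which the sets are entirely contained in $\dom(R_\times)$ and $\img(R_\times)$, then invoke Lemma~\ref{lem:R-oplus-aux} to transport the equality through the $\oplus$-bisimulation condition. The measurability hypotheses on $\dom(R_\times)$ and $\img(R_\times)$ will be used precisely at this step.

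In detail, fix $(s,s')\in R_\times$ and an $R_\times$-closed measurable pair $(A,A')$. Set
\[
  \bar A \defi A\cap \dom(R_\times), \qquad \bar A' \defi A'\cap \img(R_\times),
\]
which are measurable by hypothesis. The first key point is that $(\bar A,\bar A')$ is \emph{saturated}. Indeed, $R_\times[\bar A]\sbq R_\times[A]\cap \img(R_\times)\sbq A'\cap \img(R_\times)=\bar A'$ using $R_\times$-closedness of $(A,A')$; conversely, every $y\in \bar A'$ has some $x$ with $x\mathrel{R_\times}y$, and then $x\in R_\times^{-1}[A']\sbq A$ and $x\in \dom(R_\times)$, so $x\in\bar A$, giving $y\in R_\times[\bar A]$. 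The dual argument shows $R_\times^{-1}[\bar A']=\bar A$.

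By Lemma~\ref{lem:R-oplus-aux}(\ref{item:R-oplus-aux-sets}), $\bar A\oplus \bar A'$ is $R$-closed, and it is clearly measurable. Since $\inl(s)\mathrel{R}\inr(s')$ and $R$ is an $\oplus$-bisimulation, we obtain for every $a\in L$
\[
  \tau_a(s,\bar A)=\tau_a^\oplus(\inl(s),\bar A\oplus \bar A')
  =\tau_a^\oplus(\inr(s'),\bar A\oplus \bar A')
  =\tau'_a(s',\bar A').
\]
Finally, Lemma~\ref{lem:R-oplus-aux}(\ref{item:R-oplus-aux-kernels}) gives $\tau_a(s,S\sm\dom(R_\times))=0$ (since $s\in\dom(R_\times)$) and, by the symmetric argument applied to $R^{-1}$ (which is again an $\oplus$-bisimulation because $R$ is an equivalence), also $\tau'_a(s',S'\sm\img(R_\times))=0$. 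Hence $\tau_a(s,A)=\tau_a(s,\bar A)$ and $\tau'_a(s',A')=\tau'_a(s',\bar A')$, concluding $\tau_a(s,A)=\tau'_a(s',A')$.

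The only subtle step is the verification that $(\bar A,\bar A')$ is saturated, which crucially uses that $R$ is an equivalence (so $R_\times$ descends from a transitive relation). Everything else is an essentially formal rearrangement; the real content of the proposition lies in the measurability hypotheses, which make the restrictions $\bar A$, $\bar A'$ admissible test sets and let us discard the ``irrelevant'' complements via Lemma~\ref{lem:R-oplus-aux}(\ref{item:R-oplus-aux-kernels}).
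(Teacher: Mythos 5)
Your proof is correct and follows essentially the same route as the paper's: the same decomposition of $(A,A')$ into a saturated part inside $(\dom(R_\times),\img(R_\times))$ plus a remainder of measure zero, with both facts supplied by Lemma~\ref{lem:R-oplus-aux}. The only immaterial difference is that you obtain $\tau'_a(s',S'\sm\img(R_\times))=0$ by a symmetry argument, whereas the paper deduces it from conservation of total mass (using that $S\oplus S'$ is $R$-closed together with $\tau_a(s,\dom(R_\times))=\tau'_a(s',\img(R_\times))$); both are valid.
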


\begin{proof}
	First, we note that $(\dom(R_\times),\img(R_\times))$ is a $R_\times$-saturated pair.
	If $(s,s')\in R_\times$, by Lemma~\ref{lem:R-oplus-aux}(\ref{item:R-oplus-aux-sets}), we have $\tau_a(s,\dom(R_\times)) = \tau'_a(s',\img(R_\times))$. 
	We use this equality along with the fact that $S\oplus S'$ is $R$-closed and Lemma~\ref{lem:R-oplus-aux}(\ref{item:R-oplus-aux-kernels}) to calculate $\tau'_a(s',S'\sm \img(R_\times))= \tau'_a(s',S')-\tau'_a(s',\img(R_\times) = \tau_a(s,S)-\tau_a(s,\dom(R_\times)) = \tau_a(s,S\sm \dom(R_\times))=0$.
	
	Now, let $(A,A')$ be a $R_\times$-closed measurable pair. We write 
	\begin{align*}
		A &= (A\cap \dom(R_\times)) \cup (A\sm \dom(R_\times)) = \mathrel{\mathop:} A_{\mathrm{sat}}\cup A_{\mathrm{anti}},\\
		A' &= (A'\cap \img(R_\times)) \cup (A'\sm \img(R_\times)) = \mathrel{\mathop:} A'_{\mathrm{sat}}\cup A'_{\mathrm{anti}}.
	\end{align*}
	We observe that $(A_{\mathrm{sat}},A'_{\mathrm{sat}})$ is a $R_\times$-saturated measurable pair, $A_{\mathrm{anti}}\sbq S\sm\dom(R_\times)$, and $A'_{\mathrm{anti}}\sbq S'\sm \img(R_\times)$. 
	Therefore, $\tau_a(s,A_{\mathrm{sat}})= \tau'_a(s',A'_{\mathrm{sat}})$; $\tau_a(s,A_{\mathrm{anti}})\leq \tau_a(s,S\sm \dom(R_\times))=0$, and $\tau'_a(s',A'_{\mathrm{anti}})\leq \tau'_a(s',S'\sm \img(R_\times))=0$. Hence,
	\begin{align*}
	\tau_a(s,A) &= \tau_a(s,A_{\mathrm{sat}})+ \tau_a(s,A_{\mathrm{anti}}) = \tau_a(s',A'_{\mathrm{sat}}) + 0 \\
	&= \tau'_a(s',A'_{\mathrm{sat}}) + \tau'_a(s',A'_{\mathrm{anti}}) = \tau_a(s',A'). \qedhere
	\end{align*}
\end{proof}
\begin{corollary}\label{cor:R-oplus-ext} 
	If $\dom(({\sim}_\sfs)_\times)\in \Sigma$ and $\img(({\sim}_\sfs)_\times)\in \Sigma'$, then $({\sim}_\sfs)_\times = {\sim^\times}$.
\end{corollary}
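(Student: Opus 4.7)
The plan is to reduce the corollary directly to Proposition~\ref{prop:case-R-oplus-is-ext} by taking $R \defi {\sim}_{\sfs,\lmp{S}\oplus\lmp{S}'}$. The inclusion $({\supseteq})$ is already supplied by Corollary~\ref{cor:ext-bis-sbq-R-oplus}, so the entire task is to show $({\sim}_\sfs)_\times \sbq {\sim}^\times$.

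First, I would verify that $R = {\sim}_{\sfs,\lmp{S}\oplus\lmp{S}'}$ satisfies the hypotheses of Lemma~\ref{lem:internal-equiv-implies-oplus}: it is an internal state bisimulation on the sum by Proposition~\ref{prop:union-bisimulation}, and it is an equivalence relation (the identity is a bisimulation, reversing a bisimulation yields a bisimulation, and transitivity follows from the fact—recalled in the discussion after Proposition~\ref{prop:same-LMP-ext-bisi-equal-internal}—that every state bisimulation is contained in one that is an equivalence relation). Consequently $R$ is an $\oplus$-bisimulation between $\lmp{S}$ and $\lmp{S}'$, and by construction $R_\times = ({\sim}_\sfs)_\times$.

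Second, the measurability hypotheses of the corollary, namely $\dom(({\sim}_\sfs)_\times)\in\Sigma$ and $\img(({\sim}_\sfs)_\times)\in\Sigma'$, are exactly what Proposition~\ref{prop:case-R-oplus-is-ext} requires. Applying that proposition to $R$ yields that $R_\times = ({\sim}_\sfs)_\times$ is a $\times$-bisimulation, so every pair in $({\sim}_\sfs)_\times$ is witnessed by an external bisimulation, giving $({\sim}_\sfs)_\times \sbq {\sim}^\times$. Combining both inclusions closes the proof.

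There is essentially no obstacle here; the statement is a short corollary of Proposition~\ref{prop:case-R-oplus-is-ext}. The only point deserving a remark is the verification that bisimilarity on the sum really is an equivalence relation, which justifies invoking Lemma~\ref{lem:internal-equiv-implies-oplus} in the first place.
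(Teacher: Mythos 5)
Your proof is correct and follows exactly the route the paper intends: the corollary is stated immediately after Proposition~\ref{prop:case-R-oplus-is-ext} precisely so that it follows by taking $R={\sim}_{\sfs,\lmp{S}\oplus\lmp{S}'}$ (an $\oplus$-bisimulation via Lemma~\ref{lem:internal-equiv-implies-oplus}) and combining with Corollary~\ref{cor:ext-bis-sbq-R-oplus}. Your extra care in checking that state bisimilarity on the sum is an equivalence relation is a welcome, if routine, verification.
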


Proposition~\ref{prop:case-R-oplus-is-ext} also gives some insight about logical equivalence versus $\oplus$-bisimulation.
Notice that in $\lmp{U}_s\oplus \lmp{U}_t$, the logical equivalence cannot be an $\oplus$-bisimulation, since $s$ and $t$ are externally bisimilar.
In spite of this, the following question remains.
\begin{question}\label{question:oplus-cospan-unknown}
	If two states satisfy the same formulas, are they related by some $\oplus$-bisimulation?
\end{question}

Another discussion that arises with the $\oplus$-bisimulation approach is that in the case
of $\lmp{S}=\lmp{S}'$, we would like to reconcile this definition with that of internal state bisimulations.  
However, we cannot make a direct comparison since by definition $\oplus$-bisimulations are equivalences on
$S\oplus S$, and hence subsets of $(S\oplus S)\times (S\oplus S)$; in contrast, internal bisimulations are relations on $S$, that is subsets of $S\times S$.  
We highlight this difference by stating that these
concepts have different “type.”

We can make a statement in one direction: By
Lemma~\ref{lem:state-bisimulation-in-sum}(\ref{item:state-bisimulation-in-sum}), we know that if $R\sbq
S\times S$ is an internal state bisimulation and an equivalence
relation, then $R^+$ is an internal bisimulation on
$\lmp{S}\oplus\lmp{S}$ and an equivalence relation.  
By Lemma~\ref{lem:internal-equiv-implies-oplus}, $R^+$ is also a
$\oplus$-bisimulation between $\lmp{S}$ and $\lmp{S}$.
In the case of ${\sim}_\sfs$, Proposition~\ref{prop:state-bisim-in-sum} even tells us that ${\sim}_\sfs = (({\sim_\sfs})^+)_\times = ({\sim}_{\sfs,\lmp{S}\oplus \lmp{S}})_\times =(\sim_\sfs)_\times \sbq
{\sim}^\oplus$.
\subsubsection*{From $\oplus$-bisimilarity to events}

In \cite[Def.~7.2]{doi:10.1142/p595}, a different definition of $\oplus$-bisimulation is presented. 
In this variation, the equivalence relation is not required to be within the sum $S \oplus S'$, but rather within a possibly larger process of which $\lmp{S}$ and $\lmp{S}'$ are direct summands.
To provide clarity, we have rephrased this definition to explicitly specify the necessary inclusions.

\begin{definition}\label{def:prakash-bisim}
	Let $\lmp{S}$ and $\lmp{S}'$ be a pair of LMPs and $s\in S, s'\in S'$. We say that $s$ and $s'$ are \emph{$\oplus_\mathrm{P}$-bisimilar} if there is a process $\lmp{W}$ and a state bisimulation equivalence on $(\lmp{S}\oplus\lmp{S}')\oplus \lmp{W}$ relating $\mathrm{inl}(\mathrm{inl}(s))$ and $\mathrm{inl}(\mathrm{inr}(s'))$.
\end{definition}

In what follows we will show that this notion is the same as external event bisimilarity, that is, $\sim^\catco$.

\begin{lemma}\label{lem:event-lmp}
	Let $\lmp{T}=(T,\Sigma,\{\tau_a\mid a\in L\})$ be an LMP, and consider the LMP $\lmp{T}_\e\defi (T,\sigma(\sem{\Logic}),\{\tau_a\mid a\in L\})$. 
	Then ${\sim_{\e,\lmp{T}}} = {\sim_{\e,\lmp{T}_\e}} = {\sim_{\s,\lmp{T}_\e}}$.
\end{lemma}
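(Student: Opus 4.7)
The plan is to invoke the logical characterization of event bisimilarity (Corollary~\ref{cor:logic-charac-event}) together with the fact (Theorem~\ref{thm:logical-characterization}) that $\sigma(\sem{\Logic})$ is the smallest stable sub-$\sigma$-algebra of $\Sigma$. For the first equality ${\sim_{\e,\lmp{T}}} = {\sim_{\e,\lmp{T}_\e}}$, I would first prove by induction on $\varphi \in \Logic$ that $\sem{\varphi}_{\lmp{T}} = \sem{\varphi}_{\lmp{T}_\e}$: the cases $\top$ and conjunction are immediate; for $\pos{a}_{>q}\varphi$, the inductive hypothesis yields the same validity set for $\varphi$, which lies in $\sem{\Logic} \sbq \sigma(\sem{\Logic})$ and is therefore measurable in $\lmp{T}_\e$, so applying the (identical) kernels produces the same next validity set. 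Since $\sem{\Logic}_{\lmp{T}} = \sem{\Logic}_{\lmp{T}_\e}$, applying Corollary~\ref{cor:logic-charac-event} in each LMP yields the equality.

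For the second equality, the inclusion ${\sim_\s}\sbq{\sim_\e}$ holds in every LMP: a parallel induction shows that $\sem{\varphi}$ is $R$-closed for any state bisimulation $R$, so state-bisimilar states agree on all validity sets and become event bisimilar by Corollary~\ref{cor:logic-charac-event}. The reverse inclusion inside $\lmp{T}_\e$ is the substantive step and exploits the change of ambient $\sigma$-algebra: since the whole $\sigma$-algebra of $\lmp{T}_\e$ is $\sigma(\sem{\Logic})$, which is itself stable, it is trivially the smallest stable sub-$\sigma$-algebra, so ${\sim_{\e,\lmp{T}_\e}} = \Rel(\sigma(\sem{\Logic}))$. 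I would then verify directly that this relation is a state bisimulation on $\lmp{T}_\e$: given $s \sim_\e t$ and $C$ measurable and $\sim_\e$-closed in $\lmp{T}_\e$, we have $C \in \sigma(\sem{\Logic})$; stability puts $\{u : \tau_a(u,C) > r\}$ in $\sigma(\sem{\Logic})$ for every rational $r$, and the fact that $s$ and $t$ agree on every set in $\sigma(\sem{\Logic})$ forces $\tau_a(s,C) = \tau_a(t,C)$.

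There is no genuine obstacle here; the only point to double-check is that stability of $\sigma(\sem{\Logic})$ transfers from $\lmp{T}$ to $\lmp{T}_\e$, which is immediate from Definition~\ref{def:stable} since neither the kernels nor the family $\sem{\Logic}$ change. Conceptually, the lemma is an instance of the general principle that restricting the ambient $\sigma$-algebra to the smallest stable one collapses event bisimilarity onto state bisimilarity.
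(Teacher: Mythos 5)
Your proposal is correct and follows essentially the same route as the paper: the first equality comes from the invariance of the validity sets $\sem{\Logic}$ under the change of $\sigma$-algebra (the paper gets this by applying Lemma~\ref{lem:zigzag-logic} to the identity zigzag $\lmp{T}\to\lmp{T}_\e$, which is exactly your induction) together with Corollary~\ref{cor:logic-charac-event}, and the second equality is obtained by checking that $\Rel(\sigma(\sem{\Logic}))$ is a state bisimulation on $\lmp{T}_\e$, using that every measurable set of $\lmp{T}_\e$ lies in the stable $\sigma$-algebra $\sigma(\sem{\Logic})$ and is $\sim_\e$-closed. Your explicit verification of the state-bisimulation condition via stability is precisely what the paper's terse ``immediate from $\sigma(\sem{\Logic}_{\lmp{T}_\e})=\sigma(\sem{\Logic}_{\lmp{T}_\e})(\sim_{\e,\lmp{T}_\e})$'' is hiding.
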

\begin{proof}
	For the first equality we use the zigzag morphism $\mathrm{id}:\lmp{T}\to\lmp{T}_\e$ to obtain $\sem{\Logic}_\lmp{T}= \mathrm{id}^{-1}[\sem{\Logic}_{\lmp{T}_\e}] = \sem{\Logic}_{\lmp{T}_\e}$.
	Thus, we have the identities ${\sim_{\e,\lmp{T}}} = \Rel(\sigma(\sem{\Logic}_\lmp{T})) = \Rel(\sigma(\sem{\Logic}_{\lmp{T}_\e}))= {\sim_{\e,\lmp{T}_\e}}$. 
	
	For the last equality, we only have to show that ${\sim_{\e,\lmp{T}_\e}}$ is a state bisimulation in $\lmp{T}_\e$.
	This is immediate from the equality $\sigma(\sem{\Logic}_{\lmp{T}_\e})=\sigma(\sem{\Logic}_{\lmp{T}_\e})(\sim_{\e,\lmp{T}_\e})$.
\end{proof}

The second equality can also be seen immediately since the Zhou filtration is monotone increasing from $\sigma(\sem{\Logic})$ to $\Sigma$ \cite[Fig.~2]{moroni2020zhou}.
The same analysis shows that there are other $\sigma$-algebras with this property.

\begin{lemma}\label{lem:sum-preserves-bisimil}
	Let $\lmp{T}, \lmp{T}'$ be LMPs and $x,y\in T'$. 
	If $x\sim_{*,\lmp{T}'} y$, then $\mathrm{inr}(x)\sim_{*,\lmp{T}\oplus \lmp{T}'} \mathrm{inr}(y)$ for $*\in \{\sfs,\e\}$.
\end{lemma}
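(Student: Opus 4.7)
The plan is to handle the two cases $*=\sfs$ and $*=\e$ separately, each by transporting a witness living in $\lmp{T}'$ to one on $\lmp{T}\oplus\lmp{T}'$ through the natural right-summand lift. Throughout, let $\Sigma,\Sigma'$ denote the $\sigma$-algebras of $\lmp{T},\lmp{T}'$ respectively.

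For $*=\sfs$, I will mimic on the right the construction in \eqref{eq:lift}. Given a state bisimulation $R$ on $\lmp{T}'$ with $x\mathrel{R}y$, set $R_r \defi \{(\inr(u),\inr(v)) \mid u\mathrel{R}v\}$. Since $R_r\sbq \inr(T')^2$, the right-handed analogue of Lemma~\ref{lem:sigma'_restriction} (the same used to justify that $R_l$ is a state bisimulation after \eqref{eq:lift}) yields $(\Sigma\oplus\Sigma')(R_r)\rest T' = \Sigma'(R)$. Hence for any $A\oplus A'\in (\Sigma\oplus\Sigma')(R_r)$ we have $A'\in \Sigma'(R)$, and the required kernel identity
\[
\tau_a^\oplus(\inr(x), A\oplus A') = \tau'_a(x, A') = \tau'_a(y, A') = \tau_a^\oplus(\inr(y), A\oplus A')
\]
follows immediately from the bisimulation property of $R$ in $\lmp{T}'$. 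Thus $R_r$ witnesses $\inr(x)\sim_{\sfs,\lmp{T}\oplus\lmp{T}'}\inr(y)$.

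For $*=\e$, I will lift a stable sub-$\sigma$-algebra. By Definition~\ref{def:event-bisimulation} pick $\Lambda'\sbq\Sigma'$ stable with $x\mathrel{\Rel(\Lambda')}y$, and define $\Lambda \defi \Sigma\oplus\Lambda'$, which is a sub-$\sigma$-algebra of $\Sigma\oplus\Sigma'$ by the very definition of the sum of $\sigma$-algebras. The crux is verifying that $\Lambda$ is stable with respect to $\lmp{T}\oplus\lmp{T}'$: for $A\oplus A'\in \Lambda$, the definition of $\tau^\oplus_a$ gives
\[
\{r\in T\oplus T' \mid \tau^\oplus_a(r,A\oplus A')>q\}
= \{t\in T\mid \tau_a(t,A)>q\} \,\oplus\, \{t'\in T'\mid \tau'_a(t',A')>q\},
\]
whose left summand is automatically in $\Sigma$ by measurability of the original kernel, and whose right summand lies in $\Lambda'$ by the stability of $\Lambda'$. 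Since $\inr(x)\in A\oplus A'$ iff $x\in A'$, the condition $\inr(x)\mathrel{\Rel(\Lambda)}\inr(y)$ collapses to $x\mathrel{\Rel(\Lambda')}y$, which holds by choice of $\Lambda'$.

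I do not expect a real obstacle here: both parts reduce to the observation that $\tau^\oplus_a$ restricted to the right summand forgets the left component entirely, so any witness in $\lmp{T}'$ transports to one in $\lmp{T}\oplus\lmp{T}'$ by prepending $\Sigma$ on the left. The only mild check is that the sum-of-$\sigma$-algebras construction accepts a sub-$\sigma$-algebra on one side without breaking, which is immediate from the explicit formula $\Sigma\oplus\Sigma' = \{Q\oplus Q' \mid Q\in\Sigma,\ Q'\in\Sigma'\}$ fixed in the preliminaries.
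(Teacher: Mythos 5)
Your proof is correct. The state-bisimulation half is essentially identical to the paper's: the authors simply recall that the right lift $R_r$ from \eqref{eq:lift} of a state bisimulation on $\lmp{T}'$ is a state bisimulation on the sum, which is exactly your argument via (the right-handed analogue of) Lemma~\ref{lem:sigma'_restriction}. For the event case, however, you take a genuinely different route. The paper observes that the inclusion $\inr:\lmp{T}'\to\lmp{T}\oplus\lmp{T}'$ is a zigzag morphism, so by Lemma~\ref{lem:zigzag-logic} each $z\in T'$ is logically equivalent to $\inr(z)$, and then concludes via the logical characterization of event bisimilarity (Corollary~\ref{cor:logic-charac-event}) together with transitivity of logical equivalence. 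You instead transport the witness directly: given a stable $\Lambda'\sbq\Sigma'$ separating neither $x$ nor $y$, you form $\Sigma\oplus\Lambda'$ and verify stability by computing $\{r\mid\tau^\oplus_a(r,A\oplus A')>q\}$ coordinatewise; the computation is right, since the left component is in $\Sigma$ by measurability of $\tau_a(\cdot,A)$ and the right component is in $\Lambda'$ by stability. Your construction is more elementary and self-contained (it does not invoke the logic $\Logic$ or Theorem~\ref{thm:logical-characterization} at all), at the cost of an explicit stability check; the paper's argument is shorter given the logical machinery already in place, and it also yields the converse implication for $\sim_\e$ noted right after the lemma, which your direct construction does not immediately give.
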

\begin{proof}
	The inclusion of $\lmp{T}'$ into $\lmp{T}\oplus\lmp{T}'$ is a zigzag. It follows that for all $z\in T'$, $z$ and $\mathrm{inr}(z)$ are logically equivalent.
	Thus, the result for $\sim_\e$ follows from the logical characterization of event bisimilarity and the transitivity of the logical equivalence.
	
	In the case of state bisimilarity, we recall that the lift $R_r$ (\ref{eq:lift}) of a state bisimulation $R$ in $\lmp{T}'$ is a state bisimulation on $\lmp{T}\oplus\lmp{T}'$.
\end{proof}

Notice that in the case of event bisimilarity, the converse implication is also true.

\begin{corollary}\label{coro:event-bisim-implies-state-in-sum-event}
	If $i \sim_{\e,\lmp{T}} x$ then $\mathrm{inl}(i),\mathrm{inr}(i),\mathrm{inl}(x)$ and $\mathrm{inr}(x)$ are all in the same $(\sim_{\s,\lmp{T}\oplus\lmp{T}_\e})$-equivalence class. 
\end{corollary}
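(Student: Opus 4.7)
The plan is to chain together three bisimilarity facts, all of which are already available: upgrading event bisimilarity in $\lmp{T}$ to state bisimilarity in $\lmp{T}_\e$, transporting it across the right summand via Lemma~\ref{lem:sum-preserves-bisimil}, and then bridging each summand to the other using that the identity is a zigzag $\lmp{T}\to\lmp{T}_\e$.

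First, I would apply Lemma~\ref{lem:event-lmp} to turn the hypothesis $i\sim_{\e,\lmp{T}}x$ into $i\sim_{\sfs,\lmp{T}_\e}x$, since both coincide with $\sim_{\e,\lmp{T}_\e}$. Next, invoking Lemma~\ref{lem:sum-preserves-bisimil} with $*=\sfs$ and with $\lmp{T}'\defi \lmp{T}_\e$ placed in the right summand, I obtain $\mathrm{inr}(i)\sim_{\sfs,\lmp{T}\oplus\lmp{T}_\e}\mathrm{inr}(x)$. This handles the ``horizontal'' link on the right.

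The remaining step is to relate $\mathrm{inl}(s)$ with $\mathrm{inr}(s)$ in $\lmp{T}\oplus\lmp{T}_\e$ for $s\in\{i,x\}$. The key observation is that the identity $\id\colon\lmp{T}\to\lmp{T}_\e$ is a zigzag morphism: measurability is immediate from $\sigma(\sem{\Logic})\sbq\Sigma$, and the zigzag equation $\tau_a(s,\id^{-1}[B])=\tau_a(s,B)=\tau_a^{\lmp{T}_\e}(\id(s),B)$ holds for every $B\in\sigma(\sem{\Logic})$ because the kernels are literally the same. Corollary~\ref{coro:prop353-Desharnais} then yields $s\sim^{\times}_{\lmp{T},\lmp{T}_\e}\id(s)=s$, and Corollary~\ref{cor:ext-bis-sbq-R-oplus} upgrades this to $\mathrm{inl}(s)\sim_{\sfs,\lmp{T}\oplus\lmp{T}_\e}\mathrm{inr}(s)$.

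Putting the three links together,
\[
\mathrm{inl}(i)\sim_{\sfs}\mathrm{inr}(i)\sim_{\sfs}\mathrm{inr}(x)\sim_{\sfs}\mathrm{inl}(x),
\]
where all relations are $\sim_{\sfs,\lmp{T}\oplus\lmp{T}_\e}$, so transitivity of state bisimilarity (itself a state bisimulation by Proposition~\ref{prop:union-bisimulation}) gives the claim. There is no real obstacle here; the only subtle point is noticing that the two instances of the ``vertical'' link $\mathrm{inl}(s)\sim_\sfs\mathrm{inr}(s)$ are freely available because the identity is a zigzag into the coarser-algebra copy $\lmp{T}_\e$, even though it is generally not a zigzag in the opposite direction.
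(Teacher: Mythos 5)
Your proposal is correct and follows essentially the same route as the paper: Lemma~\ref{lem:event-lmp} to pass to $\sim_{\sfs,\lmp{T}_\e}$, Lemma~\ref{lem:sum-preserves-bisimil} for the right-summand link, the identity zigzag $\id:\lmp{T}\to\lmp{T}_\e$ to relate $\inl(s)$ with $\inr(s)$, and transitivity of state bisimilarity on the sum to chain the pieces. The only cosmetic difference is that you reach the cross-summand link via Corollaries~\ref{coro:prop353-Desharnais} and \ref{cor:ext-bis-sbq-R-oplus}, whereas the paper cites the underlying Lemma~\ref{lem:graph(f)-bisim} and Proposition~\ref{prop:ext-equiv-oplus} directly; these are the same argument.
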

\begin{proof}
	If $i \sim_{\e,\lmp{T}} x$, by Lemma~\ref{lem:event-lmp}, we have $i \sim_{\s,\lmp{T}_\e} x$. 
	Thanks to Lemma~\ref{lem:sum-preserves-bisimil}, we conclude $\mathrm{inr}(i) \sim_{\s,\lmp{T}\oplus\lmp{T}_\e} \mathrm{inr}(x)$.
	We now notice that the relation $\{(\mathrm{inl}(t),\mathrm{inr}(t))\mid t\in T\}$ in $\lmp{T}\oplus \lmp{T}_\e$ is a state bisimulation as it is the lifting of the graph of the zigzag morphism $\id:\lmp{T}\to \lmp{T}_\e$ (Lemma~\ref{lem:graph(f)-bisim} and Proposition~\ref{prop:ext-equiv-oplus}(\ref{item:ext-equiv-oplus-bisim})).
	From this and the fact that $\sim_{\sfs, \lmp{T}\oplus\lmp{T}_\e}$ is an equivalence we have to the result.
\end{proof}

\begin{theorem}
	$s\sim^\catco s'$ if and only if they are $\oplus_\mathsf{P}$-bisimilar.
\end{theorem}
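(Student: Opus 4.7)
The strategy is to route through Theorem~\ref{th:catco_eq_event}, which identifies $s\sim^\catco s'$ with internal event bisimilarity of $\inl(s)$ and $\inr(s')$ in $\lmp{T}\defi \lmp{S}\oplus\lmp{S}'$, and then to leverage Corollary~\ref{coro:event-bisim-implies-state-in-sum-event} to turn that event bisimilarity into a state equivalence on a larger sum. The natural candidate for the auxiliary process in Definition~\ref{def:prakash-bisim} is thus $\lmp{W}\defi \lmp{T}_\e = (\lmp{S}\oplus\lmp{S}')_\e$, the event-reduced version of $\lmp{T}$ from Lemma~\ref{lem:event-lmp}.

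For the $(\Rightarrow)$ direction, assume $s\sim^\catco s'$. Theorem~\ref{th:catco_eq_event} yields $\inl(s)\sim_{\e,\lmp{T}}\inr(s')$, and applying Corollary~\ref{coro:event-bisim-implies-state-in-sum-event} with $i\defi\inl(s)$ and $x\defi\inr(s')$ gives that $\inl(\inl(s))$ and $\inl(\inr(s'))$ lie in the same $\sim_{\sfs,\lmp{T}\oplus\lmp{T}_\e}$-class. Since state bisimilarity is an equivalence and a state bisimulation (Proposition~\ref{prop:union-bisimulation} together with the reminder, in the paragraph before Section~\ref{sec:catsp-bisimulations}, of \cite[Prop.~4.12]{coco}), this furnishes the state bisimulation equivalence required by Definition~\ref{def:prakash-bisim} with $\lmp{W}=\lmp{T}_\e$.

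For the $(\Leftarrow)$ direction, assume that $\lmp{W}$ and a state bisimulation equivalence $R$ on $(\lmp{S}\oplus\lmp{S}')\oplus\lmp{W}$ witness $\oplus_\mathsf{P}$-bisimilarity. A routine induction on $\varphi\in\Logic$ shows that every validity set $\sem{\varphi}$ is $R$-closed (the only nontrivial case $\pos{a}_{>q}\varphi$ uses the defining property of state bisimulations with the inductively $R$-closed $\sem{\varphi}$), so $\inl(\inl(s))$ and $\inl(\inr(s'))$ are logically equivalent in the triple sum. Applying Lemma~\ref{lem:zigzag-logic} to the zigzag $\inl:\lmp{T}\to\lmp{T}\oplus\lmp{W}$ transports this equivalence down to $\lmp{T}$, and Corollary~\ref{cor:logic-charac-event} then upgrades logical equivalence to event bisimilarity of $\inl(s)$ and $\inr(s')$ in $\lmp{T}$. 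Theorem~\ref{th:catco_eq_event} finally delivers $s\sim^\catco s'$.

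The only delicate step is already encapsulated in Corollary~\ref{coro:event-bisim-implies-state-in-sum-event}: passing from event bisimilarity in $\lmp{T}$ to state bisimilarity in a larger sum by adjoining the event-reduced twin $\lmp{T}_\e$ and identifying the two halves via the diagonal of $\id:\lmp{T}\to\lmp{T}_\e$ (a zigzag by Lemma~\ref{lem:graph(f)-bisim}). Without this device it would be unclear how to choose $\lmp{W}$ so that a pair which is only event bisimilar in $\lmp{T}$ gets identified by a genuine state bisimulation equivalence in the triple sum; here the choice $\lmp{W}=\lmp{T}_\e$ is canonical and immediately suggests itself.
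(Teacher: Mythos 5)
Your forward direction is exactly the paper's: Theorem~\ref{th:catco_eq_event} followed by Corollary~\ref{coro:event-bisim-implies-state-in-sum-event} with the witness $\lmp{W}=(\lmp{S}\oplus\lmp{S}')_\e$. The converse is correct but argued differently. The paper stays at the level of cospans: from the state bisimulation equivalence on the triple sum it passes to event bisimilarity there, re-associates the sum as $\lmp{S}\oplus(\lmp{S}'\oplus\lmp{W})$, invokes Theorem~\ref{th:catco_eq_event} to extract a cospan $\lmp{S}\to\lmp{T}\leftarrow\lmp{S}'\oplus\lmp{W}$ identifying $s$ and $\inl(s')$, and precomposes with the zigzag inclusion $\inl:\lmp{S}'\to\lmp{S}'\oplus\lmp{W}$. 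You instead route through the logic: the induction showing validity sets are closed under any state bisimulation (the standard ``state implies logical'' argument, which needs measurability of the $\sem{\varphi}$ and works since the modal step applies the bisimulation condition to the inductively measurable $R$-closed set $\sem{\varphi}$), then Lemma~\ref{lem:zigzag-logic} for the zigzag inclusion $\inl:\lmp{S}\oplus\lmp{S}'\to(\lmp{S}\oplus\lmp{S}')\oplus\lmp{W}$ to pull logical equivalence back down to $\lmp{S}\oplus\lmp{S}'$, and Corollary~\ref{cor:logic-charac-event} plus Theorem~\ref{th:catco_eq_event} to finish. What your version buys is that it avoids the (harmless but implicit) re-association of the triple sum and the cospan composition, replacing them with a single transfer along one zigzag; what the paper's version buys is that it never opens up the logic and works purely with the already-established equivalences between state, event, and cospan bisimilarity. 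Both are complete.
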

\begin{proof}	
	If $s\sim^\catco s'$, via Theorem~\ref{th:catco_eq_event} we have that $s\sim_{\e,\lmp{S}\oplus\lmp{S}'} s'$.
	Corollary~\ref{coro:event-bisim-implies-state-in-sum-event} applied to $\lmp{T}=\lmp{S}\oplus\lmp{S}'$, says that they are state bisimilar in $(\lmp{S}\oplus \lmp{S}')\oplus(\lmp{S}\oplus\lmp{S}')_\e$ (strictly speaking, $\mathrm{inl}(\mathrm{inl}(s)) \sim_{\sfs,(\lmp{S}\oplus \lmp{S}')\oplus(\lmp{S}\oplus\lmp{S}')_\e} \mathrm{inl}(\mathrm{inr}(s'))$).
	Then $s$ and $s'$ are $\oplus_\mathsf{P}$-bisimilar.
	
	For the converse, if $s$ and $s'$ are
        $\oplus_\mathsf{P}$-bisimilar then $\inl(s)$ and $\inr(\inl(s'))$ are state bisimilar in $\lmp{S}\oplus(\lmp{S}'\oplus\lmp{W})$ for some $\lmp{W}$, and therefore also event bisimilar in that process.
	By Theorem~\ref{th:catco_eq_event}, this implies that there is
        a cospan $\lmp{S}\rightarrow \lmp{T} \leftarrow
        \lmp{S}'\oplus\lmp{W}$ which identifies $s$ and $\inl(s')$.
	Composing this cospan with the zigzag inclusion $\inl:\lmp{S}'\to \lmp{S}'\oplus\lmp{W}$ leads to $s\sim^\catco s'$. 
\end{proof}

\subsection{Comparison of notions for LMP and related work}
\label{sec:comparison-notions-lmp}

Table~\ref{tab:bisim-comparison} summarizes the relationships between
the different definitions of bisimilarity between two LMPs $\lmp{S}$
and $\lmp{S}'$ studied up to this point.
The entry with indices $(i,j)$ contains the symbol {\cmark} if the bisimilarity
corresponding to row $i$ is included in the bisimilarity of
column $j$; it contains the symbol \xmark{} when the inclusion fails; or \textbf{?} with
its obvious meaning.  Each cell also includes an indication for the
justification of each assertion, which may be a reference to a result,
a word suggesting the answer, or a reference to another cell $(i,j)$
in the same table ($(2,5)$ corresponds to the cell ${\sim}^\catsp \sbq
{\sim}^\catco$). Note that there are
some double-symbol entries in the table, in which references are
given for the known inclusions or their failures, either in the literature or as
a result proved here.

\begin{table}[h]
  \scriptsize
  \centering
  \resizebox{\textwidth}{!}{%
    \begin{tabular}{|c|c|c|c|c|c|c|}
      \hline
      $\sbq$ &
      \rule{0pt}{2.5ex}$\sim^\coalg$ &
      $\sim^\catsp$ &
      $\sim^\times$ &
      $({\sim}_\sfs)_\times$ &
      $\sim^\catco$ &
      $\sim^\oplus$ 
      
      \\ \hline

      $\sim^\coalg$ &

      \cmark &

      \begin{tabular}[c]{@{\!\!\!}c@{\!\!\!}}\cmark\\ by definition\end{tabular} &

      \begin{tabular}[c]{@{\!\!\!}c@{\!\!\!}}\cmark\\ \cite{bacci}\end{tabular} &

      \begin{tabular}[c]{@{\!\!\!}c@{\!\!\!}}\cmark\\ (1,2) \& (2,4)\end{tabular} &

      \begin{tabular}[c]{@{\!\!\!}c@{\!\!\!}}\cmark\\ (1,2) \& (2,5)\end{tabular} &

      \begin{tabular}[c]{@{\!\!\!}c@{\!\!\!}}\cmark\\ (1,2) \& (2,6)\end{tabular}

      \\ \hline

      $\sim^\catsp$ &

      \begin{tabular}[c]{@{\!\!\!}c@{\!\!\!}}\textbf{?} - \ref{question:span-implies-coalg}\\ \cmark \ - \ref{th:coalg-cat-coanalytic}, \cite{Doberkat:2005:SSR:1089905.1089907}\end{tabular} &

      \cmark &

      \begin{tabular}[c]{@{\!\!\!}c@{\!\!\!}}\cmark\\   \ref{prop:span-implies-ext}\end{tabular} &

      \begin{tabular}[c]{@{\!\!\!}c@{\!\!\!}}\cmark\\ (2,3) \& (3,4)\end{tabular} &

      \begin{tabular}[c]{@{\!\!\!}c@{\!\!\!}}\cmark\\ \cite{coco}: pushouts\end{tabular}  &

      \begin{tabular}[c]{@{\!\!\!}c@{\!\!\!}}\cmark\\ (2,3) \& (3,6)\end{tabular}

      \\ \hline

      $\sim^\times$ &

      \begin{tabular}[c]{@{\!\!\!}c@{\!\!\!}}\xmark\\        (3,2) \&        (1,2)\end{tabular} &
      
      \begin{tabular}[c]{@{\!\!\!}c@{\!\!\!}}\xmark\\         \ref{exm:ext-not-cat} \end{tabular} &

      \cmark &

      \begin{tabular}[c]{@{\!\!\!}c@{\!\!\!}}\cmark\\         \ref{cor:ext-bis-sbq-R-oplus}\end{tabular} &

      \begin{tabular}[c]{@{\!\!\!}c@{\!\!\!}}\cmark\\  (3,4) \& (4,5)\end{tabular} &

      \begin{tabular}[c]{@{\!\!\!}c@{\!\!\!}}\cmark\\  (3,4) \& (4,6)\end{tabular}

      \\ \hline

      $({\sim}_\sfs)_\times$ &

      \begin{tabular}[c]{@{\!\!\!}c@{\!\!\!}}\xmark\\           (4,2) \& (1,2)\end{tabular} &

      \begin{tabular}[c]{@{\!\!\!}c@{\!\!\!}}\xmark\\         (3,2) \& (3,4) \end{tabular} &

      \begin{tabular}[c]{@{\!\!\!}c@{\!\!\!}}\textbf{?} - \ref{question:equiv-int-sum-implies-ext}\\ \cmark \ - \ref{cor:R-oplus-ext}\end{tabular} &

      \cmark&

      \begin{tabular}[c]{@{\!\!\!}c@{\!\!\!}}\cmark\\         \ref{cor:state-impl-catco}\end{tabular} &

      \begin{tabular}[c]{@{\!\!\!}c@{\!\!\!}}\cmark\\      \ref{cor:R-oplus-sbq-sim-oplus}\end{tabular}  

      \\ \hline

        $\sim^\catco$ &

      \begin{tabular}[c]{@{\!\!\!}c@{\!\!\!}}\xmark \\ (5,4) \& (1,4) \end{tabular} &

        \begin{tabular}[c]{@{\!\!\!}c@{\!\!\!}}\xmark \ -     \cite{Pedro20111048}\\ \cmark \ - 
          \ref{lem:semi-pullbacks} \end{tabular} &

        \begin{tabular}[c]{@{\!\!\!}c@{\!\!\!}}\xmark\\ (5,4) \&  (3,4)\end{tabular} &

        \begin{tabular}[c]{@{\!\!\!}c@{\!\!\!}}\xmark\\  \ref{exm:catco-not-ext}\end{tabular}&

        \cmark &

        \begin{tabular}[c]{@{\!\!\!}c@{\!\!\!}}\textbf{?}\\  \ref{question:oplus-cospan-unknown}\end{tabular}

      \\ \hline

      $\sim^\oplus$ &

      \begin{tabular}[c]{@{\!\!\!}c@{\!\!\!}}\xmark\\ (6,5) \& (1,5) \end{tabular} &

      \begin{tabular}[c]{@{\!\!\!}c@{\!\!\!}}\xmark\\ (6,5) \& (2,5) \end{tabular} &

      \begin{tabular}[c]{@{\!\!\!}c@{\!\!\!}}\xmark\\ (6,5) \& (3,5) \end{tabular} &

      \begin{tabular}[c]{@{\!\!\!}c@{\!\!\!}}\xmark \\ (6,5) \& (4,5)  \end{tabular} &

      \begin{tabular}[c]{@{\!\!\!}c@{\!\!\!}}\xmark \\ \ref{exm:oplus-notsbq-cospan}\end{tabular} &

      \cmark

      \\ \hline
    \end{tabular}%
  }
  \caption{Comparison between bisimilarity notions.}
  \label{tab:bisim-comparison}
\end{table}

It is worth noting that some of those results that follow “by transitivity” in
our diagram have appeared in previous works, but not with the exact same
statement, or only with a hint of proof---our generous sample of
counterintuive examples ask for detailed proofs of all alleged results. For
instance, cell $(2,4)$ is implicit in Danos et al. \cite[p.515]{coco}, but in that paper, all
zigzag morphisms are considered to be surjective. In the same direction, $(2,3)$
is somewhat implicit in the proof of $(1,3)$ from Bacci \cite{bacci}, but otherwise
taken for granted.  In any case, our Proposition~\ref{prop:span-implies-ext}
invoked for that proof is also new, in that it provides a “local” argument
(which holds for each pair of states) and again does not require
surjectivity. The weaker Lemma~\ref{lem:graph(f)-bisim} also seems to be new (in
this form); there are some antecedents to it in
\cite[Prop.~3.6, Thm.~4.5]{2014arXiv1405.7141D} but these only refer to internal
notions and require symmetry of the bisimulation relation.

There are some relations between already known notions that do not appear
in the table since they are actually \textit{equalities}. For example,
Theorem~\ref{th:catco_eq_event} shows that $\sim^\catco$ is the
same as (internal) event bisimilarity in the sum; the same observations from the
previous paragraph apply to this equality. The other notion we proved to be equal to
$\sim^\catco$ is Panangaden's $\oplus_{\mathrm{P}}$-bisimilarity.

\section{Nondeterministic processes}
\label{sec:NLMP}

Nondeterministic labelled
Markov processes (NLMPs) were introduced in
\cite{DWTC09:qest,Wolovick} as an alternative to include internal
nondeterminism in LMP.  This is achieved by
modifying the codomain of the transition functions, now denoted by
$\Tfont{T}_a$, to return a \textit{measurable set} of subprobability
measures for each state, as
opposed to an LMP where there is only one measure for each label.  
Formally, let  $\Delta(\Sigma)$ be the smallest $\sigma$-algebra that makes all
the evaluation functions $\mathrm{ev}_E:\Delta(S) \to [0,1]$ measurable with
respect to $\Borel([0, 1])$. Then the values of $\Tfont{T}_a$ range over
$\Delta(\Sigma)$.

In the following, we will need some notation for certain
prominent sets of measures: If $E\in \Sigma$ and ${ \bowtie } \in
\{<, \leq, >,\geq \}$, $\Delta^{\bowtie q}(E)$ denotes the set
$\{ \mu\in  \Delta(S) \mid \mu(E)\bowtie q \}$.
The family $\mathcal{C}_0$, consisting of all these sets for ${\bowtie} = {>}$,
$q\in \Q\cap [0,1]$, and $E\in \Sigma$, generates the $\sigma$-algebra.
In general, if $\Gamma \sbq \Sigma$, $\Delta(\Gamma)\defi \sigma( \{\Delta^{>q}(E)\mid E\in \Gamma \})$.

The
definition of $\Tfont{T}_a$ also requires such transition functions to be measurable,
but this first demands a $\sigma$-algebra over the set
$\Delta(\Sigma)$.

\begin{definition}\label{def:hit-algebra}
  Given a set $X$ and a family $\Gamma\sbq \Power(X)$, the \emph{hit
  $\sigma$-algebra} $H(\Gamma)$ is the smallest
  $\sigma$-algebra on $\Gamma$ that contains all sets $H_D\defi \{G\in
    \Gamma\mid G\cap D \neq\emptyset\}$ with $D \in \Gamma$.
\end{definition}

The candidate $\sigma$-algebra on $\Delta(\Sigma)$ is then
$H(\Delta(\Sigma))$.  This choice is motivated by the
need to have measurable validity sets for the formulas
(\ref{eq:logic}) of the modal logic $\Logic$.
In the case of a formula with modality $\posbow{a}$, the 
standard semantics yields
\begin{align*}
  \sem{\posbow{a}\varphi}
  & =\{s\in S\mid \exists \mu\in
  \Tfont{T}_a(s)\; \mu(\sem{\varphi})\bowtie q\}\\
  & =\{s\in S\mid
  \Tfont{T}_a(s)\cap \Delta^{\bowtie q}(\sem{\varphi})\neq
  \emptyset\}\\
  & =\Tfont{T}_a^{-1}\bigl[\{D\in \Delta(\Sigma)\mid D\cap
    \Delta^{\bowtie q}(\sem{\varphi})\neq \emptyset\}\bigr]
    =\Tfont{T}_a^{-1}[H_{\Delta^{\bowtie q}(\sem{\varphi})}].
\end{align*}
Therefore, if $\sem{\varphi}$ is measurable, the measurability of $\sem{\posbow{a}\varphi}$ is
guaranteed since then  $H_{\Delta^{\bowtie q}(\sem{\varphi})}\in
H(\Delta(\Sigma))$.

\begin{definition}\label{def:nlmp}
  A \emph{nondeterministic labelled Markov process}, or NLMP, is a structure $\lmp{S} 
  =(S,\Sigma,\{\Tfont{T}_a\mid a\in L\})$ where $\Sigma$ is 
  a $\sigma$-algebra over the state set $S$, and 
  for each label $a\in L$, $\Tfont{T}_a:(S,\Sigma)\to 
  (\Delta(\Sigma),H(\Delta(\Sigma)))$ is measurable. 
  We say that $\lmp{S}$ is \emph{image-finite (countable)} if all sets $\Tfont{T}_a(s)$ are finite (countable). 
  NLMPs that are not image-countable are called \emph{image-uncountable}.
\end{definition}

Originally, the definition of NLMP in \cite{DWTC09:qest} required that
the measures be \textit{probabilities} except that the null kernel
$\tau_a(s,\cdot)=0$ was allowed in order to indicate that the action $a$
was rejected in state $s$ with certainty. In this work, we will
continue working with subprobabilities to allow a smooth
generalization of LMP.

\begin{example}[LMP as NLMP]
  An LMP $\lmp{S}$ can be seen as an NLMP without internal
  nondeterminism, that is, a process where $\Tfont{T}_a(s)$ is a
  singleton for each $a\in L$ and $s\in S$. The only requirement is
  that the sets $\{ \mu \}$ are measurable in $\Delta(S)$, so
  $\Tfont{T}_a(s)=\{\tau_a(s,\cdot)\}\in \Delta(\Sigma)$. A sufficient
  condition is that $\Sigma$ is generated by a countable algebra. A
  modification of the proof of \cite[Lem.~3.1]{DWTC09:qest} using
  the Monotone Class Theorem provides a proof.
\end{example}

In \cite{DWTC09:qest, D'Argenio:2012:BNL:2139682.2139685}, the
following three notions of bisimulation are defined. The names are
revised according to the newer \cite{ROCKS}. Recall the relation
lifting defined by~(\ref{eq:int-lift}).

\begin{definition}\label{def:nlmp-bisimulations}
  \begin{enumerate}
    \item \label{def:nlmp_sb}
    A relation $R \sbq S\times S$ is a \emph{state bisimulation} on an NLMP $(S,\Sigma,\{\Tfont{T}_a\mid a\in L\})$ if it is symmetric and for all $a \in L$, $s\mathrel{R}t$ implies that for every $\mu \in \Tfont{T}_a(s)$ there exists $\mu'\in \Tfont{T}_a(t)$ such that $\mu \mathrel{\bar{R}}\mu'$.
    
    \item \label{def:nlmp_hb}
    A relation $R\sbq S\times S$ is an \emph{hit bisimulation} if it is symmetric and for all $a\in L$, $s\mathrel{R}t$ implies $\forall \xi\in \Delta(\Sigma(R))$, $\Tfont{T}_a(s)\cap \xi\neq \emptyset \iff \Tfont{T}_a(t)\cap\xi\neq \emptyset$.
    
    \item \label{def:nlmp_eb} 
    An \emph{event bisimulation} is a sub-$\sigma$-algebra $\Lambda$ of $\Sigma$ such that the map $\Tfont{T}_a: (S,\Lambda)\to(\Delta(\Sigma),H(\Delta(\Lambda)))$ is measurable for each $a\in L$. We also say that a relation $R$ is an event bisimulation if there is an event bisimulation $\Lambda$ such that $R=\Rel(\Lambda)$.
  \end{enumerate}
  We say that $s,t\in S$ are state (hit, event) \emph{bisimilar}, denoted as $s\mathrel{\sim_\sfs}t$ ($s\mathrel{\sim_\h}t$, $s\mathrel{\sim_\e}t$), if there is a state (hit, event) bisimulation $R$ such that $s\mathrel{R}t$.
\end{definition}

Note that all these notions of bisimulation/bisimilarity are defined
with respect to a single NLMP, or, using the terminology for LMP, they
are \textit{internal}. Event bisimulation is a direct generalization
of the same concept for LMP and best reflects the measurable structure
of the base space. State bisimulation is the most faithful
generalization of both probabilistic bisimulation by Larsen and Skou
and the standard definition of bisimulation for nondeterministic
processes (e.g., LTS). On the other hand, hit bisimilarity is an
intermediate concept between the previous ones. It is generally finer
than event bisimilarity but respects the measurable structure more
than state bisimilarity. To ensure that $R$ is a hit bisimulation,
it is required that the transition sets intersect the same
$\bar{R}$-closed sets of measures. This is a weaker requirement than
that of state bisimulation. For further discussion on these notions,
we refer to \cite{D'Argenio:2012:BNL:2139682.2139685}.

\begin{theorem}[{\cite[Thm.~5.6, Thm.~5.10]{D'Argenio:2012:BNL:2139682.2139685}}]
  \begin{enumerate}
    \item The inclusions ${\sim_\sfs} \subseteq {\sim_\h} \subseteq
      {\sim_\e}$ hold, and both can be strict.
    \item If $\lmp{S}$ is an image-finite NLMP over an analytic space, all bisimilarities coincide.
  \end{enumerate}
\end{theorem}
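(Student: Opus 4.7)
The plan is to prove the two inclusions of part (1) directly and then point to the established counterexamples for strictness; part (2) would follow the analytic-space strategy via a logical characterization plus a finite matching argument on $\Tfont{T}_a(s)$ and $\Tfont{T}_a(t)$.

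For ${\sim_\sfs}\sbq{\sim_\h}$, I would take a state bisimulation $R$ with $s\mathrel{R}t$, an $a\in L$, and $\xi\in\Delta(\Sigma(R))$. Given $\mu\in\Tfont{T}_a(s)\cap \xi$, state bisimulation hands me $\mu'\in\Tfont{T}_a(t)$ with $\mu\mathrel{\bar{R}}\mu'$, and the key observation is that the class $\{\xi\in\Delta(\Sigma)\mid \mu\in\xi\iff\mu'\in\xi\}$ is a $\sigma$-algebra containing the generators $\Delta^{>q}(E)$ with $E\in\Sigma(R)$ (since $\mu(E)=\mu'(E)$ there), hence contains $\Delta(\Sigma(R))$. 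By symmetry this gives the hit condition. For ${\sim_\h}\sbq{\sim_\e}$ I would show that, for a hit bisimulation $R$, the sub-$\sigma$-algebra $\Sigma(R)$ itself is an event bisimulation: the generators of $H(\Delta(\Sigma(R)))$ are the sets $H_\xi$ with $\xi\in\Delta(\Sigma(R))$, and
\[
  \Tfont{T}_a^{-1}[H_\xi] = \{s\in S\mid \Tfont{T}_a(s)\cap \xi\neq\emptyset\}
\]
lies in $\Sigma$ by measurability of $\Tfont{T}_a$ and is $R$-closed by the hit condition, so it belongs to $\Sigma(R)$. A final check that $R\sbq\Rel(\Sigma(R))$ (immediate from $R$-closedness of sets in $\Sigma(R)$) completes the inclusion.

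For the strictness claims I would reproduce the two witnesses from \cite{D'Argenio:2012:BNL:2139682.2139685}: the separation of $\sim_\sfs$ from $\sim_\h$ is usually exhibited by an image-uncountable NLMP whose transition sets $\Tfont{T}_a(s)$ and $\Tfont{T}_a(t)$ hit exactly the same measurable rectangles of measures but cannot be matched measure-by-measure, while $\sim_\h\neq\sim_\e$ is witnessed by a process in the style of Example~\ref{exm:lmp-U}, exploiting a non-measurable-in-$\Delta(\Sigma)$ test set that the hit modality sees but no event-bisimulation sub-$\sigma$-algebra can detect.

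Part (2) is where the real work is. The plan is to establish a logical characterization of $\sim_\e$ by a countable modal language with the $\posbow{a}$ modalities, exactly as in the LMP proof behind Theorem~\ref{thm:logical-characterization}; then show that two states agreeing on all formulas are state bisimilar when the process is image-finite over an analytic space. Given $s,t$ logically equivalent and $\mu\in\Tfont{T}_a(s)=\{\mu_1,\dots,\mu_n\}$, one would argue, using finiteness and the separating power of the logic on measures (which in turn relies on analyticity of $S$ to ensure the validity sets $\sem{\varphi}$ generate a sufficiently rich $\sigma$-algebra), that some $\mu_j'\in\Tfont{T}_a(t)$ satisfies $\mu\mathrel{\overline{\sim_\e}}\mu_j'$; a Hall-style matching argument then shows the full symmetric state-bisimulation condition. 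The main obstacle is precisely this last step: it requires controlling, in an analytic setting, how $\bar{R}$-equivalence of measures interacts with the hit structure when image-finiteness is available, and it is the place where analyticity is genuinely used (to apply a measurable-selection or $\pi$-$\lambda$ argument to rule out the sort of pathology that made the strictness counterexamples in part (1) possible).
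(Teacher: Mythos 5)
First, a point of context: the paper does not prove this statement at all --- it is quoted as background directly from \cite{D'Argenio:2012:BNL:2139682.2139685} (Theorems 5.6 and 5.10 there), so there is no in-paper argument to compare against. Judged on its own terms, your proofs of the two inclusions in part (1) are correct and are essentially the standard ones. For ${\sim_\sfs}\sbq{\sim_\h}$: the family $\{\xi\in\Delta(\Sigma)\mid \mu\in\xi\Leftrightarrow\mu'\in\xi\}$ is indeed a $\sigma$-algebra, and since $\mu\mathrel{\bar{R}}\mu'$ forces $\mu(E)=\mu'(E)$ for every $E\in\Sigma(R)$, it contains every generator $\Delta^{>q}(E)$ of $\Delta(\Sigma(R))$; symmetry of $R$ (which is built into the definition of state bisimulation) then closes the hit condition. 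For ${\sim_\h}\sbq{\sim_\e}$: showing that $\Tfont{T}_a^{-1}[H_\xi]$ is in $\Sigma$ and is $R$-closed for each $\xi\in\Delta(\Sigma(R))$ correctly exhibits $\Sigma(R)$ as an event bisimulation with $R\sbq\Rel(\Sigma(R))$. The strictness claims are only pointed at, not proved, but that is defensible for an imported result whose witnesses are explicit constructions in the cited reference.

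The genuine gap is in part (2), and you acknowledge it yourself. What you give there is a strategy, not a proof: the passage from ``$s$ and $t$ satisfy the same formulas'' to ``each $\mu\in\Tfont{T}_a(s)$ can be matched by some $\mu'\in\Tfont{T}_a(t)$ with $\mu\mathrel{\overline{\sim_\e}}\mu'$'' is precisely the content of Theorem 5.10 of \cite{D'Argenio:2012:BNL:2139682.2139685}, and a ``Hall-style matching'' remark does not supply it. Two things are missing concretely. First, agreement of two measures on the validity sets $\sem{\varphi}$ must be upgraded to agreement on \emph{all} measurable $\sim_\e$-closed sets; this is where analyticity of the state space is genuinely used (via the unique-structure theorem for analytic Borel spaces applied to the countably generated stable $\sigma$-algebra $\sigma(\sem{\Logic})$), and your sketch only gestures at it. Second, image-finiteness must be exploited to pass from ``for each finite collection of tests some candidate $\mu'_j$ agrees with $\mu$'' to ``a single $\mu'_j$ agrees with $\mu$ on all tests simultaneously'' (a pigeonhole argument over the finitely many elements of $\Tfont{T}_a(t)$). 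Neither step is carried out, so part (2) is not established by the proposal; the honest options are to cite the result as the paper does, or to reproduce the full argument from the reference.
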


\subsection{External bisimulations for NLMP}
\label{sec:NLMP-external}

In this section, we  provide external versions of all
bisimulations in Definition~\ref{def:nlmp-bisimulations}.
Given a relation $R \subseteq S \times
S'$, we reuse the Definition~\ref{def:closed-pair} of $R$-closed pair
and, for convenience, use the notation $\Sigma^\times(R)$ for the
family of $R$-closed measurable pairs $(Q, Q')$. We extend the
definition of lift (\ref{eq:int-lift}) to relations $R$ on $\Delta(S) \times \Delta(S')$ as follows:
\begin{equation}\label{eq:ext-lift}
  \mu \mathrel{\bar{R}} \mu' \iff \forall (Q, Q') \in
  \Sigma^{\times}(R) \; \mu(Q) = \mu'(Q').
\end{equation}
We also use the analogous notation $\Delta(\Sigma)^\times (\bar{R})$
for the family of measurable pairs $(\Theta, \Theta')$ that are
$\bar{R}$-closed.

\begin{remark}\label{note:closed-pairs-reflexive}
  By
  Lemma~\ref{lem:closed-vs-closed-pair}(\ref{item:closed-pair-reflexive-case}),
  if $\lmp{S} = \lmp{S'}$ and $R$ is reflexive, then $(Q, Q') \in
  \Sigma^\times(R) \iff Q = Q' \in \Sigma(R)$. Therefore, the liftings
  \eqref{eq:int-lift} and \eqref{eq:ext-lift} essentially coincide
  (modulo the correspondence $x\longleftrightarrow (x,x)$).
\end{remark}

The following lemma states elementary properties of lifting relations
and follows immediately from Lemma~\ref{lem:pair-equiv} and the
definitions.

\begin{lemma}\label{lem:R-closed-pair-prop}
  \begin{enumerate}
  \item\label{item:closed-pair-closure-prop} The family of
    $\bar{R}$-closed pairs is closed under coordinatewise
    complementation, and coordinatewise arbitrary unions and
    intersections.
  \item\label{item:Sigma-times-antimonotone} If $R \subseteq R'$,
    then $\bar{R} \subseteq \bar{R'}$.
  \item\label{item:closed-pair-implies-bar-closed-pair} If $(Q, Q')
    \in \Sigma^\times(R)$, then $(\Delta^{<q}(Q), \Delta^{<q}(Q'))
    \in \Delta(\Sigma)^\times (\bar{R})$.
  \end{enumerate}
\end{lemma}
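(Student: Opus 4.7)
The plan is to prove each of the three items separately, noting that each reduces rather directly to Lemma~\ref{lem:pair-equiv} applied in the appropriate way. None of the three items looks genuinely hard; the main subtlety is keeping the types straight since $\bar{R}$ lives on $\Delta(S)\times \Delta(S')$ while the measurable spaces of interest are $(\Delta(S),\Delta(\Sigma))$ and $(\Delta(S'),\Delta(\Sigma'))$.

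For item~\ref{item:closed-pair-closure-prop}, I would simply observe that Lemma~\ref{lem:pair-equiv}(2) was stated abstractly for any relation between two sets, with no appeal to measurable or topological structure; hence it applies verbatim to $\bar{R}\sbq \Delta(S)\times \Delta(S')$. The closure property of $\bar R$-closed pairs under complementation and arbitrary unions/intersections in each coordinate is then immediate. (If a more self-contained proof is wanted, it follows in one line from the characterization $\forall (\mu,\mu')\in \bar{R}:\ \mu\in \Theta \Leftrightarrow \mu'\in \Theta'$ in Lemma~\ref{lem:pair-equiv}(\ref{item:in-equiv}).)

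For item~\ref{item:Sigma-times-antimonotone}, the key observation is that $R\sbq R'$ implies, by Lemma~\ref{lem:pair-equiv}(\ref{item:closed-pairs-antimonotonicity}), that every $R'$-closed measurable pair is also $R$-closed, i.e.\ $\Sigma^\times(R')\sbq \Sigma^\times(R)$. Now $\mu\mathrel{\bar R}\mu'$ means that $\mu(Q)=\mu'(Q')$ for every pair in the \emph{larger} family $\Sigma^\times(R)$, and so a fortiori for every pair in $\Sigma^\times(R')$; this is exactly $\mu\mathrel{\bar{R'}}\mu'$. Thus $\bar{R}\sbq \bar{R'}$.

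For item~\ref{item:closed-pair-implies-bar-closed-pair}, I first need measurability: $\Delta^{<q}(Q)=\mathrm{ev}_Q^{-1}[[0,q))$ belongs to $\Delta(\Sigma)$ because evaluation at $Q\in\Sigma$ was defined to be measurable, and likewise for $Q'$. The closure condition then unfolds mechanically: take any $(\mu,\mu')\in \bar{R}$; since $(Q,Q')\in \Sigma^\times(R)$ we have $\mu(Q)=\mu'(Q')$ by definition of $\bar R$, and therefore $\mu(Q)<q \iff \mu'(Q')<q$, i.e.\ $\mu\in \Delta^{<q}(Q) \iff \mu'\in \Delta^{<q}(Q')$. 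By the third characterization of Lemma~\ref{lem:pair-equiv}(\ref{item:in-equiv}), this says precisely that $(\Delta^{<q}(Q),\Delta^{<q}(Q'))$ is a $\bar R$-closed pair, and combined with the measurability observation it is in $\Delta(\Sigma)^\times(\bar{R})$. The only mild obstacle here is being careful that the measurable-pair condition (membership in $\Delta(\Sigma)$ for each coordinate) and the closed-pair condition (the biconditional for $\bar R$-related measures) are both verified; neither step requires new ideas.
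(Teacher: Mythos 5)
Your proposal is correct and follows exactly the route the paper intends: the paper gives no written proof, stating only that the lemma ``follows immediately from Lemma~\ref{lem:pair-equiv} and the definitions,'' and your three arguments (applying Lemma~\ref{lem:pair-equiv} verbatim to $\bar{R}$, reversing the inclusion of $\Sigma^\times(\cdot)$ via antimonotonicity, and unfolding the definition of the lifting together with measurability of $\Delta^{<q}(Q)$) are precisely the intended details. No gaps.
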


We also need the following construction for families of pairs:

\begin{definition}\label{def:ext-cl-operator}
  Let $\mathcal{D} \subseteq \Power(S) \times \Power(S')$ be a family
  of pairs of sets. We define $\sigma^\times(\mathcal{D})$ as the
  smallest family $\Fam \subseteq \Power(S) \times \Power(S')$ such
  that
  \begin{enumerate}
  \item $\mathcal{D} \subseteq \Fam$.
  \item $(E,F) \in \Fam \implies (E^c,F^c) \in \Fam$.
  \item $\forall n \in \omega \; (E_n, F_n) \in \Fam \implies
    (\bigcup_{n \in \omega} E_n, \bigcup_{n \in \omega} F_n) \in
    \Fam$.
  \end{enumerate}
  We call any non-empty subfamily $\Fam$ of $\Power(S) \times
  \Power(S')$ that satisfies conditions 2 and 3 a
  \emph{bi-$\sigma$-algebra}.
\end{definition}

The projection of a
bi-$\sigma$-algebra onto either of its coordinates is a
$\sigma$-algebra.

\begin{definition}\label{def:Delta-pairs}
  If $(S,\Sigma)$ and $(S',\Sigma')$ are measurable spaces and
  $\mathcal{D} \subseteq \Sigma \times \Sigma'$, we define
  $\Delta^\times(\mathcal{D}) \subseteq \Power(\Delta(S)) \times
  \Power(\Delta(S'))$ as
  \[\Delta^\times(\mathcal{D}) \defi \sigma^\times(\{(\Delta^{<q}(Q), \Delta^{<q}(Q')) \mid q \in \Q, (Q,Q') \in \mathcal{D}\}).\]
\end{definition}

\begin{remark}\label{note:Delta-times-prop}
  Note that $\Delta^\times(\mathcal{D}) \subseteq \Delta(\Sigma)
  \times \Delta(\Sigma')$, since $\Fam = \Delta(\Sigma) \times
  \Delta(\Sigma')$ satisfies all the conditions in
  Definition~\ref{def:ext-cl-operator}.
\end{remark}

\begin{lemma}\label{lem:Delta-reflexive-pair}
  If $(S,\Sigma) = (S',\Sigma')$ in Definition~\ref{def:Delta-pairs},
  and $\mathcal{D} = \{(D,D) \mid D \in \A\}$ for some family $\A
    \subseteq \Sigma$, then $\Delta^\times(\mathcal{D}) =
    \{(\Theta,\Theta) \mid \Theta \in \Delta(\A)\}$.
\end{lemma}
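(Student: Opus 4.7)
The plan is to prove the two inclusions by showing each side is a bi-$\sigma$-algebra (or, on the right, part of one) containing the relevant generators.

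For $(\supseteq)$, I would define
\[
  \A^* \defi \{ \Theta \in \Delta(\A) \mid (\Theta,\Theta) \in \Delta^\times(\mathcal{D})\}
\]
and verify that $\A^* = \Delta(\A)$. First, $\A^*$ contains every set of the form $\Delta^{<q}(D)$ with $D \in \A$ and $q \in \Q$, since such pairs are in the generating set of $\Delta^\times(\mathcal{D})$ by definition. Because $\Delta^\times(\mathcal{D})$ is closed under coordinatewise complementation and countable unions, $\A^*$ is closed under complements and countable unions, hence is a $\sigma$-algebra. Finally, I would observe that $\sigma(\{\Delta^{<q}(D) \mid D \in \A,\, q \in \Q\}) = \Delta(\A)$: the identity $\Delta^{>q}(D) = \Delta(S) \smallsetminus \bigcap_n \Delta^{<q+1/n}(D)$ lets me recover the original generators. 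Thus $\A^* \supseteq \Delta(\A)$, and the reverse inclusion is built in.

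For $(\subseteq)$, I would consider the family
\[
  \Fam \defi \{ (\Theta,\Theta) \mid \Theta \in \Delta(\A)\} \sbq \Power(\Delta(S))\times \Power(\Delta(S))
\]
and check that $\Fam$ is a bi-$\sigma$-algebra containing the generating pairs $\{(\Delta^{<q}(D),\Delta^{<q}(D)) \mid D\in \A,\, q\in \Q\}$. Membership of the generators follows from the fact that $\Delta^{<q}(D)$ lies in $\Delta(\A)$ (again, pass between $<q$ and $>q$ via complements and countable intersections). Non-emptiness is immediate (take $(\emptyset,\emptyset)$), and closure under coordinatewise complements and countable unions reduces to the corresponding closure properties of the $\sigma$-algebra $\Delta(\A)$ applied simultaneously to both coordinates (which are equal). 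By the minimality clause in Definition~\ref{def:ext-cl-operator}, $\Delta^\times(\mathcal{D}) \sbq \Fam$, as required.

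There is essentially no obstacle; the only mildly delicate point is making sure the switch between $<q$ and $>q$ generators does not cause confusion, and that the bi-$\sigma$-algebra operations preserve the diagonal shape $(\Theta,\Theta)$ (they obviously do when the two coordinates are equal). Both inclusions are then transparent consequences of the generation-by-closure definitions, so the lemma reduces to a routine verification combining the closure properties stated in Lemma~\ref{lem:R-closed-pair-prop} with the observation that the generating family $\mathcal{D}$ itself is diagonal.
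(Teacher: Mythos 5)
Your proposal is correct and follows essentially the same route as the paper: for $(\subseteq)$ you show the diagonal family $\{(\Theta,\Theta)\mid \Theta\in\Delta(\A)\}$ is a bi-$\sigma$-algebra containing the generators and invoke minimality, and for $(\supseteq)$ you run the standard good-sets argument on $\{\Theta\mid(\Theta,\Theta)\in\Delta^\times(\mathcal{D})\}$, exactly as in the paper's proof. Your explicit justification that the $\Delta^{<q}$ sets generate the same $\sigma$-algebra as the $\Delta^{>q}$ sets fills in a step the paper leaves implicit.
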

\begin{proof}
  $(\subseteq)$ Let $\Fam = \{(\Theta,\Theta) \in \Delta(\Sigma)
    \times \Delta(\Sigma) \mid \Theta \in \Delta(\A)\}$. We will show
  that $\Delta^\times(\mathcal{D}) \subseteq \Fam$. If $D \in \A$,
  then $\Theta \defi \Delta^{<q}(D) \in \Delta(\A)$. Therefore,
  \[
    \{(\Delta^{<q}(D), \Delta^{<q}(D)) \mid q \in \Q, \; (D,D) \in
    \mathcal{D}\} \subseteq \Fam.
  \]
  If $(\Theta,\Theta),
  (\Theta_n,\Theta_n) \in \Fam$, then $\Theta, \Theta_n \in
  \Delta(\A)$, and since this is a $\sigma$-algebra, we have
  $\Theta^c, \bigcup_{n \in \omega} \Theta_n \in \Delta(\A)$. Hence,
  $(\Theta^c, \Theta^c), (\bigcup_{n \in \omega} \Theta_n, \bigcup_{n
    \in \omega} \Theta_n) \in \Fam$. We conclude that
  $\Delta^\times(\mathcal{D}) \subseteq \Fam$, as desired.
  
  $(\supseteq)$ Let $\mathcal{H} = \{\Theta \in \Delta(\Sigma) \mid
    (\Theta,\Theta) \in \Delta^\times(\mathcal{D})\}$. Since
  $\Delta^\times(\mathcal{D})$ is closed under complement and
  countable unions in each coordinate, $\mathcal{H}$ is a
  $\sigma$-algebra. Moreover, if $D \in \A$, then $(\Delta^{<q}(D),
  \Delta^{<q}(D)) \in \Delta^\times(\mathcal{D})$ for any $q \in \Q$;
  hence, $\Delta^{<q}(D) \in \mathcal{H}$. It follows that $\Delta(\A)
  = \sigma(\{\Delta^{<q}(D) \mid q \in \Q, \; D \in \A\}) \subseteq
    \mathcal{H}$.
\end{proof}

\begin{corollary}\label{cor:measures-closed-pairs-single-NLMP}
  Let $R \subseteq S \times S$ be reflexive. Then $(\Theta,\Theta')
  \in \Delta^\times(\Sigma^\times(R))$ if and only if $\Theta =
  \Theta' \in \Delta(\Sigma(R))$.
\end{corollary}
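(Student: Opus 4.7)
The plan is to derive this corollary as an immediate application of Lemma~\ref{lem:Delta-reflexive-pair} together with Remark~\ref{note:closed-pairs-reflexive}. The key observation is that the corollary is asking us to compute $\Delta^\times$ of a specific family, namely $\Sigma^\times(R)$, when $R$ is reflexive, and both ingredients needed are already in place.

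First I would invoke Remark~\ref{note:closed-pairs-reflexive}: since $R \subseteq S \times S$ is reflexive, a measurable pair $(Q, Q')$ is $R$-closed if and only if $Q = Q' \in \Sigma(R)$. In other words, $\Sigma^\times(R) = \{(D, D) \mid D \in \Sigma(R)\}$. This brings $\Sigma^\times(R)$ exactly into the ``diagonal'' shape required by the hypothesis of Lemma~\ref{lem:Delta-reflexive-pair}, with $\A \defi \Sigma(R)$.

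Then I would simply apply Lemma~\ref{lem:Delta-reflexive-pair} with $\mathcal{D} \defi \Sigma^\times(R)$ and $\A \defi \Sigma(R)$. The lemma directly yields
\[
  \Delta^\times(\Sigma^\times(R)) = \{(\Theta, \Theta) \mid \Theta \in \Delta(\Sigma(R))\},
\]
which is precisely the biconditional in the statement: $(\Theta, \Theta') \in \Delta^\times(\Sigma^\times(R))$ iff $\Theta = \Theta' \in \Delta(\Sigma(R))$.

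There is no real obstacle here; this is a bookkeeping corollary whose content is to package the ``reflexive case'' of the external lifting back into the familiar internal form. If anything deserves care, it is just verifying that the identification of $\Sigma^\times(R)$ with the diagonal family $\{(D,D) \mid D \in \Sigma(R)\}$ is a faithful one (rather than, say, identifying it up to some quotient), but this is guaranteed by Lemma~\ref{lem:closed-vs-closed-pair}(\ref{item:closed-pair-reflexive-case}) which underlies the remark. Hence the proof is a one-line citation of the preceding lemma and remark.
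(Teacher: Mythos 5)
Your proposal is correct and follows exactly the paper's own proof: apply Remark~\ref{note:closed-pairs-reflexive} to identify $\Sigma^\times(R)$ with the diagonal family $\{(Q,Q)\mid Q\in\Sigma(R)\}$, then conclude via Lemma~\ref{lem:Delta-reflexive-pair} with $\A=\Sigma(R)$. No gaps.
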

\begin{proof}
  By Remark~\ref{note:closed-pairs-reflexive}, $\Sigma^\times(R)
  = \{(Q,Q) \mid Q \in \Sigma(R)\}$. We use
  Lemma~\ref{lem:Delta-reflexive-pair} to obtain
  $\Delta^\times(\Sigma^\times(R)) = \{(\Theta,\Theta) \mid \Theta \in
    \Delta(\Sigma(R))\}$.
\end{proof}

Another consequence of Lemma~\ref{lem:Delta-reflexive-pair} is that if
$\pi$ is any of the projections, then $\pi[\Delta^\times(\{(D,D) \mid
    D \in \A\})] = \Delta(\A) = \Delta(\pi[\{(D,D) \mid D \in
    \A\}])$. This fact holds in general, as shown in the following
result that we prove for the projection $\pi$ on the first coordinate
(it holds \textit{mutatis mutandis} for the projection on the second
coordinate).

\begin{lemma}\label{lem:projection-Delta-times}
  If $\mathcal{D} \subseteq \Sigma \times \Sigma'$, then
  $\pi[\Delta^\times(\mathcal{D})] = \Delta(\pi[\mathcal{D}])$.
\end{lemma}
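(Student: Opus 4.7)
The plan is to prove both inclusions by standard ``smallest structure'' arguments that exploit the parallel inductive definitions of $\Delta^\times(\mathcal{D})$ and $\Delta(\pi[\mathcal{D}])$. As a preliminary remark, even though $\Delta(\Gamma)$ is defined via the generators $\Delta^{>q}(E)$ while $\Delta^\times(\mathcal{D})$ uses $\Delta^{<q}(Q)$, both families generate the same $\sigma$-algebra on $\Delta(S)$, since $\Delta^{<q}(Q) = \Delta(S) \setminus \bigcap_n \Delta^{>q-1/n}(Q)$ and, conversely, $\Delta^{>q}(Q) = \bigcup_n (\Delta(S) \setminus \Delta^{<q+1/n}(Q))$. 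I will use this without further mention.

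For the inclusion $(\subseteq)$, I would consider the product family
\[
  \Fam \defi \Delta(\pi[\mathcal{D}]) \times \Delta(\Sigma')
\]
and verify it satisfies conditions 2 and 3 of Definition~\ref{def:ext-cl-operator}; this is immediate as each coordinate is a $\sigma$-algebra, and the required closures act coordinatewise. For any $(Q,Q') \in \mathcal{D}$ and $q \in \Q$, the pair $(\Delta^{<q}(Q),\Delta^{<q}(Q'))$ belongs to $\Fam$: the first entry lies in $\Delta(\pi[\mathcal{D}])$ by the preliminary remark, and the second lies in $\Delta(\Sigma')$ by Remark~\ref{note:Delta-times-prop}. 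By minimality of $\Delta^\times(\mathcal{D})$ we get $\Delta^\times(\mathcal{D}) \subseteq \Fam$, and hence $\pi[\Delta^\times(\mathcal{D})] \subseteq \Delta(\pi[\mathcal{D}])$.

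For the inclusion $(\supseteq)$, I would set $\mathcal{H} \defi \pi[\Delta^\times(\mathcal{D})]$ and show first that $\mathcal{H}$ is a $\sigma$-algebra. Given $E, E_n \in \mathcal{H}$, pick witnesses $(E,F), (E_n,F_n) \in \Delta^\times(\mathcal{D})$; then $(E^c, F^c)$ and $(\bigcup_n E_n, \bigcup_n F_n)$ belong to $\Delta^\times(\mathcal{D})$ by the bi-$\sigma$-algebra closure properties, so $E^c$ and $\bigcup_n E_n$ are in $\mathcal{H}$. Next, for each $Q \in \pi[\mathcal{D}]$ choose $Q'$ with $(Q,Q') \in \mathcal{D}$; then $(\Delta^{<q}(Q),\Delta^{<q}(Q')) \in \Delta^\times(\mathcal{D})$ by the very definition of the latter, so $\Delta^{<q}(Q) \in \mathcal{H}$. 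Since $\{\Delta^{<q}(Q) \mid q \in \Q,\ Q \in \pi[\mathcal{D}]\}$ generates $\Delta(\pi[\mathcal{D}])$, we conclude $\Delta(\pi[\mathcal{D}]) \subseteq \mathcal{H}$, finishing the proof.

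No step presents a serious obstacle; the argument is a transparent duplication of the standard monotone-class/generator technique adapted to the bi-$\sigma$-algebra setting. The only point requiring mild attention is the bookkeeping around the two generator conventions ($\Delta^{<q}$ versus $\Delta^{>q}$), which is handled once at the outset.
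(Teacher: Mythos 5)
Your proposal is correct and follows essentially the same argument as the paper: the $(\subseteq)$ direction via minimality of $\Delta^\times(\mathcal{D})$ against the bi-$\sigma$-algebra $\Delta(\pi[\mathcal{D}])\times\Delta(\Sigma')$, and the $(\supseteq)$ direction by showing the projection is a $\sigma$-algebra containing the generators $\Delta^{<q}(Q)$. Your preliminary remark reconciling the $\Delta^{<q}$ and $\Delta^{>q}$ generator conventions is a detail the paper leaves implicit, and it is handled correctly.
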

\begin{proof}
  $(\subseteq)$ Note that $\Delta(\pi[\mathcal{D}]) \times
  \Delta(\Sigma')$ is a bi-$\sigma$-algebra that contains the
  generators
  of $\Delta^\times(\mathcal{D})$. Therefore,
  $\Delta^\times(\mathcal{D}) \subseteq \Delta(\pi[\mathcal{D}]) \times
  \Delta(\Sigma')$, and hence $\pi[\Delta^\times(\mathcal{D})]
  \subseteq \Delta(\pi[\mathcal{D}])$.
  
  $(\supseteq)$ We know that $\pi[\Delta^\times(\mathcal{D})]$ is a
  $\sigma$-algebra. If $Q \in \pi[\mathcal{D}]$, then there exists $Q'
  \in \Sigma'$ such that $(Q,Q') \in \mathcal{D}$. Consequently,
  $(\Delta^{<q}(Q),\Delta^{<q}(Q')) \in \Delta^\times(\mathcal{D})$,
  and therefore $\Delta^{<q}(Q) \in
  \pi[\Delta^\times(\mathcal{D})]$. We conclude that
  $\Delta(\pi[\mathcal{D}]) \subseteq \pi[\Delta^\times(\mathcal{D})]$.
\end{proof}

The following Lemma tells us that if $\mathcal{C}$ consists of
measurable $R$-closed pairs, then $\Delta^\times(\mathcal{C})$
consists of $\Delta(\Sigma)$-measurable $\bar{R}$-closed pairs.

\begin{lemma}\label{lem:closed-pair-implies-bar-closed-pair}
  If $\mathcal{C}\subseteq \Sigma^{\times}(R)$, then
  $\Delta^\times(\mathcal{C})\subseteq \Delta(\Sigma)^\times(\bar{R})$.
\end{lemma}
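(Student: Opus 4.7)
The plan is to exploit the minimality property in Definition~\ref{def:ext-cl-operator}: to show $\Delta^\times(\mathcal{C})\subseteq \Delta(\Sigma)^\times(\bar{R})$ it suffices to verify that the right-hand side is itself a family containing the generators
\[
  \bigl\{(\Delta^{<q}(Q), \Delta^{<q}(Q')) \mid q\in \Q,\ (Q,Q')\in \mathcal{C}\bigr\}
\]
and satisfying the closure conditions (2)--(3) of Definition~\ref{def:ext-cl-operator}, i.e., is a bi-$\sigma$-algebra.

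For the generators I would invoke Lemma~\ref{lem:R-closed-pair-prop}(\ref{item:closed-pair-implies-bar-closed-pair}) directly: if $(Q,Q')\in \mathcal{C} \subseteq \Sigma^\times(R)$, then $(\Delta^{<q}(Q), \Delta^{<q}(Q'))$ is a $\bar{R}$-closed pair; measurability of each component in $\Delta(\Sigma)$ (resp.\ $\Delta(\Sigma')$) is immediate since these are among the defining generators of $\Delta(\Sigma)$. So the generators lie in $\Delta(\Sigma)^\times(\bar{R})$.

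For the closure conditions I would use Lemma~\ref{lem:R-closed-pair-prop}(\ref{item:closed-pair-closure-prop}), which says that the family of $\bar{R}$-closed pairs is closed under coordinatewise complementation and coordinatewise arbitrary (hence countable) unions. Since $\Delta(\Sigma)\times \Delta(\Sigma')$ is trivially closed under the same operations, their intersection $\Delta(\Sigma)^\times(\bar{R})$ is a bi-$\sigma$-algebra in the sense of Definition~\ref{def:ext-cl-operator}. By minimality of $\Delta^\times(\mathcal{C})$, the desired inclusion follows.

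There is essentially no obstacle here: the non-trivial work has been packaged into Lemma~\ref{lem:R-closed-pair-prop}, and the proof is just a routine generation argument. The one subtle point worth double-checking is that the notion of bi-$\sigma$-algebra only requires countable unions, whereas Lemma~\ref{lem:R-closed-pair-prop}(\ref{item:closed-pair-closure-prop}) gives arbitrary unions, so this direction is automatic; similarly complementation in the pair sense (coordinatewise) matches exactly the closure condition (2).
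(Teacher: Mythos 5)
Your proof is correct and follows essentially the same route as the paper's: both verify that $\Delta(\Sigma)^\times(\bar{R})$ contains the relevant generators via Lemma~\ref{lem:R-closed-pair-prop}(\ref{item:closed-pair-implies-bar-closed-pair}) and is a bi-$\sigma$-algebra via Lemma~\ref{lem:R-closed-pair-prop}(\ref{item:closed-pair-closure-prop}), then conclude by minimality. The only cosmetic difference is that the paper first establishes $\Delta^\times(\Sigma^\times(R))\subseteq \Delta(\Sigma)^\times(\bar{R})$ and then appeals to monotonicity of $\Delta^\times$ applied to $\mathcal{C}\subseteq\Sigma^\times(R)$, whereas you apply the generation argument to $\mathcal{C}$ directly; your explicit remark on the measurability of the generators is a welcome small addition.
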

\begin{proof}
  If $(Q,Q')$ is a measurable $R$-closed pair, by
  Lemma~\ref{lem:R-closed-pair-prop}(\ref{item:closed-pair-implies-bar-closed-pair}),
  \[
    (\Delta^{<q}(Q),\Delta^{<q}(Q')) \in
    \Delta(\Sigma)^\times(\bar{R}).
  \]
  Additionally, by
  Lemma~\ref{lem:R-closed-pair-prop}(\ref{item:closed-pair-closure-prop}),
  $\Delta(\Sigma)^\times(\bar{R})$ is a bi-$\sigma$-algebra.  This shows
  that $\mathcal{F}\defi \Delta(\Sigma)^\times(\bar{R})$ satisfies the
  properties in Definition~\ref{def:ext-cl-operator} for the generating
  family of $\Delta^\times(\Sigma^\times(R))$.  Consequently,
  $\Delta^\times(\Sigma^\times(R)) \subseteq
  \Delta(\Sigma)^\times(\bar{R})$.  If now $\mathcal{C}\subseteq
  \Sigma^\times(R)$, then the inclusions $\Delta^\times(\mathcal{C})
  \subseteq \Delta^\times(\Sigma^\times(R)) \subseteq
  \Delta(\Sigma)^\times(\bar{R})$ hold.
\end{proof}

Using these concepts, we can propose the following definitions for
bisimulations between two different NLMP.

\begin{definition}\label{def:nlmp-ext-bisimulations}
  \begin{enumerate}
  \item \label{def:nlmp-ext-sb} A relation $R \subseteq S\times S'$ is
    an \emph{external state bisimulation} if for every $a \in L$,
    $s\mathrel{R}s'$ implies that for every $\mu \in \Tfont{T}_a(s)$
    there exists $\mu'\in \Tfont{T}'_a(s')$ such that $\mu
    \mathrel{\bar{R}}\mu'$ (“zig”) and for every $\mu' \in \Tfont{T'}_a(s')$
    there exists $\mu\in \Tfont{T}_a(s)$ such that $\mu
    \mathrel{\bar{R}}\mu'$ (“zag”).
    
  \item \label{def:nlmp-ext-hb} A relation $R\subseteq S\times S'$ is
    an \emph{external hit bisimulation} if for every $a\in L$,
    $s\mathrel{R}s'$ implies that for every pair $(\Theta,\Theta')\in
    \Delta^\times(\Sigma^{\times}(R))$ we have $\Tfont{T}_a(s)\cap
    \Theta\neq \emptyset \iff \Tfont{T}_a(t)\cap \Theta'\neq
    \emptyset$.
  \end{enumerate}
  We will say that $s \in S, s'\in S'$ are \emph{ext-state bisimilar}
  (\emph{ext-hit bisimilar}), denoted as $s\sim_\sfs^\times s'$
  ($s\mathrel{\sim_\h^\times}s'$), if there exists an external state
  (hit) bisimulation $R$ such that $s\mathrel{R}s'$.
\end{definition}

\begin{lemma}
  If $\lmp{S}=\lmp{S}'$, the external definitions of state and hit
  bisimulations coincide with the internal ones in
  the reflexive case.
\end{lemma}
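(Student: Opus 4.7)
The plan is to unfold the external definitions and reduce them to the internal ones using two identifications already established for reflexive relations: Remark~\ref{note:closed-pairs-reflexive}, which says that when $R\sbq S\times S$ is reflexive then $\Sigma^\times(R)=\{(Q,Q)\mid Q\in\Sigma(R)\}$, so that the external lifting \eqref{eq:ext-lift} collapses into the internal lifting \eqref{eq:int-lift}; and Corollary~\ref{cor:measures-closed-pairs-single-NLMP}, which gives the analogous identification $\Delta^\times(\Sigma^\times(R))=\{(\Theta,\Theta)\mid\Theta\in\Delta(\Sigma(R))\}$ at the level of measures.

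For state bisimulation, I would substitute the reformulation of \eqref{eq:ext-lift} into Definition~\ref{def:nlmp-ext-bisimulations}(\ref{def:nlmp-ext-sb}): both the ``zig'' clause (for every $\mu\in\Tfont{T}_a(s)$ there is $\mu'\in\Tfont{T}_a(s')$ with $\mu\bar R\mu'$) and the ``zag'' clause become the same conditions but with respect to the internal lifting. This matches the Markov-structure requirement of the internal Definition~\ref{def:nlmp-bisimulations}(\ref{def:nlmp_sb}), in which symmetry together with the one-sided zig clause already packages the missing zag. For hit bisimulation, Corollary~\ref{cor:measures-closed-pairs-single-NLMP} collapses the quantifier in Definition~\ref{def:nlmp-ext-bisimulations}(\ref{def:nlmp-ext-hb}) to one over $\Theta\in\Delta(\Sigma(R))$, and the resulting biconditional $\Tfont{T}_a(s)\cap\Theta\neq\emptyset\iff\Tfont{T}_a(s')\cap\Theta\neq\emptyset$ is precisely the clause in Definition~\ref{def:nlmp-bisimulations}(\ref{def:nlmp_hb}).

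No real obstacle arises: the argument is essentially a change of variables driven by the Remark and the Corollary, together with a straightforward unfolding of definitions. The only point requiring some care is reconciling the explicit symmetry clause of the internal definitions with the fact that the external definitions package their symmetric content into either a conjoined zig-and-zag pair or a biconditional; under reflexivity, combined with the fact that $(Q,Q')\in\Sigma^\times(R)$ forces $Q=Q'$, this reconciliation is automatic and no asymmetry between the two sides of each condition survives.
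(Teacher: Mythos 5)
Your proposal is correct and follows essentially the same route as the paper: the state case is reduced to the internal definition via Remark~\ref{note:closed-pairs-reflexive} (collapsing the external lifting \eqref{eq:ext-lift} to \eqref{eq:int-lift}), and the hit case via Corollary~\ref{cor:measures-closed-pairs-single-NLMP} (identifying $\Delta^\times(\Sigma^\times(R))$ with the diagonal pairs over $\Delta(\Sigma(R))$). Your extra remark about how the internal definitions' symmetry clause absorbs the zig/zag pair and the biconditional is a reasonable elaboration of a point the paper's own proof leaves implicit.
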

\begin{proof}
  In the case of state bisimulations,
  Remark~\ref{note:closed-pairs-reflexive} proves that if $R$ is
  reflexive, then it is an external state bisimulation if and only if
  it is an internal state bisimulation.
  
  Suppose that $R\subseteq S\times S$ and $s\mathrel{R} s'$. If $R$ is
  an external hit bisimulation and $\Theta \in \Delta(\Sigma(R))$,
  then $(\Theta, \Theta)\in \Delta^\times(\Sigma^\times(R))$ by
  Corollary~\ref{cor:measures-closed-pairs-single-NLMP}. Therefore,
  $\Tfont{T}_a(s)\cap \Theta\neq \emptyset \iff \Tfont{T}_a(t)\cap
  \Theta\neq \emptyset$. Conversely, if $R$ is an internal hit
  bisimulation and $(\Theta, \Theta')\in
  \Delta^\times(\Sigma^\times(R))$ then $\Theta=\Theta'\in
  \Delta(\Sigma(R))$ by the same Corollary, and it holds that
  $\Tfont{T}_a(s)\cap \Theta\neq \emptyset \iff \Tfont{T}_a(t)\cap
  \Theta'\neq \emptyset$.
\end{proof}

If an external state bisimulation $R$ in the previous lemma is not
reflexive, it is straightforward to verify that $R\cup R^{-1}$ is an
internal state bisimulation.

\begin{prop}\label{prop:union-ext-bisimulation}
  The union of an arbitrary family of external hit (state)
  bisimulations is an external hit (state) bisimulation. Therefore,
  $\sim_\h^\times$ ($\sim_\sfs^\times$) is an external hit (state)
  bisimulation.
\end{prop}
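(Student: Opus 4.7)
The plan is to exploit the monotonicity/antimonotonicity of the operators involved so that any witnessing $R_i$ in the union automatically transfers its bisimulation property to $R \defi \bigcup_{i\in I}R_i$. The only bookkeeping needed is to track which operators are monotone in $R$ and which ones reverse inclusions.

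First I would set $R \defi \bigcup_{i\in I} R_i$ and fix a pair $(s,s')\in R$; by definition there is some $i_0\in I$ with $s\mathrel{R_{i_0}}s'$. Since $R_{i_0}\sbq R$, Lemma~\ref{lem:pair-equiv}(\ref{item:closed-pairs-antimonotonicity}) yields $\Sigma^\times(R)\sbq \Sigma^\times(R_{i_0})$. Combined with Lemma~\ref{lem:R-closed-pair-prop}(\ref{item:Sigma-times-antimonotone}), this gives the key monotonicity $\bar{R_{i_0}}\sbq \bar{R}$ of the lift to $\Delta(S)\times \Delta(S')$.

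For the state case: given $\mu\in \Tfont{T}_a(s)$, use that $R_{i_0}$ is an external state bisimulation to obtain $\mu'\in \Tfont{T}'_a(s')$ with $\mu\mathrel{\bar{R_{i_0}}}\mu'$, and conclude $\mu\mathrel{\bar{R}}\mu'$ by the above inclusion. The “zag” half is symmetric. Hence $R$ is an external state bisimulation.

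For the hit case: from $\Sigma^\times(R)\sbq \Sigma^\times(R_{i_0})$ and the monotonicity of $\Delta^\times(\cdot)$ (which is immediate from Definition~\ref{def:Delta-pairs} since any bi-$\sigma$-algebra containing the generators of $\Delta^\times(\Sigma^\times(R_{i_0}))$ also contains those of $\Delta^\times(\Sigma^\times(R))$), we get
\[
  \Delta^\times(\Sigma^\times(R)) \sbq \Delta^\times(\Sigma^\times(R_{i_0})).
\]
Then any $(\Theta,\Theta')\in \Delta^\times(\Sigma^\times(R))$ lies in $\Delta^\times(\Sigma^\times(R_{i_0}))$, so the hit-bisimulation property of $R_{i_0}$ at the pair $(s,s')$ gives $\Tfont{T}_a(s)\cap \Theta\neq \emptyset \iff \Tfont{T}'_a(s')\cap \Theta'\neq \emptyset$. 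Taking $R$ to be the union of all external hit (respectively state) bisimulations delivers the second assertion. There is no substantive obstacle; the only thing to be careful about is not confusing the antimonotonicity of $R\mapsto \Sigma^\times(R)$ with the monotonicity of $R\mapsto \bar{R}$ and $\mathcal{D}\mapsto \Delta^\times(\mathcal{D})$.
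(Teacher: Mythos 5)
Your proposal is correct and follows essentially the same route as the paper's proof: both reduce to a single witnessing $R_{i_0}$ and then transfer its property to the union via the antimonotonicity of $R\mapsto\Sigma^\times(R)$ (hence monotonicity of $\mathcal{D}\mapsto\Delta^\times(\mathcal{D})$) for the hit case and the monotonicity of $R\mapsto\bar{R}$ for the state case, citing the same two lemmas. The only cosmetic difference is that the monotonicity of the lift is already stated directly in Lemma~\ref{lem:R-closed-pair-prop}(\ref{item:Sigma-times-antimonotone}), so it need not be re-derived by ``combining'' it with the antimonotonicity of $\Sigma^\times$.
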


\begin{proof}
  Let $\{R_i\mid i\in I\}$ be a family of relations, and let $R =
    \bigcup\{R_i\mid i\in I\}$. If $s\mathrel{R} s'$, there exists
    $i\in I$ such that $s\mathrel{R_i}s'$. Since $R_i\subseteq R$, by
    Lemmas~\ref{lem:pair-equiv}(\ref{item:closed-pairs-antimonotonicity})
    and
    \ref{lem:R-closed-pair-prop}(\ref{item:Sigma-times-antimonotone}),
    we have:
    \begin{enumerate}
    \item $\Sigma^\times(R)\subseteq \Sigma^\times(R_i)$, and thus
      $\Delta^\times(\Sigma^\times(R)) \subseteq
      \Delta^\times(\Sigma^\times(R_i))$.
    \item $\bar{R_i}\subseteq \bar{R}$.
    \end{enumerate}
    If $R_i$ is a hit bisimulation and $(\Theta,\Theta')\in
    \Delta(\Sigma^\times(R))$, then by the first item above,
    $(\Theta,\Theta')\in \Delta(\Sigma^\times(R_i))$, and we have
    $s\in\Tfont{T}_a^{-1}[H_\Theta]\iff s'\in
    \Tfont{T}'_a\!^{-1}[H_{\Theta'}]$.

    If $R_i$ is a state bisimulation and $\mu \in \Tfont{T}_a(s)$,
    there exists $\mu'\in \Tfont{T}'_a(s')$ such that $\mu
    \mathrel{\bar{R_i}}\mu'$. By the second item above, $\mu
    \mathrel{\bar{R}}\mu'$. The zag condition is analogous.
\end{proof}

\begin{lemma}\label{lem:ext-state-implies-hit-NLMP}
  If $R$ is an external state bisimulation, then it is also an
  external hit bisimulation.
\end{lemma}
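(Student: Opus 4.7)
The plan is to verify the external hit bisimulation condition for $R$ directly by unwinding the definitions, leveraging Lemma~\ref{lem:closed-pair-implies-bar-closed-pair} to bridge the gap between $\Delta^\times(\Sigma^\times(R))$ and $\Delta(\Sigma)^\times(\bar{R})$. Concretely, I would fix $s\mathrel{R}s'$, an action $a\in L$, and an arbitrary pair $(\Theta,\Theta')\in \Delta^\times(\Sigma^\times(R))$, and aim at $\Tfont{T}_a(s)\cap\Theta\neq\emptyset \iff \Tfont{T}'_a(s')\cap\Theta'\neq\emptyset$.

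The first step is to instantiate Lemma~\ref{lem:closed-pair-implies-bar-closed-pair} with $\mathcal{C}\defi\Sigma^\times(R)$. This yields $\Delta^\times(\Sigma^\times(R))\sbq\Delta(\Sigma)^\times(\bar{R})$, so in particular $(\Theta,\Theta')$ is a $\bar{R}$-closed measurable pair. By Lemma~\ref{lem:pair-equiv}(\ref{item:in-equiv}), this reformulates as: for every $(\mu,\mu')\in\bar{R}$, $\mu\in\Theta\iff\mu'\in\Theta'$. Thus the measurable pairs appearing in the hit condition automatically transport $\bar{R}$-related measures in unison.

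With this in hand, the second step is to apply the zig/zag clauses of external state bisimulation. Assume $\mu\in\Tfont{T}_a(s)\cap\Theta$; the zig condition produces $\mu'\in\Tfont{T}'_a(s')$ with $\mu\mathrel{\bar{R}}\mu'$, and the previous paragraph gives $\mu'\in\Theta'$, whence $\Tfont{T}'_a(s')\cap\Theta'\neq\emptyset$. The converse implication follows in exactly the same way from the zag condition.

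As for the main obstacle: there isn't really one, once one recognizes that Lemma~\ref{lem:closed-pair-implies-bar-closed-pair} is tailored for this transfer. The only potentially delicate point would have been to prove by induction on the bi-$\sigma$-algebra construction in Definition~\ref{def:ext-cl-operator} that every $(\Theta,\Theta')\in\Delta^\times(\Sigma^\times(R))$ is $\bar{R}$-closed, but that bi-$\sigma$-algebra closure argument is already packaged into Lemma~\ref{lem:R-closed-pair-prop}(\ref{item:closed-pair-closure-prop}) and the cited lemma. Hence the full proof amounts to a brief chain of definition-unfolding.
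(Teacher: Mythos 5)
Your proposal is correct and follows essentially the same route as the paper's proof: both invoke Lemma~\ref{lem:closed-pair-implies-bar-closed-pair} to see that $(\Theta,\Theta')$ is a $\bar{R}$-closed measurable pair, then apply the zig condition to transport a witness $\mu\in\Tfont{T}_a(s)\cap\Theta$ to some $\mu'\in\Tfont{T}'_a(s')\cap\Theta'$, with the converse by the zag condition. The extra remark about Lemma~\ref{lem:pair-equiv}(\ref{item:in-equiv}) just makes explicit what the paper leaves implicit.
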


\begin{proof}
  Let $(s,s')\in R$ and $(\Theta,\Theta')\in
  \Delta^\times(\Sigma^\times(R))$. By
  Lemma~\ref{lem:closed-pair-implies-bar-closed-pair},
  $(\Theta,\Theta')$ is a $\bar{R}$-closed measurable pair. If there
  exists $\mu \in\Tfont{T}_a(s)$ such that $\mu\in \Theta$, then since
  $R$ is a state bisimulation, there exists $\mu' \in
  \Tfont{T}'_a(s')$ such that $\mu\mathrel{\bar{R}}\mu'$. Therefore,
  $\mu' \in \Tfont{T}'_a(s')\cap \Theta'$. The converse implication
  follows by using the zag condition.
\end{proof}

The following lemma indicates that the converse to
Lemma~\ref{lem:ext-state-implies-hit-NLMP} holds for image-countable
NLMPs, just like in the internal case
\cite[Thm.4.5]{D'Argenio:2012:BNL:2139682.2139685}.

\begin{lemma}\label{lem:ext-impac-is-state-NLMP}
  If $\lmp{S}$ and $\lmp{S'}$ are image-countable NLMPs, then
  every external hit bisimulation is also an external state
  bisimulation.
\end{lemma}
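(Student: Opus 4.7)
The plan is to mimic the internal image-countable case \cite[Thm.~4.5]{D'Argenio:2012:BNL:2139682.2139685} by proving the zig condition (the zag being symmetric). Suppose $R\sbq S\times S'$ is an external hit bisimulation, that $s\mathrel{R}s'$, fix $a\in L$, and take an arbitrary $\mu\in\Tfont{T}_a(s)$. We aim to show there exists $\mu'\in\Tfont{T}'_a(s')$ with $\mu\mathrel{\bar R}\mu'$, arguing by contradiction: assume no such $\mu'$ exists.

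Under that assumption, for each $\mu'\in\Tfont{T}'_a(s')$ we can pick a witness of $\mu\not\mathrel{\bar R}\mu'$, namely a pair $(Q_{\mu'},Q'_{\mu'})\in\Sigma^\times(R)$ with $\mu(Q_{\mu'})\neq\mu'(Q'_{\mu'})$, and a rational $q_{\mu'}$ strictly between these two values. The next step is to turn each such witness into a generator-level pair in $\Delta^\times(\Sigma^\times(R))$ that separates $\mu$ from $\mu'$: if $\mu(Q_{\mu'})<q_{\mu'}<\mu'(Q'_{\mu'})$, take $(\Theta_{\mu'},\Theta'_{\mu'})\defi(\Delta^{<q_{\mu'}}(Q_{\mu'}),\Delta^{<q_{\mu'}}(Q'_{\mu'}))$, so that $\mu\in\Theta_{\mu'}$ and $\mu'\notin\Theta'_{\mu'}$; in the opposite case $\mu(Q_{\mu'})>q_{\mu'}>\mu'(Q'_{\mu'})$ use the complements, which again belong to $\Delta^\times(\Sigma^\times(R))$ by closure under coordinatewise complementation (Definition~\ref{def:ext-cl-operator}). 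Note that working with subprobability measures prevents us from simply complementing $(Q_{\mu'},Q'_{\mu'})$ inside $\Sigma^\times(R)$ to flip the inequality, which is precisely why the case split is handled at the level of the bi-$\sigma$-algebra, not at the level of $R$-closed pairs.

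Now the image-countability hypothesis enters crucially: because $\Tfont{T}'_a(s')$ is countable, we may form
\[
  (\Theta,\Theta')\defi\Bigl(\bigcap_{\mu'\in\Tfont{T}'_a(s')}\Theta_{\mu'},\ \bigcap_{\mu'\in\Tfont{T}'_a(s')}\Theta'_{\mu'}\Bigr),
\]
which lies in $\Delta^\times(\Sigma^\times(R))$ by closure under coordinatewise countable intersections. By construction $\mu\in\Theta$, hence $\Tfont{T}_a(s)\cap\Theta\neq\emptyset$, while $\mu'\notin\Theta'_{\mu'}\supseteq\Theta'$ for every $\mu'\in\Tfont{T}'_a(s')$, hence $\Tfont{T}'_a(s')\cap\Theta'=\emptyset$. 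This contradicts the hit bisimulation property of $R$ applied to the pair $(\Theta,\Theta')$, finishing the zig. The zag is obtained by the same argument with the roles of $\lmp{S}$ and $\lmp{S}'$ swapped, using that $R^{-1}$ inherits the hit bisimulation property from $R$ (or, equivalently, re-running the argument literally since the definition is symmetric in the two components).

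The main obstacle I anticipate is the careful bookkeeping in step two: because the measures are subprobabilities, one cannot simply normalise or flip inequalities via a single complement inside $\Sigma^\times(R)$, so the case split must be performed at the level of $\Delta^\times(\Sigma^\times(R))$ and it must be verified that complementing $\Delta^{<q_{\mu'}}(Q_{\mu'})$ there legitimately produces an element of that bi-$\sigma$-algebra. Everything else is a routine countable-union argument of the kind already used in Lemma~\ref{lem:closed-pair-implies-bar-closed-pair}.
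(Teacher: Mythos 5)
Your proposal is correct and follows essentially the same route as the paper's proof: negate the zig condition for a fixed $\mu\in\Tfont{T}_a(s)$, use image-countability to extract for each $\mu'_n\in\Tfont{T}'_a(s')$ a separating $R$-closed measurable pair together with a rational threshold, intersect the resulting countably many generator-level pairs inside $\Delta^\times(\Sigma^\times(R))$, and contradict the hit condition. The only cosmetic difference is that the paper records the direction of each inequality with a symbol $\bowtie_n\in\{<,>\}$ and uses the sets $\Delta^{\bowtie_n q_n}(\cdot)$ directly, whereas you realize the reversed inequality via coordinatewise complements $\Delta^{\geq q}(\cdot)$ of the $\Delta^{<q}(\cdot)$ generators; both land in the same bi-$\sigma$-algebra, so the arguments coincide.
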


\begin{proof}
  Suppose that $R\subseteq S\times S'$ is an external hit
  bisimulation, and let $(s,t)\in R$ such that
  $s\nsim_\sfs^\times t$. Without loss of generality,
  we can assume that there exists $\mu \in \Tfont{T}_a(s)$ such that
  for every $\mu'_n$ in an enumeration $\{ \mu'_m \}_{m\in \omega}$ of
  $\Tfont{T}'_a(t)$, there exist $q_n\in \Q$, ${\bowtie_n}\in
  \{ <,> \}$, and $(Q_n,Q'_n)$  $R$-closed measurable pairs
  such that $\mu(Q_n)\bowtie_n q_n \bowtie_n \mu'_n(Q'_n)$. Let
  $\Theta\defi \bigcap\{\Delta^{\bowtie_n q_n}(Q_n)\mid n\in
  \omega\}$ and $\Theta'\defi \bigcap\{\Delta^{\bowtie_n
    q_n}(Q'_n)\mid n\in \omega\}$. Then $(\Theta,\Theta')\in
  \Delta^\times(\Sigma^\times(R))$. By construction, $\mu \in
  \Theta$, and therefore $s\in
  \Tfont{T}_a^{-1}[H_\Theta]$. However,
  $t\notin\Tfont{T}'_a\!^{-1}[H_{\Theta'}]$ since for every $n\in
  \omega$, $q_n\bowtie_n \mu'_n(Q'_n)$. This contradicts the
  assumption that $R$ is a hit bisimulation.
\end{proof}

\begin{corollary}\label{cor:not-bisim-separation-NLMP}
  If $\lmp{S}$ and $\lmp{S'}$ are image-countable NLMPs and
  $s\nsim_\sfs^\times t$, then there exists a
  measurable $\sim_\sfs^\times$-closed pair $(C,C')$ such that $s\in C
  \iff t\notin C'$.
\end{corollary}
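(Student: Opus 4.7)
The plan is to exploit the fact that for image-countable NLMPs, Lemmas~\ref{lem:ext-state-implies-hit-NLMP} and \ref{lem:ext-impac-is-state-NLMP} together give ${\sim_\sfs^\times}={\sim_\h^\times}$, so in particular $R\defi {\sim_\sfs^\times}$ is simultaneously the maximal external state and the maximal external hit bisimulation (by Proposition~\ref{prop:union-ext-bisimulation}). From $s\nsim_\sfs^\times t$ I conclude that $R'\defi R\cup \{(s,t)\}$ is not a hit bisimulation, and I claim that the failure must localize at the new pair $(s,t)$. Indeed, since $R\sbq R'$, Lemma~\ref{lem:pair-equiv}(\ref{item:closed-pairs-antimonotonicity}) gives $\Sigma^\times(R')\sbq \Sigma^\times(R)$, and the construction of $\Delta^\times(\cdot)$ yields $\Delta^\times(\Sigma^\times(R'))\sbq \Delta^\times(\Sigma^\times(R))$; so the hit condition that $R$ already satisfied with respect to the larger family trivially persists for every $(u,v)\in R$ with respect to the smaller family.

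Hence there exist $a\in L$ and a pair $(\Theta,\Theta')\in \Delta^\times(\Sigma^\times(R'))$ such that, without loss of generality, $\Tfont{T}_a(s)\cap \Theta\neq \emptyset$ while $\Tfont{T}'_a(t)\cap \Theta'=\emptyset$. I then set
\[
  C\defi \Tfont{T}_a^{-1}[H_\Theta], \qquad
  C'\defi (\Tfont{T}'_a)^{-1}[H_{\Theta'}].
\]
Measurability of $C$ and $C'$ follows from $\Theta\in \Delta(\Sigma)$ and $\Theta'\in \Delta(\Sigma')$ (Remark~\ref{note:Delta-times-prop}) together with the measurability of the transition functions. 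To see that $(C,C')$ is an $R$-closed pair, observe that $(\Theta,\Theta')$ also lies in the larger family $\Delta^\times(\Sigma^\times(R))$; since $R$ is a hit bisimulation, for every $(u,v)\in R$ one has $\Tfont{T}_a(u)\cap \Theta\neq\emptyset \iff \Tfont{T}'_a(v)\cap \Theta'\neq\emptyset$, which is exactly the third reformulation of $R$-closedness from Lemma~\ref{lem:pair-equiv}(\ref{item:in-equiv}). By construction $s\in C$ and $t\notin C'$, so $s\in C\iff t\notin C'$ holds.

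The main conceptual hurdle is choosing the right bisimilarity to work with: a direct attempt using $\sim_\sfs^\times$ would force one to separate measures via Lemma~\ref{lem:ext-impac-is-state-NLMP}'s construction and only afterwards lift the separation to states. The hit-bisimulation viewpoint is tailor-made for this, because its very formulation encodes state-level separation via preimages of hit-sets; the only delicate book-keeping is the two-step antimonotonicity argument that lets one move $(\Theta,\Theta')$ between the families $\Delta^\times(\Sigma^\times(R))$ and $\Delta^\times(\Sigma^\times(R'))$ without losing the properties needed on either side.
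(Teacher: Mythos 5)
Your proof is correct, and it arrives at the same final object as the paper's proof: the pair $(C,C')=(\Tfont{T}_a^{-1}[H_\Theta],(\Tfont{T}'_a)^{-1}[H_{\Theta'}])$ for a suitable $(\Theta,\Theta')\in\Delta^\times(\Sigma^\times({\sim_\sfs^\times}))$, with closedness of $(C,C')$ deduced from the fact that ${\sim_\sfs^\times}$ is an external hit bisimulation (Proposition~\ref{prop:union-ext-bisimulation} plus Lemma~\ref{lem:ext-state-implies-hit-NLMP}). The difference lies in how $(\Theta,\Theta')$ is produced. The paper re-runs the explicit construction from the proof of Lemma~\ref{lem:ext-impac-is-state-NLMP}: it takes an unmatched measure $\mu\in\Tfont{T}_a(s)$, separates it from each member of a countable enumeration of $\Tfont{T}'_a(t)$ by a closed measurable pair and a rational threshold, and forms the countable intersections $\Theta=\bigcap_n\Delta^{\bowtie_n q_n}(Q_n)$ and $\Theta'=\bigcap_n\Delta^{\bowtie_n q_n}(Q'_n)$; image-countability enters exactly there. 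You instead use Lemma~\ref{lem:ext-impac-is-state-NLMP} only through its statement, to identify ${\sim_\sfs^\times}$ with the maximal external hit bisimulation, and then extract $(\Theta,\Theta')$ abstractly from the failure of the hit condition for $R\cup\{(s,t)\}$, localizing that failure at $(s,t)$ via the antimonotonicity of $\Sigma^\times$ (Lemma~\ref{lem:pair-equiv}) and the monotonicity of $\Delta^\times$, and then transferring the pair back into $\Delta^\times(\Sigma^\times(R))$ by the same monotonicity. Both of these transfer steps are sound. Your route is more modular --- it avoids revisiting the ``without loss of generality'' bookkeeping inside the Lemma's proof --- while the paper's is more concrete and exhibits the separating sets of measures explicitly; either is acceptable.
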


\begin{proof}
  We construct $\Theta,\Theta'$ as in the proof of
  Lemma~\ref{lem:ext-impac-is-state-NLMP} for $R={\sim_\sfs^\times}$,
  and we define $(C,C')\defi
  (\Tfont{T}_a^{-1}[H_\Theta],\Tfont{T}'_a\!^{-1}[H_{\Theta'}])$. It
  is clear that this pair is measurable. To verify that it is also
  $\sim_\sfs^\times$-closed, we will use the second equivalence in
  Lemma~\ref{lem:pair-equiv}(\ref{item:in-equiv}). Note that, by
  Proposition~\ref{prop:union-ext-bisimulation} and
  Lemma~\ref{lem:ext-state-implies-hit-NLMP}, $\sim_\sfs^\times$ is also
  a hit bisimulation. Therefore, for arbitrary $w$ and $z$, if
  $w\sim_\sfs^\times z$, then $w\in \Tfont{T}_a^{-1}[H_\Theta]\iff z\in
  \Tfont{T}'_a\!^{-1}[H_{\Theta'}]$.
\end{proof}

Finally, we present our version of external event bisimulations.

The motivation for the corresponding (internal) definition in LMP is to find the
smallest $\sigma$-algebra $\Lambda$ that preserves the LMP
structure. However, in a single NLMP, this property is lost because
the images of transition functions $\Tfont{T}_a(s)$ might still be
elements of $\Delta(\Sigma)$ and not $\Delta(\Lambda)$. Nevertheless, stability
still makes sense in the nondeterministic context.

\begin{definition}
  We say that $\mathcal{C} \subseteq \Sigma \times \Sigma'$ is
  \emph{$\times$-stable} if it satisfies
  \[\forall a \in L \; \forall (\Theta, \Theta') \in
    \Delta^\times(\mathcal{C}) \;
    (\Tfont{T}_a^{-1}[H_\Theta],\Tfont{T}_a'\!^{-1}[H_{\Theta'}]) \in \mathcal{C}.\]
\end{definition}
This definition extends its internal counterpart in the
following sense:
\begin{lemma}\label{lem:NLMP-stable-ext-properties}
  \begin{enumerate}
  \item $\mathcal{C} \subseteq \Sigma$ is stable if and only if
    $\tilde{\mathcal{C}}\defi\{(C,C)\mid C\in \mathcal{C}\}\subseteq
    \Sigma\times \Sigma$ is $\times$-stable.
  \item If $\mathcal{C} \subseteq \Sigma\times \Sigma'$ is a
    $\times$-stable bi-$\sigma$-algebra, then $\pi[\mathcal{C}]$ and
    $\pi'[\mathcal{C}]$ are stable $\sigma$-algebras.
  \end{enumerate}
\end{lemma}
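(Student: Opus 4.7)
The plan is to treat both parts as essentially bookkeeping around Lemmas~\ref{lem:Delta-reflexive-pair} and~\ref{lem:projection-Delta-times}, which already encode the precise way $\Delta^{\times}$ interacts with diagonal families and with projections. Recall that, for a single NLMP, a family $\mathcal{C}\sbq \Sigma$ is \emph{stable} when $\Tfont{T}_a^{-1}[H_{\Theta}]\in \mathcal{C}$ for every $a\in L$ and every $\Theta\in\Delta(\mathcal{C})$; this is the condition one would isolate by unfolding the measurability requirement in Definition~\ref{def:nlmp-bisimulations}(\ref{def:nlmp_eb}).

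For part 1, I would apply Lemma~\ref{lem:Delta-reflexive-pair} with $\A\defi \mathcal{C}$ to obtain
\[
 \Delta^{\times}(\tilde{\mathcal{C}}) = \{(\Theta,\Theta)\mid \Theta\in \Delta(\mathcal{C})\}.
\]
From here the two directions are symmetric. For $(\Rightarrow)$, take $(\Theta,\Theta')\in \Delta^{\times}(\tilde{\mathcal{C}})$; the displayed equality forces $\Theta=\Theta'\in \Delta(\mathcal{C})$, so stability of $\mathcal{C}$ gives $\Tfont{T}_a^{-1}[H_{\Theta}]\in \mathcal{C}$, whence the pair $(\Tfont{T}_a^{-1}[H_{\Theta}],\Tfont{T}_a^{-1}[H_{\Theta'}])$ belongs to $\tilde{\mathcal{C}}$. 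For $(\Leftarrow)$, given $\Theta\in \Delta(\mathcal{C})$, the displayed equality produces $(\Theta,\Theta)\in \Delta^{\times}(\tilde{\mathcal{C}})$; applying $\times$-stability to this pair yields $\Tfont{T}_a^{-1}[H_{\Theta}]\in \mathcal{C}$.

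For part 2, I would first check that $\pi[\mathcal{C}]$ is a $\sigma$-algebra purely from the bi-$\sigma$-algebra axioms (non-emptiness combined with coordinatewise complementation and countable union provides $(S,S')\in \mathcal{C}$, hence $S\in\pi[\mathcal{C}]$, and the remaining closure properties are immediate by choosing a witness in the second coordinate for each set in $\pi[\mathcal{C}]$). Then, to check stability, take $\Theta\in \Delta(\pi[\mathcal{C}])$: by Lemma~\ref{lem:projection-Delta-times}, $\Theta=\pi(\Theta,\Theta')$ for some $(\Theta,\Theta')\in \Delta^{\times}(\mathcal{C})$. The $\times$-stability hypothesis then yields $(\Tfont{T}_a^{-1}[H_{\Theta}], \Tfont{T}_a'^{-1}[H_{\Theta'}])\in \mathcal{C}$, and projecting on the first coordinate gives $\Tfont{T}_a^{-1}[H_{\Theta}]\in \pi[\mathcal{C}]$, as required. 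The argument for $\pi'[\mathcal{C}]$ is symmetric.

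I do not anticipate a real obstacle: both statements are structural translations between the one-sided and two-sided settings, and the two preparatory lemmas do all the combinatorial work. The only subtlety worth flagging is the use of Lemma~\ref{lem:projection-Delta-times} to exhibit a second coordinate $\Theta'$ witnessing $\Theta\in \pi[\Delta^{\times}(\mathcal{C})]$; without that lemma, one could get stuck trying to lift $\Theta$ directly from $\Delta(\pi[\mathcal{C}])$ into $\Delta^{\times}(\mathcal{C})$.
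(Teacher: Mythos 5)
Your proposal is correct and follows essentially the same route as the paper: part 1 via Lemma~\ref{lem:Delta-reflexive-pair} applied to the diagonal family, and part 2 via Lemma~\ref{lem:projection-Delta-times} to produce the witnessing second coordinate $\Theta'$ before invoking $\times$-stability. The only difference is that you spell out both directions of the ``if and only if'' in part 1 (the paper's proof only writes the forward implication, the converse being symmetric) and you justify that $\pi[\mathcal{C}]$ is a $\sigma$-algebra, which the paper takes from an earlier remark; neither point changes the substance of the argument.
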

\begin{proof}
  \begin{enumerate}
  \item If $(\Theta,\Theta') \in \Delta^\times(\tilde{\mathcal{C}})$, by
    Lemma~\ref{lem:Delta-reflexive-pair}, $\Theta = \Theta' \in
    \Delta(\mathcal{C})$. Since $\mathcal{C}$ is stable, we have
    $\Tfont{T}_a^{-1}[H_\Theta] \in \mathcal{C}$. Consequently,
    $(\Tfont{T}_a^{-1}[H_\Theta],\Tfont{T}_a'\!^{-1}[H_{\Theta'}]) \in
    \tilde{\mathcal{C}}$.

  \item We know that $\pi[\mathcal{C}]$ and $\pi'[\mathcal{C}]$ are
    $\sigma$-algebras. Let us prove that $\pi[\mathcal{C}]$ is stable. Let
    $\Theta \in \Delta(\pi[\mathcal{C}])$. By
    Lemma~\ref{lem:projection-Delta-times}, $\Theta \in
    \pi[\Delta^\times(\mathcal{C})]$. Therefore, there exists $\Theta'
    \in \Sigma'$ such that $(\Theta,\Theta') \in
    \Delta^\times(\mathcal{C})$, and since $\mathcal{C}$ is
    $\times$-stable, we have
    $(\Tfont{T}_a^{-1}[H_\Theta],\Tfont{T}'_a\!^{-1}[H_{\Theta'}]) \in
    \mathcal{C}$. Hence, $\Tfont{T}_a^{-1}[H_\Theta] \in
    \pi[\mathcal{C}]$. \qedhere
  \end{enumerate}
\end{proof}

\begin{prop}\label{prop:ext-hit-iff-ext-stable}
  Let $R \subseteq S\times S'$. Then, $R$ is an
  external hit bisimulation if and only if $\Sigma^\times (R)$ is
  $\times$-stable.
\end{prop}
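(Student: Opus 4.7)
The plan is to verify that the two conditions differ only by a cosmetic reformulation, so the whole proof reduces to chasing the definitions through Lemma~\ref{lem:pair-equiv}(\ref{item:in-equiv}).

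First I would observe that the measurability part of the conclusion ``$(\Tfont{T}_a^{-1}[H_\Theta], \Tfont{T}'_a\!^{-1}[H_{\Theta'}]) \in \Sigma \times \Sigma'$'' is automatic: by Remark~\ref{note:Delta-times-prop} any pair $(\Theta,\Theta') \in \Delta^\times(\Sigma^\times(R))$ satisfies $\Theta \in \Delta(\Sigma)$ and $\Theta' \in \Delta(\Sigma')$, hence $H_\Theta \in H(\Delta(\Sigma))$ and $H_{\Theta'} \in H(\Delta(\Sigma'))$ by Definition~\ref{def:hit-algebra}, and then the measurability of $\Tfont{T}_a$ and $\Tfont{T}'_a$ (Definition~\ref{def:nlmp}) gives the claim. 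Thus ``$(\Tfont{T}_a^{-1}[H_\Theta], \Tfont{T}'_a\!^{-1}[H_{\Theta'}]) \in \Sigma^\times(R)$'' reduces to the assertion that this pair is $R$-closed.

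Next I would apply the third characterization of $R$-closed pairs from Lemma~\ref{lem:pair-equiv}(\ref{item:in-equiv}) to unpack $R$-closedness: the pair $(\Tfont{T}_a^{-1}[H_\Theta], \Tfont{T}'_a\!^{-1}[H_{\Theta'}])$ is $R$-closed exactly when
\[
  \forall (s,s') \in R : \; s \in \Tfont{T}_a^{-1}[H_\Theta] \Leftrightarrow s' \in \Tfont{T}'_a\!^{-1}[H_{\Theta'}],
\]
which, by the very definition of $H_\Theta$ and $H_{\Theta'}$, is
\[
  \forall (s,s') \in R : \; \Tfont{T}_a(s) \cap \Theta \neq \emptyset \Leftrightarrow \Tfont{T}'_a(s') \cap \Theta' \neq \emptyset.
\]
This is exactly the clause imposed by Definition~\ref{def:nlmp-ext-bisimulations}(\ref{def:nlmp-ext-hb}) on the pair $(\Theta,\Theta')$.

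Both directions then fall out. For $(\Leftarrow)$, assuming $\Sigma^\times(R)$ is $\times$-stable, fix $(s,s') \in R$, $a \in L$, and $(\Theta,\Theta') \in \Delta^\times(\Sigma^\times(R))$; by $\times$-stability the pair $(\Tfont{T}_a^{-1}[H_\Theta], \Tfont{T}'_a\!^{-1}[H_{\Theta'}])$ lies in $\Sigma^\times(R)$, so by the displayed reformulation the hit condition holds. For $(\Rightarrow)$, assuming $R$ is an external hit bisimulation, fix $(\Theta,\Theta') \in \Delta^\times(\Sigma^\times(R))$ and $a\in L$; the hit condition for every $(s,s') \in R$ is precisely the $R$-closedness of $(\Tfont{T}_a^{-1}[H_\Theta], \Tfont{T}'_a\!^{-1}[H_{\Theta'}])$ via Lemma~\ref{lem:pair-equiv}, and together with the measurability observation this places the pair in $\Sigma^\times(R)$.

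There is no real obstacle here: the definitions were set up precisely so that this correspondence is immediate. The only subtlety worth flagging explicitly is the measurability check, since one might worry whether $H_\Theta$ lives in the hit $\sigma$-algebra for arbitrary $\Theta \in \Delta(\Sigma)$; this is exactly what Definition~\ref{def:hit-algebra} guarantees, and it is the reason $\Delta^\times(\Sigma^\times(R))$ was defined to sit inside $\Delta(\Sigma)\times\Delta(\Sigma')$ in the first place.
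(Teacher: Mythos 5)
Your proposal is correct and follows essentially the same route as the paper's proof: both reduce the statement to the single observation that, for $(\Theta,\Theta')\in\Delta^\times(\Sigma^\times(R))$, the hit-bisimulation clause for all $(s,s')\in R$ is precisely the assertion that $(\Tfont{T}_a^{-1}[H_\Theta],\Tfont{T}'_a\!^{-1}[H_{\Theta'}])$ is an $R$-closed pair, via Lemma~\ref{lem:pair-equiv}(\ref{item:in-equiv}). Your explicit check that measurability of this pair is automatic (from Remark~\ref{note:Delta-times-prop}, Definition~\ref{def:hit-algebra}, and the measurability of the $\Tfont{T}_a$) is a detail the paper leaves implicit, and it is a welcome addition.
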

\begin{proof}
  We have the equivalence
  \begin{multline*}
    \forall (s,s')\in R \; \bigl(\Tfont{T}_a(s)\cap\Theta \neq \emptyset
    \Leftrightarrow \Tfont{T}'_a(s')\cap\Theta' \neq \emptyset\bigr) \iff
    \\
    \iff (\Tfont{T}_a^{-1}[H_\Theta],(\Tfont{T}_a')^{-1}[H_{\Theta'}])\in \Sigma^\times(R).
  \end{multline*}
  Therefore, if $R$ is a hit bisimulation and $(\Theta,\Theta')\in
  \Delta^\times(\Sigma^\times(R))$, the forward direction shows that $\Sigma^\times(R)$ is stable. Conversely, if
  $\Sigma^\times(R)$ is stable, the backward direction proves that $R$
  is a hit bisimulation.
\end{proof}

\begin{definition}
  If $\D \subseteq \Power(S)\times\Power(S')$, we define a relation
  $\Rel^\times(\D) \subseteq S\times S'$ as
  \[ s\mathrel{\Rel^\times(\D)} s' \iff \forall (Q,Q') \in \D \; (s\in Q \Leftrightarrow s'\in Q').\]
\end{definition}

\begin{definition}\label{def:nlmp-ext-eb}
  A relation $R \subseteq S\times S'$ is an \emph{external event
  bisimulation} if there exists a  $\times$-stable bi-$\sigma$-algebra $\D
  \subseteq \Sigma\times \Sigma'$ such that $R = \Rel^\times(\D)$.
\end{definition}

We finish this section by relating external hit and event
bisimulations in analogous fashion to LMP \cite[Prop.~4.12]{coco} and
the internal NLMP notions
\cite[Lem.~4.2]{D'Argenio:2012:BNL:2139682.2139685}.

\begin{lemma}\label{lem:Rel-times-prop}
  \begin{enumerate}
  \item $R \subseteq \Rel^\times(\Sigma^\times(R))$.
  \item \label{item:Sigma-times-fixpoint} $\Sigma^\times(R) =
    \Sigma^\times(\Rel^\times(\Sigma^\times(R)))$.
  \end{enumerate}
\end{lemma}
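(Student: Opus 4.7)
Both items follow quickly by unwinding the definitions and invoking the basic properties of $R$-closed pairs established in Lemma~\ref{lem:pair-equiv}.

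For the first item, the plan is to take an arbitrary $(s,s') \in R$ and check directly that $s\mathrel{\Rel^\times(\Sigma^\times(R))} s'$. Fix any $(Q,Q') \in \Sigma^\times(R)$; then $(Q,Q')$ is an $R$-closed measurable pair, so by Lemma~\ref{lem:pair-equiv}(\ref{item:in-equiv}) (the third characterization of closed pairs) we have $s \in Q \iff s' \in Q'$. As $(Q,Q')$ was arbitrary, this matches the defining condition of $\Rel^\times(\Sigma^\times(R))$.

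For the second item, let $R'\defi \Rel^\times(\Sigma^\times(R))$ and prove the two inclusions separately. For $(\subseteq)$, take $(Q,Q') \in \Sigma^\times(R)$. The definition of $\Rel^\times$ immediately forces $s\in Q \iff s'\in Q'$ for every $(s,s') \in R'$, which by Lemma~\ref{lem:pair-equiv}(\ref{item:in-equiv}) says exactly that $(Q,Q')$ is $R'$-closed; together with measurability this gives $(Q,Q') \in \Sigma^\times(R')$. For $(\supseteq)$, apply the first item to obtain $R \sbq R'$, and then invoke Lemma~\ref{lem:pair-equiv}(\ref{item:closed-pairs-antimonotonicity}) to deduce that every $R'$-closed pair is $R$-closed, so $\Sigma^\times(R') \sbq \Sigma^\times(R)$.

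Both items are essentially a bookkeeping exercise with the established closure-properties machinery, so I do not anticipate any real obstacle; the only thing to be careful about is keeping the direction of antimonotonicity in Lemma~\ref{lem:pair-equiv}(\ref{item:closed-pairs-antimonotonicity}) straight, since it is what underlies the $(\supseteq)$ half of item~\ref{item:Sigma-times-fixpoint} and is the one place one could easily be tempted to reverse the inclusion.
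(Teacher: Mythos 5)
Your proposal is correct and follows essentially the same route as the paper: item~1 by unwinding the definitions via the third characterization in Lemma~\ref{lem:pair-equiv}(\ref{item:in-equiv}), and item~2 by proving $(\subseteq)$ directly and obtaining $(\supseteq)$ from item~1 together with the antimonotonicity of $\Sigma^\times$ from Lemma~\ref{lem:pair-equiv}(\ref{item:closed-pairs-antimonotonicity}). No gaps.
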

\begin{proof}
  The first item follows directly from the definitions. For the second
  item, let us assume that $(Q,Q')\in \Sigma^\times(R)$ and
  $z\mathrel{\Rel^\times(\Sigma^\times(R))} w$. By definition of
  $\Rel^\times$, we have $z\in Q \iff w\in Q'$. Therefore, $(Q,Q')$ is an
  $\Rel^\times(\Sigma^\times(R))$-closed pair, and we have the
  inclusion $\Sigma^\times(R)\subseteq
  \Sigma^\times(\Rel^\times(\Sigma^\times(R)))$. For the reverse
  inclusion, we use the first item and the antimonotonicity of
  $\Sigma^\times$ given by
  Lemma~\ref{lem:pair-equiv}(\ref{item:closed-pairs-antimonotonicity}).
\end{proof}

\begin{corollary}\label{cor:nlmp-ext-imp-Rel-Sigma-also}
  If $R \subseteq S\times S'$ is an external hit bisimulation, then
  $\Rel^\times(\Sigma^\times(R))$ is a hit bisimulation and an
  external event bisimulation.
\end{corollary}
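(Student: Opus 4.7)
The proof is essentially a bookkeeping exercise that chains together Proposition~\ref{prop:ext-hit-iff-ext-stable} and Lemma~\ref{lem:Rel-times-prop}, so I expect no serious obstacle; rather, the plan is to make sure the definitions line up.

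For the first conclusion, I will appeal to the characterization of hit bisimulations via $\times$-stability: by Proposition~\ref{prop:ext-hit-iff-ext-stable}, the relation $\Rel^\times(\Sigma^\times(R))$ is a hit bisimulation precisely when $\Sigma^\times(\Rel^\times(\Sigma^\times(R)))$ is $\times$-stable. Now Lemma~\ref{lem:Rel-times-prop}(\ref{item:Sigma-times-fixpoint}) gives the fixpoint identity
\[
  \Sigma^\times(\Rel^\times(\Sigma^\times(R))) \;=\; \Sigma^\times(R),
\]
and the hypothesis that $R$ is an external hit bisimulation together with Proposition~\ref{prop:ext-hit-iff-ext-stable} (applied to $R$ itself) says that $\Sigma^\times(R)$ is already $\times$-stable. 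The two observations combine to give the $\times$-stability we need.

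For the second conclusion, I would take the witness bi-$\sigma$-algebra to be $\D \defi \Sigma^\times(R)$ itself. Three checks are required. First, that $\D \sbq \Sigma\times\Sigma'$: this is immediate from the definition of $\Sigma^\times(R)$, which only contains measurable pairs. Second, that $\D$ is a bi-$\sigma$-algebra in the sense of Definition~\ref{def:ext-cl-operator}: Lemma~\ref{lem:pair-equiv}(2) yields closure under coordinatewise complementation and arbitrary (in particular countable) unions of $R$-closed pairs, and since each coordinate $\sigma$-algebra is itself closed under these operations, measurability is preserved; non-emptiness is witnessed by $(\emptyset,\emptyset)$. Third, that $\D$ is $\times$-stable: this is again the hypothesis via Proposition~\ref{prop:ext-hit-iff-ext-stable}. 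Finally, since $\Rel^\times(\D) = \Rel^\times(\Sigma^\times(R))$ by construction, Definition~\ref{def:nlmp-ext-eb} is satisfied and $\Rel^\times(\Sigma^\times(R))$ is an external event bisimulation.

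If anything requires attention, it is the bi-$\sigma$-algebra check for $\Sigma^\times(R)$, since Lemma~\ref{lem:pair-equiv}(2) is stated for $R$-closed pairs without a measurability hypothesis; but intersecting with $\Sigma\times\Sigma'$ preserves both closure properties, so the argument goes through without difficulty.
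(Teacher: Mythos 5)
Your proof is correct and uses the same two ingredients as the paper's: the fixpoint identity $\Sigma^\times(\Rel^\times(\Sigma^\times(R)))=\Sigma^\times(R)$ from Lemma~\ref{lem:Rel-times-prop}(\ref{item:Sigma-times-fixpoint}) and the equivalence of external hit bisimulation with $\times$-stability from Proposition~\ref{prop:ext-hit-iff-ext-stable}. The only (harmless) difference is that you verify the hit-bisimulation half entirely through the stability characterization rather than checking the defining intersection condition pointwise as the paper does, and you spell out the bi-$\sigma$-algebra check for $\Sigma^\times(R)$ that the paper leaves implicit.
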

\begin{proof}
  To verify that it is a hit bisimulation, we take $(s,s')\in
  \Rel^\times(\Sigma^\times(R))$ and a pair
  $(\Theta,\Theta')\in
  \Delta(\Sigma^\times(\Rel^\times(\Sigma^\times(R))))$. By
  Lemma~\ref{lem:Rel-times-prop}(\ref{item:Sigma-times-fixpoint}), we
  have $(\Theta,\Theta')\in \Delta(\Sigma^\times(R))$ and, since $R$ is
  a hit bisimulation, $s\in \Tfont{T}_a^{-1}[H_\Theta]\iff s'\in
  \Tfont{T}'_a\!^{-1}[H_{\Theta'}]$. Additionally, by
  Proposition~\ref{prop:ext-hit-iff-ext-stable},
  $\Rel^\times(\Sigma^\times(R))$ is also an external event
  bisimulation.
\end{proof}

\subsection{Game characterization}
\label{sec:NLMP-games}

A typical problem when working with bisimulations between processes is
the question of characterizing bisimilarity in terms of games. For
basic concepts of the type of games we are interested in, we refer
to \cite[Sect~20.A]{Kechris}.
In \cite[p.23]{ic19}, a game is presented to characterize state
bisimilarity between two LMPs. Here, we
adapt such a game to characterize external state bisimilarity between
two %
NLMPs.

Fix two such processes
$\lmp{S}^0=(S^0,\Sigma^0,\{\Tfont{T}_a^0\mid a\in L \})$
and
$\lmp{S}^1=(S^1,\Sigma^1,\{\Tfont{T}_a^1\mid a\in L \})$.
The game begins with a pair of states $x_0^0\in S^0$,
$x_0^1\in S^1$, and unfolds according to the following rules:

\begin{enumerate}
\item
  Spoiler plays a label $a\in L$, an index $i\in \{ 0,1 \}$, a measure
  $\mu\in \Tfont{T}_a^i(x_j^i)$, fixes an enumeration
  $\Tfont{T}_a^{1-i}(x_j^{1-i})=\{ \mu'_k \mid k\in D \}$, and for
  each $k\in D$ a measurable pair
  $(C_k,C'_k)\in\Sigma^i\times\Sigma^{1-i}$ such that $\forall k\in D, \;
  \mu(C_k)\neq \mu'_k(C'_k)$. 
\item
  Duplicator responds with states $x_{j+1}^0\in S^0$
  and $x_{j+1}^1\in S^1$ and an $k\in D$
  $x_{j+1}^i\in C_k \iff x_{j+1}^{1-i}\notin C'_k$. The game continues
  from $(x_{j+1}^0,x_{j+1}^1)$.
\end{enumerate}

Duplicator wins if the game is infinite. If a player cannot respond to
a move, they lose the game; hence if for some $j$ one and only one of
the sets $\Tfont{T}_a^0(x_j^0)$ and $\Tfont{T}_a^1(x_j^1)$ is empty,
Spoiler vacuously wins.  On the other hand, Duplicator can win either
by choosing states for which both transition sets are
empty or satisfy $\mu(C)=\mu'(C')$ for all measures $\mu\in
\Tfont{T}_a^0(x^0_j)$, $\mu'\in \Tfont{T}_a^1(x^1_j)$ and measurable
pairs $(C,C')\in\Sigma^0\times\Sigma^1$.

Duplicator's moves correspond to pairs of states $(x_j^0,x_j^1)$
claimed to be ext-state bisimilar, while Spoiler attempts to
show otherwise. Spoiler chooses a label $a$ and a possible behavior
(i.e., a measure) such that no behavior of the other state associated
with the same label can imitate it. The pairs of measurable sets
$(C_k,C'_k)$, supposedly closed under bisimilarity, would testify to
this difference in behavior. However, the reason for this discrepancy
is not disputed
by Duplicator, who claims that some pair $(C_k,C'_k)$ is not closed
under bisimilarity. For this purpose, Duplicator chooses two states
$x_{j+1}^0\in S^0$ and $x_{j+1}^1\in S^1$ such that $x_{j+1}^0\in C_k
\iff x_{j+1}^1\notin C'_k$ and claims that they are bisimilar.

The following lemma is an instance of the Gale-Stewart Theorem
\cite[Thm.~20.1]{Kechris} on infinite games with open or closed payoff sets.
\begin{lemma}
  The game is determined for any initial position $(x_0^0,x_0^1)$.
\end{lemma}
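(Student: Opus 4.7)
The plan is to recognise the game as a Gale--Stewart--style game whose payoff set is open for Spoiler (equivalently, closed for Duplicator), and then quote \cite[Thm.~20.1]{Kechris}. No measure-theoretic or NLMP-specific content needs to enter the proof: it reduces to a topological observation about when a Spoiler victory can be ``observed'' at a finite stage.

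First I will cast the game in standard Gale--Stewart form. Fix an abstract set $X$ large enough to encode every conceivable single move of either player---for Spoiler, a tuple consisting of a label $a$, an index $i$, a measure $\mu$, an enumeration of $\Tfont{T}_a^{1-i}(x_j^{1-i})$, and a family of measurable pairs $(C_k,C_k')$; for Duplicator, a pair of states together with an index---and view a play as an element of $X^\omega$, with Spoiler controlling the even-indexed coordinates and Duplicator the odd-indexed ones. I then extend the payoff from legal plays to all of $X^\omega$ by the convention that the first player to write down an illegal move (which includes the degenerate case of having no legal move available) loses, while infinite legal plays are Duplicator wins. This faithfully encodes the game described above from $(x_0^0,x_0^1)$.

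Next I will verify that the Spoiler-winning set $A_{\mathrm{Sp}}\subseteq X^\omega$ is open in the product topology. By construction, a play belongs to $A_{\mathrm{Sp}}$ iff at some finite stage $n$ the prefix so far represents a legal play ending at Duplicator's turn, and Duplicator's next coordinate fails to be a legal response (which subsumes the case that no legal response exists). Each such event is witnessed by a prefix of bounded length, so $A_{\mathrm{Sp}}$ is a union of basic open cylinders and hence open; equivalently, the Duplicator-winning complement is closed. Applying the Gale--Stewart theorem \cite[Thm.~20.1]{Kechris} to $A_{\mathrm{Sp}}$ then yields a winning strategy for one of the two players from the initial position, which is precisely determinacy.

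The only mild obstacle is the bookkeeping in the encoding: I must check that the ``first illegal move loses'' convention genuinely keeps $A_{\mathrm{Sp}}$ open, which is the case because every violation (or absence) of a legal Duplicator move is observable from a finite prefix of the play, and Spoiler illegalities only move plays \emph{out} of $A_{\mathrm{Sp}}$. Beyond this routine verification, determinacy follows immediately from a classical theorem and no delicate analysis is required.
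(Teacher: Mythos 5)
Your proof is correct, and it rests on exactly the same observation that drives the paper's argument: Duplicator wins every infinite play, so a Spoiler victory must be witnessed by a finite prefix, i.e.\ the Spoiler payoff set is open. The difference is in execution. The paper does not actually carry out the reduction to Gale--Stewart form; its proof is the standard direct argument for closed determinacy run \emph{in situ}: assuming Spoiler has no winning strategy, Duplicator can always move to a position from which Spoiler still has no winning strategy, the resulting play is infinite, and hence Duplicator wins. You instead make the framing sentence preceding the lemma (``an instance of the Gale--Stewart Theorem'') literal: you encode all moves into a single (large, discrete) set $X$, adopt the ``first illegal move loses'' convention, check that the Spoiler payoff is a union of cylinders determined by finite legal prefixes ending in an illegal Duplicator response, and quote \cite[Thm.~20.1]{Kechris}. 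Your bookkeeping is sound --- legality of a move (membership $\mu\in\Tfont{T}_a^i(x_j^i)$, the inequalities $\mu(C_k)\neq\mu'_k(C'_k)$, the condition on Duplicator's chosen $k$) depends only on the prefix played so far, and the Gale--Stewart theorem in Kechris holds for games on arbitrary sets, so the size of $X$ is harmless. What the paper's route buys is a self-contained two-sentence argument with no encoding; what yours buys is an explicit identification of the topological content and a cleaner handling of the ``cannot move'' cases via the illegal-move convention. Either proof is acceptable.
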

\begin{proof}
  The argument depends solely on the fact that Duplicator wins all
  infinite plays. Suppose Spoiler does not have a winning
  strategy. Then, for each of Spoiler's possible first moves, there must
  exist a move $(x_1^0,x_1^1)$ by Duplicator such that Spoiler does not
  have a winning strategy in the game starting from $(x_1^0,x_1^1)$. We
  repeat this argument to select a second move by Duplicator
  $(x_2^0,x_2^1)$. Continuing in this way, we obtain a winning strategy
  for Duplicator since the play becomes infinite, and thus Duplicator
  wins the game.
\end{proof}

\begin{prop}
  Two states $x_0^0\in S^0$ and $x_0^1\in S^1$ are ext-state bisimilar
  if and only if Duplicator has a winning strategy in the game
  that starts from $(x_0^0,x_0^1)$.
\end{prop}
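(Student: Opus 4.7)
The plan is to prove both implications by connecting Duplicator's winning strategies with external state bisimulations. For the $(\Leftarrow)$ direction I would define
\[
  R \defi \{(x,y)\in S^0\times S^1 \mid \text{Duplicator has a winning strategy from }(x,y)\}
\]
and show that $R$ itself is an external state bisimulation; it contains $(x_0^0,x_0^1)$ by hypothesis. For the $(\Rightarrow)$ direction I would use the given bisimulation to guide Duplicator so that consecutive position pairs stay inside $R$.

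For $(\Leftarrow)$, fix $(x,y)\in R$ and a label $a$, and verify zig (zag is symmetric via $i=1$). Take $\mu\in\Tfont{T}_a^0(x)$ and suppose, for contradiction, that no $\mu'\in\Tfont{T}_a^1(y)$ satisfies $\mu\mathrel{\bar{R}}\mu'$. Then each $\mu'$ comes with an $R$-closed measurable pair $(Q_{\mu'},Q'_{\mu'})\in\Sigma^0\times\Sigma^1$ with $\mu(Q_{\mu'})\neq\mu'(Q'_{\mu'})$. Spoiler plays $(a,0,\mu)$, enumerates $\Tfont{T}_a^1(y)=\{\mu'_k\}_{k\in D}$, and supplies the witnesses $(Q_{\mu'_k},Q'_{\mu'_k})$; by hypothesis Duplicator responds with $(x_1^0,x_1^1)\in R$ together with some $k^*$ such that $x_1^0\in Q_{\mu'_{k^*}}\iff x_1^1\notin Q'_{\mu'_{k^*}}$. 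But $(x_1^0,x_1^1)\in R$ together with the $R$-closedness of $(Q_{\mu'_{k^*}},Q'_{\mu'_{k^*}})$ force the \emph{opposite} biconditional, a contradiction. The edge case $\Tfont{T}_a^0(x)\neq\emptyset$ with $\Tfont{T}_a^1(y)=\emptyset$ is also ruled out: Spoiler would play the empty enumeration and Duplicator would be stuck, contradicting $(x,y)\in R$.

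For $(\Rightarrow)$, start from an external state bisimulation $R$ with $(x_0^0,x_0^1)\in R$. Duplicator's strategy preserves the invariant $(x_j^0,x_j^1)\in R$. Given a Spoiler move $(a,i,\mu,\{\mu'_k\}_{k\in D},\{(C_k,C'_k)\}_{k\in D})$, the zig condition (if $i=0$) or zag condition (if $i=1$) yields a $\mu'_{k^*}$ matched to $\mu$ under $\bar{R}$. The inequality $\mu(C_{k^*})\neq\mu'_{k^*}(C'_{k^*})$ then shows, after reorienting the pair from $\Sigma^i\times\Sigma^{1-i}$ to $\Sigma^0\times\Sigma^1$ when $i=1$, that $(C_{k^*},C'_{k^*})$ fails to be $R$-closed. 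Picking $(s,s')\in R$ witnessing that failure, Duplicator plays $(x_{j+1}^0,x_{j+1}^1)\defi(s,s')$ together with $k^*$; the game condition $x_{j+1}^i\in C_{k^*}\iff x_{j+1}^{1-i}\notin C'_{k^*}$ is precisely the translated failure of closure. The invariant is maintained at each round, Duplicator never gets stuck, and every resulting play is infinite and thus a Duplicator win.

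The main obstacle I anticipate is the bookkeeping induced by Spoiler's choice of $i\in\{0,1\}$: the pairs $(C_k,C'_k)$ range over $\Sigma^i\times\Sigma^{1-i}$, whereas $\bar{R}$ is defined via pairs in $\Sigma^0\times\Sigma^1$, so one must swap coordinates and re-derive the biconditional carefully in the case $i=1$. A secondary subtlety is that the enumeration index set $D$ must be allowed to be uncountable in the $(\Leftarrow)$ argument, since $\Tfont{T}_a^1(y)$ may be uncountable in a general NLMP; I rely on the stated game rules permitting arbitrary enumerations.
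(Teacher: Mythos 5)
Your proof is correct, and your $(\Rightarrow)$ direction is essentially the paper's: maintain the invariant that successive positions lie in the given bisimulation, and use the failure of $R$-closedness of some pair $(C_{k^*},C'_{k^*})$ --- witnessed by $\mu\mathrel{\bar{R}}\mu'_{k^*}$ from the zig/zag condition together with the inequality Spoiler supplied --- to produce Duplicator's next move. The $(\Leftarrow)$ direction also starts the same way (take $R$ to be the set of positions from which Duplicator has a winning strategy and show it is an external state bisimulation by contradiction via Spoiler's first move), but you close the contradiction differently. The paper observes that every Duplicator response $(s',t')$ to that move falls outside $R$, then invokes the determinacy lemma to obtain a Spoiler winning strategy from $(s',t')$ and hence from $(s,t)$, contradicting $(s,t)\in R$. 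You instead note that the position produced by Duplicator's winning strategy must itself lie in $R$, which directly contradicts the $R$-closedness of the pair selected by the game rule; this makes the Proposition independent of the Gale--Stewart determinacy lemma (which remains of separate interest in the paper). The one step you leave implicit is exactly the fact that the residual of a winning strategy after one round is again winning from the new position --- immediate here because the payoff condition is simply that the play be infinite, but worth stating. Your bookkeeping remarks about reorienting the pairs when $i=1$ and about possibly uncountable index sets $D$ are both consistent with the game as defined and do not create gaps.
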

\begin{proof}
  $(\Rightarrow)$ If Spoiler can carry out the first move, among their
  choice of measurable pairs $(C_k,C'_k)$ there must be one that is
  not closed under external bisimilarity. Duplicator can then choose a
  pair of bisimilar states such that $x_1^0\in C_k \iff x_1^1\notin
  C'_k$ and repeat this strategy in each move.

  $(\Leftarrow)$ We show that $x_0^0\sim^\times_\s x_0^1$. Let $R$ be
  the set of pairs $(x,y)\in S^0\times S^1$ for which Duplicator has a
  winning strategy in the game starting from $(x,y)$. We prove that
  $R$ is an external bisimulation. By contradiction, suppose it is
  not. Without loss of generality, we can assume that there exists
  $(s,t)\in R$ such that $\Tfont{T}_a^1(t) = \{ \mu'_k \mid k\in D \}$
  is nonempty, and there exists $\mu\in \Tfont{T}_a^0(s)$ such that
  for every $k \in D$ there exists an $R$-closed measurable pair
  $(C_k,C'_k)$ such that $\mu(C_k)\neq \mu'_k(C'_k)$.

  We define $i=0, \mu, \{ (C_k,C'_k)\mid k\in D \}$ as Spoiler's first
  move. Regardless of the pair $(s',t')$ played by Duplicator, it must
  be the case that $(s',t')\notin R$ since $(C_k,C'_k)$ is a closed
  pair for every $k$ and, by definition of Duplicator's moves, there
  exists $K\in D$ such that $s'\in C_K \iff t'\notin C'_K$. As
  the game is determined, Spoiler must have a winning strategy from
  $(s',t')$. This defines a winning strategy for Spoiler from $(s
  ,t)\in R$, which is absurd. Therefore, $R$ is an external
  bisimulation.
\end{proof}

For the case of internal bisimilarity, we can use the presented game
with the assumption $\lmp{S}^0=\lmp{S}^1$. In such a situation,
Spoiler is forced to choose coinciding pairs $C=C'$. Otherwise,
Duplicator can select a pair $(x,x)$ for some $x$ such that $x\in
C\iff x\notin C'$ and win the game. Therefore, if we work with a
single NLMP, we can simplify the game to one where Spoiler chooses
measurable sets $C_n$ (instead of pairs of measurable sets), and both
games are equivalent. If the processes do not include nondeterminism,
i.e., they can be thought of as LMP, the resulting game in this case
is identical to the game proposed in \cite{ic19} since Spoiler's move
reduces to a label $a\in L$ and a measurable set $C\in \Sigma$ such
that $\tau_a(x^0,C)\neq \tau_a(x^1,C)$.

\section{Conclusion}
\label{sec:conclusion}

We have seen once more that the intrusion of measure-theoretic
niceties complicate the existing panorama of behavioral
equivalences. Even some of the non-implications have been recent (and
belated) surprises for us: For an example, we had a wrong intuition
that being a (pre)image of a zigzag morphism was one of the strongest
possible connection between processes, but this is now easily refuted
by Example~\ref{exm:zigzag-not-state-domain}.

The role of the direct sum in the study of bisimulations stays a bit
unclear in the same setting, which is more evident by perusing our
Table~\ref{tab:bisim-comparison}. We think that the historically early
Definition~\ref{def:oplus-bisimulation} of $\oplus$-bisimulation is
ill-posed, in that it seems to allow much more room for irregularities
than the rest of the notions studied. It is all the more reasonable
that it has been practically abandoned. On the other hand, we can
state that the operation of “adding a summand” makes a process
smoother, ironing differences out and, in the extreme case, leaving
just the coarser event/cospan/logical equivalence.
Notice that if one is tempted to refine the $\oplus$-bisimulation definition by introducing measurability constraints similar to those in Proposition~\ref{prop:case-R-oplus-is-ext}, it would end up with external bisimulation. 
Also, Example~\ref{exm:oplus-notsbq-cospan} illustrates that in non-measurable scenarios, bisimilarity can become intricate.

We highlight that the most important question left open in this paper
is whether coalgebraic bisimilarity coincides with the one using
plain spans. More probability theory tools seem to be required to
solve this problem.

Concerning NLMP, many questions remain. We were not able to provide
game characterizations for none of the other bisimilarities (neither
for LMP nor NLMP). Also, there is no appropriate coalgebraic presentation
of NLMP. Their category was defined in \cite{2014arXiv1405.7141D}, but
the corresponding functor is not covariant: A new theory of
“contra-coalgebras” might be needed for nondeterministic processes.
We hope that some of these questions will be addressed in the future.

\paragraph*{Acknowledgments}
We warmly thank Giorgio Bacci for his recent comments on this manuscript and for
making us aware of \cite{gburek}. We are very grateful to both anonymous
referees for their detailed reading and many insightful comments. In particular, we are
indebted to Referee 1 for a much simplified Example~\ref{exm:restr-not-bisim},
and for making us realize that an argument involving Bernstein sets (related to
Corollary~\ref{cor:charact-V-final}) was flawed; also Referee 2's observations
helped us to streamline the discussion at the end of Section~\ref{sec:catco-bisimulations}.

\providecommand{\noopsort}[1]{}
\begin{small}\end{small}

\end{document}